\newcommand{\env}{\mathit{Env}}
\newif\ifdraft\drafttrue
\newcommand{\ntrans}[1]{\mathrel{{\trans{#1}}\makebox[0em][r]{$\not$\hspace{2ex}}}{\!}}
\newcommand{\ttrans}[1]{\stackrel{\, {#1} \,}{\Longrightarrow}}
\newcommand{\ttranst}[1]{\Longrightarrow\trans{#1}\Longrightarrow}
\newcommand{\on}{\mathsf{on}}
\newcommand{\off}{\mathsf{off}}
\newcommand{\bool}[1]{\llbracket #1 \rrbracket}
\newcommand{\fixn}[1]{\textbf{FIX}\footnote{#1}}
\renewcommand{\marginpar}[1]{\fixn{#1}}
\newcommand{\confCPS}[2]{#1 \, {\Join} \, #2}
\newcommand{\confCPSS}[3]{#1 \, {\Join_{#2}} \, #3}
\newcommand{\rsens}[2]{\mathsf{read}\, #2(#1)}
\newcommand{\wact}[2]{\mathsf{write}\, #2 \langle #1 \rangle}
\newcommand{\todob}[1]{{\color{blue} #1}}
\newcommand{\mm}[1]{\todob{#1}}
\newcommand{\statefun}{\xi_{\mathrm{x}}}
\newcommand{\actuatorfun}{\xi_{\mathrm{u}}}
\newcommand{\uncertaintyfun}{\xi_{\mathrm{w}}}
\newcommand{\evolmap}{\mathit{evol}}
\newcommand{\errorfun}{\xi_{\mathrm{e}}}
\newcommand{\measmap}{\mathit{meas}}
\newcommand{\invariantfun}{\mathit{inv}}
\newcommand{\safefun}{\mathit{safe}}
\newcommand{\CPS}{CPS}
\newtheorem{definition}{Definition}
\newtheorem{theorem}{Theorem}
\newtheorem{proposition}{Proposition}
\newtheorem{remark}{Remark}
\newtheorem{lemma}{Lemma}
\newtheorem{example}{Example}
\newtheorem{corollary}{Corollary}
\begin{document}





%


\title{A Formal Approach to Cyber-Physical Attacks} 

 \author{\IEEEauthorblockN{Ruggero Lanotte} 
\IEEEauthorblockA{Dipartimento di Scienza e Alta Tecnologia \\ Universit\`a dell'Insubria, Como, Italy}
\and
\IEEEauthorblockN{Massimo Merro and Riccardo Muradore}
\IEEEauthorblockA{Dipartimento di Informatica \\ Universit\`a degli Studi di Verona, Italy}
\and
\IEEEauthorblockN{Luca Vigan\`o}
\IEEEauthorblockA{
 Department of Informatics\\ King's College London, UK
 }%
}
        
\maketitle

\begin{abstract}

We apply formal methods to lay and streamline theoretical foundations to reason about Cyber-Physical Systems (CPSs) and cyber-physical attacks.
We focus on 
integrity and DoS attacks to sensors and actuators of CPSs, 
and on the timing aspects of these attacks. Our contributions are threefold: (1) we define a hybrid process calculus to model both CPSs and cyber-physical attacks; (2) 
we define a threat model of cyber-physical attacks and provide the means to assess attack tolerance/vulnerability with respect to a given attack; (3)~we formalise how to estimate the impact of a successful 
 attack on a CPS and investigate possible quantifications of the success chances of an attack. We illustrate definitions and results by means of a non-trivial engineering application. 
\end{abstract}


\section{Introduction}

\subsubsection*{Context and motivation}
\emph{Cyber-Physical Systems (CPSs)} are integrations of networking and
distributed computing systems with physical processes that monitor and
control entities in a physical environment, with feedback loops where
physical processes affect computations and vice versa. For example, in
real-time control systems, a hierarchy of \emph{sensors}, \emph{actuators}
and \emph{control processing components} are connected to control
stations. Different kinds of \CPS{s} include
supervisory control and data acquisition (SCADA), programmable logic
controllers (PLC) and distributed control systems.

In recent years there has been a dramatic increase in the number of
attacks to the security of cyber-physical and critical systems, e.g.,
manipulating sensor readings and, in general, influencing physical
processes to bring the system into a state desired by the attacker. Many
(in)famous examples have been so impressive to make the international
news, e.g.: the Stuxnet worm, which reprogrammed PLCs of nuclear
centrifuges in Iran~\cite{stuxnet} or  the attack on a sewage treatment
facility in Queensland, Australia, which manipulated the SCADA system to
release raw sewage into local rivers and parks~\cite{SlMi2007}.

As stated in~\cite{GGIKLW2015}, the concern for consequences at the
physical level puts \emph{\CPS{} security} apart from standard
\emph{information security}, and demands for \textit{ad hoc} solutions to
properly address such novel research challenges.
The works that have taken up these challenges range from proposals of
different notions of cyber-physical security and attacks (e.g.,
\cite{BuMaCh2012,GGIKLW2015,KrCa2013}, to name a few) to pioneering
extensions to \CPS{} security of standard formal approaches (e.g.,
\cite{BuMaCh2012,Cardenas2015,Vigo2015}).
 However, to the best of our knowledge,
a systematic formal approach to cyber-physical attacks is still to be
fully developed.

\subsubsection*{Background}
The dynamic behaviour of the \emph{physical plant} of a \CPS{} is often 
represented by means of a \emph{discrete-time state-space
model\/}
consisting of two equations of the form
\begin{displaymath}
\begin{array}{rcl}
x_{k+1} = Ax_{k} + Bu_{k} + w_{k} &  \mathrm{and} &
y_k = Cx_{k} + e_k\,,
\end{array}
\end{displaymath}
where
$x_k \in \mathbb{R}^n$ is the current \emph{(physical) state}, $u_k \in
\mathbb{R}^m$ is the \emph{input} (i.e., the control actions implemented
through actuators) and $y_k \in \mathbb{R}^p$ is the \emph{output} (i.e.,
the measurements from the sensors). 
The \emph{uncertainty} $w_k \in \mathbb{R}^n$ and the \emph{measurement error} $e_k \in \mathbb{R}^p$ represent perturbation and sensor noise, respectively, 
and $A$, $B$, and $C$ 
are matrices modelling the dynamics of the physical system. Here, the
\emph{next state} $x_{k+1}$ depends on the current state $x_k$ and the
corresponding control actions $u_k$, at the sampling instant $k \in
\mathbb{N}$. The state $x_k$ cannot be directly observed: only its
measurements $y_k$ can be observed.

The physical plant is supported by a communication network through which
the sensor measurements and actuator data are exchanged with 
controller(s) and supervisor(s) (e.g., IDSs), 
which are 
the \emph{cyber} components (also called \emph{logics}) of a CPS.

\subsubsection*{Contributions}
In this paper, we focus on a formal treatment of both \emph{integrity} and
\emph{Denial of Service (DoS)} attacks to \emph{physical devices} (sensors
and actuators) of \CPS{s}, paying particular attention to the \emph{timing
aspects} of these attacks. The overall goal of the paper is to apply
formal methodologies to lay \emph{theoretical foundations} to reason about
and statically detect attacks to physical devices of \CPS{s}.

Our contributions are threefold. The first contribution is the definition
of a \emph{hybrid process calculus}, called \cname{}, to formally specify
both \CPS{s} and cyber-physical attacks. In \cname{}, \CPS{s} have two
components: a \emph{physical component} denoting the \emph{physical plant}
(also called environment) of the system, and containing information on
state variables, actuators, sensors, evolution law, etc., and a
\emph{cyber component} that governs access to sensors and actuators,
channel-based communication with other cyber components. Thus, channels
are used for logical interactions between cyber components, whereas
sensors and actuators make possible the interaction between cyber and
physical components.

\cname{} adopts a \emph{discrete notion of time}~\cite{HR95} and it is
equipped with a \emph{labelled transition semantics (LTS)} that allows us
to observe both \emph{physical events} (system deadlock and violations of
safety conditions) and \emph{cyber events} (channel communications). Based
on our LTS, we define two trace-based system preorders: a \emph{trace
preorder}, $\sqsubseteq$, and a \emph{timed variant},
$\sqsubseteq_{m..n}$, for $m, n \in \mathbb{N}^{+}\cup \infty$, which
takes into account discrepancies of execution traces within the time
interval $m..n$.

As a second contribution, we formalise a \emph{threat model} that specifies  attacks that can manipulate sensor and/or actuator signals in order to drive a \CPS{} into an undesired state~\cite{TeShSaJo2015}. 
Cyber-physical attacks typically tamper with both the physical (sensors and actuators) and the cyber layer. In our threat model, communication cannot be manipulated by the attacker, who instead may compromise 
(unsecured) physical devices, which is our focus. 
As depicted in \autoref{fig:threat-model}, our attacks may affect directly 
the sensor measurements or the controller commands. 
\begin{itemize}[noitemsep]
\item \emph{Attacks on sensors} consist of reading and eventually 
replacing $y_k$ (the sensor measurements) with $y^a_k$. 
\item \emph{Attacks on actuators} consist of reading, eavesdropping and eventually replacing the controller commands $u_k$ with $u^a_k$, affecting directly the actions the actuators may execute.
\end{itemize}
We group attacks into classes. A class of attacks takes into account both the malicious activity  $\I$ on physical devices and the \emph{timing parameters} $m$ and $n$ of the attack: begin and end of the attack. We represent a class $C$  as a total function $C \in [\I \rightarrow {\cal P}(m..n)]$. Intuitively, for $\iota \in \I$, $C(\iota) \subseteq m..n$ denotes the set of time instants when an attack of class $C$ may tamper with the device $\iota$.
As observed in~\cite{KrCa2013}, timing is a critical issue in CPSs because
the physical state of a system changes continuously over time, and as the
system evolves in time, some states might be more vulnerable to attacks
than others. For example, an attack launched when the target state
variable reaches a local maximum (or minimum) may have a great impact on
the whole system behaviour~\cite{BestTime2014}. Furthermore, not only the
timing of the attack but also the \emph{duration of the attack} is an
important parameter to be taken into consideration in order to achieve a
successful attack. For example, it may take minutes for a chemical reactor
to rupture~\cite{chemical-reactor}, hours to heat a tank of water or burn
out a motor, and days to destroy centrifuges~\cite{stuxnet}.

\begin{figure}[t]
\centering
\includegraphics[width=7cm,keepaspectratio=true,angle=0]{./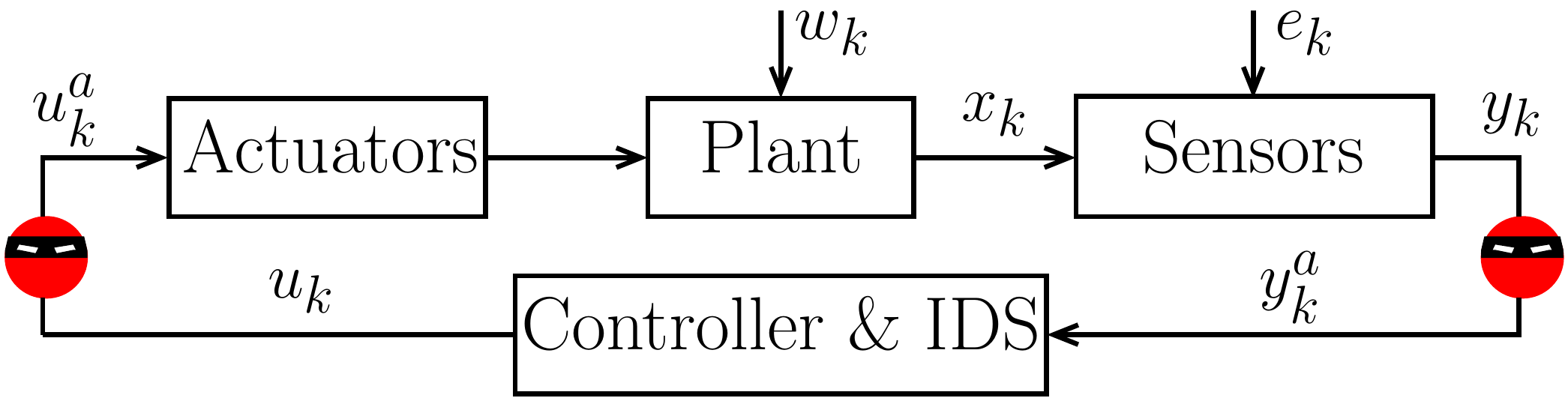}
\caption{Our threat model for CPSs}
\label{fig:threat-model}
\vspace*{-0.2cm}
\end{figure}
 
In order to make security assessments on our \CPS{s}, we adopt a
well-known approach called \emph{Generalized Non Deducibility on
Composition (GNDC)}~\cite{FM99}. 
Thus, in \cname{}, we say that a \CPS{} $\mathit{Sys}$
\emph{tolerates} a cyber-physical
attack $A$ if
\begin{displaymath}
\mathit{Sys} \: \parallel \: A \q \sqsubseteq \q \mathit{Sys} \enspace .
\end{displaymath}
In this case, the presence of the attack $A$, 
does not affect the whole (physical and logical) observable behaviour of the system $\mathit{Sys}$, and the attack can be considered harmless.

On the other hand, we say that a \CPS{} $\mathit{Sys}$
is \emph{vulnerable} to a cyber-physical attack $A$ of class 
$C \in [\I \rightarrow {\cal P}(m..n)]$ if there is a time interval $m'..n'$ in which the attack becomes observable (obviously, $m' \geq m$). Formally, we write: 
\begin{displaymath}
\mathit{Sys} \: \parallel \: A \q \sqsubseteq_{m'..n'} \q \mathit{Sys} \enspace .
\end{displaymath}

We provide sufficient criteria to prove attack tolerance/vulnerability to
attacks of an arbitrary class $C$. We define a notion of ``most powerful
attack'' of a given class $C$, $\mathit{Top}(C)$, and prove that if a
\CPS{} tolerates $\mathit{Top}(C)$ then it tolerates all attacks $A$ of
class $C$. Similarly, if a \CPS{} is vulnerable to $\mathit{Top}(C)$, in
the time interval $m'..n'$, then no attacks of class $C$ can affect the
system out of that time interval. This is very useful when checking for
attack tolerance/vulnerability with respect to all attacks of a given
class $C$.

As a third contribution, we formalise how to estimate the \emph{impact of
a successful attack on a \CPS{}} and investigate possible
\emph{quantifications of the chances} for an attack of being successful
when attacking a \CPS{}. 
This is important since, in industrial \CPS{s}, before taking any countermeasure against an attack, engineers typically first try to estimate the impact of the attack on the system functioning (e.g., performance and security) and weigh it against the cost of stopping the plant. If this cost is higher than the damage caused by the attack (as is sometimes the case),  then engineers might actually decide to let the system continue its activities even under attack.
We thus provide a \emph{metric} to estimate the
deviation of the system under attack with respect to expected behaviour,
according to its evolution law and the uncertainty of the model.
Then, we prove that the impact of the most powerful attack $\mathit{Top}(C)$ represents an upper bound for the impact of any attack $A$ of class $C$.


We introduce a non-trivial \emph{running example} taken from an
engineering application and use it to illustrate our definitions and cases
of \CPS{s} that tolerate certain attacks, and of \CPS{s} that suffer from
attacks that drag them towards undesired behaviours. 
We remark that while we have kept the example simple, it is actually far from trivial and designed to describe a wide number of attacks, as will become clear below.

All the results exhibited in the paper have been formally proved (due to lack of space, proofs are given in the appendix).
Moreover, the behaviour of our running example and of most of the cyber-physical attacks appearing in the paper have been simulated in MATLAB.

\subsubsection*{Organisation}
In \autoref{sec:calculus}, we give syntax and semantics of \cname{}. In \autoref{sec:cyber-physical-attackers}, we define cyber-physical attacks and provide sufficient criteria for attack tolerance/vulnerability. In \autoref{sec:impact}, we estimate the impact of  attacks on  \CPS{s}, and investigate possible quantifications of the success chances of an attack. In \autoref{sec:conclusions}, \nolinebreak we draw conclusions and \nolinebreak discuss \nolinebreak related
\nolinebreak and \nolinebreak future \nolinebreak work.


\section{The Calculus}
\label{sec:calculus}

In this section, we introduce our \emph{Calculus of Cyber-Physical Systems and Attacks}, \cname{}, which extends the \emph{Calculus of Cyber-Physical 
Systems} defined in~\cite{LaMe17} with specific features to formalise and study attacks to physical devices.




Let us start with some preliminary notations. 
We use 
$x, x_k \in \cal X$ for \emph{state variables},  
$c,d \in \cal C$ for \emph{communication channels}, 
$a, a_k \in \cal A$ for \emph{actuator devices}, 
 $s,s_k \in \cal S$ for \emph{sensors devices}, and 
$p,q$ for both sensors and actuators
(generically called \emph{physical devices}). 
\emph{Values}, ranged
over by $v,v' \in \cal V$, are built from basic values, such as
Booleans, integers and real numbers.
\emph{Actuator names} are metavariables for actuator devices like
$\mathit{valve}$, $\mathit{light}$, etc. Similarly, \emph{sensor names}
are metavariables for sensor devices, e.g., a sensor
$\mathit{thermometer}$. 

Given a 
 set of names $\cal N $, we write $\mathbb{R}^{\cal N} $ to denote the set of functions $[\cal N \rightarrow \mathbb{R} ]$ assigning a
real 
to each name in $\cal N$. For $\xi \in \mathbb{R} ^{\cal N}$,
$n \in \cal N$ and $v \in\mathbb{R} $, we write $\xi [n \mapsto v]$ for 
the function $\psi \in \mathbb{R} ^{\cal N}$ such that
$\psi(m)=\xi(m)$, for any $m \neq n$,  and  \ $\psi(n)=v$.

Finally, we distinguish between \emph{real intervals}, such as $(m, n]$, for 
$m \in \mathbb{R}$ and  $n \in \mathbb{R} \cup \infty$, and \emph{integer intervals}, written $m..n$, for $m \in \mathbb{N}$ and $n \in \mathbb{N} \cup \infty$. As we will adopt a discrete notion of time, we will use integer intervals to denote \emph{time intervals}.

\begin{definition}[Cyber-physical system]
In \cname{}, a \emph{cyber-physical system} consists of two components: a \emph{physical environment} $E$ that encloses all physical aspects of a system 
and a
\emph{cyber component\/} 
$P$ that interacts with sensors and actuators of the system, and can communicate, via channels, with other cyber components of the same  or of other \CPS{s}. 
 Given a set $\cal S $ of \emph{secured physical devices} of $E$, we write $\confCPSS E {\cal S} P$ to denote the resulting \CPS{}, and use 
$M$ and $N$ to range over \CPS{s}. We write $\confCPS E P$ when $\cal S = \emptyset$. 
\end{definition} 
In a \CPS{} $\confCPSS E {\cal S} P$, 
the ``secured'' devices in $\cal S$ are accessed in a protected way 
and hence they cannot be  attacked.\footnote{The presence  of battery-powered devices interconnected through wireless networks prevents the en-/decryption of all packets due to energy constraints.}

Let us now define physical environments $E$ and cyber components $P$ 
in order to formalise our proposal for modelling (and reasoning about)
\CPS{s} and cyber-physical attacks. 
\begin{definition}[Physical environment]
\label{def:physical-env}
Let $\hat{\mathcal{X}} \subseteq \mathcal{X}$ be a set of state variables,
$\hat{\mathcal{A}} \subseteq \mathcal{A}$ be a set of actuators, and
$\hat{\mathcal{S}} \subseteq \mathcal{S}$ be a set of sensors. A
\emph{physical environment} $E$ is an 8-tuple
$\envCPSS
{\statefun{}} 
{\actuatorfun{}} 
{\uncertaintyfun{}}  
{\evolmap{}}
{\errorfun{}}  
{\measmap{}}   
{\invariantfun{}}
{\safefun{}}
$,
where:
\begin{itemize}
\item $\statefun{} \in \mathbb{R}^{\hat{\cal X}} $ is the
\emph{state function},
\item $\actuatorfun{} \in \mathbb{R}^{\hat{\cal A}} $ is the
\emph{actuator function},
\item $\uncertaintyfun{} \in \mathbb{R}^{\hat{\cal X}} $ is the
\emph{uncertainty function},
\item $\evolmap{}: \mathbb{R}^{\hat{\cal X}} \times
\mathbb{R}^{\hat{\cal A}} \times \mathbb{R}^{\hat{\cal X}} \rightarrow
2^{\mathbb{R}^{\hat{\cal X}} }$ is the \emph{evolution map}, 
\item $\errorfun{} \in \mathbb{R}^{\hat{\cal S}}$ is the
\emph{sensor-error function},
\item $\measmap{}: \mathbb{R}^{\hat{\cal X}} \times
\mathbb{R}^{\hat{\cal S}} \rightarrow 2^{\mathbb{R}^{\hat{\cal S}} }$ is
the \emph{measurement map}, 
\item $\invariantfun{}: \mathbb{R} ^{\hat{\cal X}}  
\rightarrow \{\true, \false \}$ is the \emph{invariant function},
\item $\safefun{}: \mathbb{R} ^{\hat{\cal X}}  
\rightarrow \{\true, \false \}$ is the \emph{safety function}.
\end{itemize}
All the functions defining an environment are \emph{total functions}. 
\end{definition}

The \emph{state function} $\statefun{}$ returns the current value (in
$\mathbb{R}$) associated to each state variable of the system. The
\emph{actuator function} $\actuatorfun{}$ returns the current value
associated to each actuator. The \emph{uncertainty function}
$\uncertaintyfun{}$ returns the uncertainty associated to each state
variable. Thus, given a state variable $x \in \hat{\cal X}$,
$\uncertaintyfun{}(x)$ returns the maximum distance between the real value
of $x$ and its representation in the model. Later in the paper, we will be
interested in comparing the accuracy of two systems. Thus,  for
$\uncertaintyfun{}, \uncertaintyfun'{} \in \mathbb{R}^{\hat{\cal X}}$, we
will write $\uncertaintyfun{} \leq \uncertaintyfun'{}$ if
$\uncertaintyfun{}(x) \leq \uncertaintyfun'{}(x)$, for any $x \in
\hat{\cal X}$. Similarly, we write $\uncertaintyfun{} +
\uncertaintyfun'{}$ to denote the function $\uncertaintyfun''{} \in
\mathbb{R}^{\hat{\cal X}}$ such that $\uncertaintyfun''{}(x) =
\uncertaintyfun{}(x) + \uncertaintyfun'{}(x)$, for any $x \in \hat{\cal
X}$.

Given a state function, an actuator function, and an uncertainty function,
the \emph{evolution map} $\evolmap{}$ returns the set of next
\emph{admissible state functions}. It models the \emph{evolution law} of
the physical system, where changes made on actuators may reflect on state
variables. Since we assume an uncertainty in our models, $\evolmap{}$ does
not return a single state function but a set of possible state functions.
$\evolmap{}$ is obviously \emph{monotone} with respect to uncertainty: if
$\uncertaintyfun{} \leq \uncertaintyfun'{}$ then $\evolmap{}(\statefun{},
\actuatorfun{}, \uncertaintyfun{}) \subseteq \evolmap{}(\statefun{},
\actuatorfun{}, \uncertaintyfun'{})$. 

Both the state function and the actuator function are supposed to change during the evolution of the system, whereas the uncertainty function is 
constant.
Note that, although the
uncertainty function is constant, it can be used in the evolution map in
an arbitrary way (e.g., it could have a heavier weight when a state
variable reaches extreme values). Another possibility is 
to model the uncertainty function by means of a probability distribution.

The \emph{sensor-error function} $\errorfun{}$ returns the maximum error
associated to each sensor in $\hat{\cal S}$. Again due to the presence of
the sensor-error function, the \emph{measurement map} $\measmap{}$, given
the current state function, returns a set of admissible measurement
functions rather than a single one.

The \emph{invariant function} $\invariantfun{}$ represents the
conditions that the state variables must satisfy to allow for the
evolution of the system. A \CPS{} whose state variables don't satisfy the
invariant is in \emph{deadlock}.

The \emph{safety function} $\safefun{}$ represents the conditions that the state variables must satisfy to consider the \CPS{} in a safe state. Intuitively, if a \CPS{} gets in an unsafe state, then its functionality may get compromised.

In the following, we use a specific notation for the replacement of a single component of an environment with a new one of the same kind; for instance, for $E = \envCPSS
{\statefun{}} 
{\actuatorfun{}} 
{\uncertaintyfun{}}  
{\evolmap{}}
{\errorfun{}}  
{\measmap{}}   
{\invariantfun{}}
{\safefun{}}
$, we write $\replaceENV E {\uncertaintyfun{}}  {\uncertaintyfun'{}}$
to denote 
$\envCPSS
{\statefun{}} 
{\actuatorfun{}} 
{\uncertaintyfun'{}}  
{\evolmap{}}
{\errorfun{}}  
{\measmap{}}   
{\invariantfun{}}
{\safefun{}}
$.




Let us now introduce a \emph{running example\/} 
 to illustrate
our approach. We remark that while we
have kept the example simple, it is actually far from trivial and designed
to describe  a wide number of attacks.
 A more complex example (say, with $n$
sensors and $m$ actuators) wouldn't have been more instructive but just
made the paper more dense.
\begin{example}[Physical environment of the \CPS{} $\mathit{Sys}$]
\label{exa:sys-physical} 
Consider a \CPS{} $\mathit{Sys}$ in which the temperature of an engine is
maintained within a specific range by means of a cooling system. The
physical environment $\env$ 
 of $\mathit{Sys}$ is constituted by: (i) a state variable
$\mathit{temp}$ containing the current temperature of the engine, and 
a state variable
$\mathit{stress}$ keeping track of the level of stress of the mechanical parts of the engine due to high temperatures (exceeding $9.9$ degrees); this integer 
variable ranges from $0$, meaning no stress,  to $5$, for high stress;
(ii) an
actuator $\mathit{cool}$ to turn on/off the cooling system; (iii) a sensor
$s_{\mathrm{t}}$ (such as a thermometer or a thermocouple) measuring the temperature
of the engine; (iv) an uncertainty $\delta=0.4$ associated to the only
variable $\mathit{temp}$; (v) 
the evolution law for the 
two state variables: 
the variable $\mathit{temp}$ is increased 
(resp., is decreased)  of one degree per time
unit if the cooling system is inactive (resp., active), whereas
the variable  $\mathit{stress}$ contains an integer that is increased each time the current temperature is above 
$9.9$ degrees, and dropped to $0$ otherwise;
(vi) an error
$\epsilon =0.1$ associated to the only sensor $s_{\mathrm{t}}$; (vii) a measurement
map to get the values detected by sensor $s_{\mathrm{t}}$, up to its error
$\epsilon$; (viii) an invariant function saying that the system gets
faulty when the temperature of the engine gets out of the range $[0, 50]$; 
(ix) a safety function to say that the system moves to an 
unsafe state when the level of stress reaches the threshold $5$.


Formally,  $\env = \envCPS 
{\statefun{}} 
{\actuatorfun{}} 
{\uncertaintyfun{}}  
{\evolmap{}}
{\errorfun{}}  
{\measmap{}}   
{\invariantfun{}}$ with:
\begin{itemize}
\item $\statefun{} \in \mathbb{R} ^{\{\mathit{temp},\mathit{stress}\} }$ and 
$\statefun{}(\mathit{temp})=0$ and $\statefun{}(\mathit{stress}){=}0$;
\item $\actuatorfun{} \in \mathbb{R} ^{\{\mathit{cool}\} }$ and
$\actuatorfun{}(\mathit{cool})=\off$; for the sake of simplicity, we can
assume $\actuatorfun{}$ to be a mapping $\{ \mathit{cool} \} \rightarrow
\{ \on , \off\}$ such that $\actuatorfun{}(\mathit{cool})= \off$ if
$\actuatorfun{}(\mathit{cool}) \geq 0$, and $\actuatorfun{}(\mathit{cool})= \on$ if
$\actuatorfun{}(\mathit{cool}) < 0$;

\item $\uncertaintyfun{} {\in} \mathbb{R}^{\{\mathit{temp},\mathit{stress}\} }$, 
$\uncertaintyfun{}(\mathit{temp}){=}0.4{=}\delta$ and $\uncertaintyfun{}(\mathit{stress}){=}0$;

\item $\evolmap{}( \statefun^i{}, \actuatorfun^i{}, \uncertaintyfun{}) $
is the set of  $\xi \in \mathbb{R} ^{\{\mathit{temp},\mathit{stress}\} }$
such that:
\begin{itemize}
\item
$\xi(\mathit{temp}) = \statefun^i{}(\mathit{temp}) + 
\mathit{\mathit{heat}}(\actuatorfun^i{},\allowbreak \mathit{cool}) + \gamma $, 
with $ \gamma \in [- \delta, + \delta] $ and $\mathit{heat}(\actuatorfun^i{},\mathit{cool})=-1$ if
$\actuatorfun^i{}(\mathit{cool}) = \on$ (active cooling), and
$\mathit{heat}(\actuatorfun^i{},\mathit{cool})=+1$ if
$\actuatorfun^i{}(\mathit{cool}) = \off$ (inactive cooling);

\item $\xi(\mathit{stress}) = \min (5 \, , \,  \statefun^i{}(\mathit{stress}){+}1)$ if   $\statefun^i{}(\mathit{temp})>9.9$;  $\xi(\mathit{stress}) = 0$, otherwise;

\end{itemize} 

\item $\errorfun{} \in \mathbb{R}^{\{s_{\mathrm{t}} \}}$ and
$\errorfun(s_{\mathrm{t}})= 0.1=\epsilon$;

\item $\measmap{}(\statefun^i{}, \errorfun{})  {=} \big \{ \xi :
\xi(s_{\mathrm{t}}) {\in} [ \statefun^i{}(\mathit{temp}){-}\epsilon \, , \, \allowbreak
\statefun^i{}(\mathit{temp}){+} \epsilon ]  \big \}$;

\item $\invariantfun{}(\statefun{})=\true$ if $0 \leq \statefun{}(\mathit{temp})\leq 50$; \allowbreak $\invariantfun{}(\statefun{})=\false$, otherwise. 

\item $\safefun{}(\statefun{})=\true$ if $\statefun{}(\mathit{stress}) < 5$; \allowbreak $\safefun{}(\statefun{})=\false$, if $\statefun{}(\mathit{stress})\geq 5$ (the maximum value for $\mathit{stress}$ is $5$). 

\end{itemize}
\end{example}



Let us now formalise the cyber component of \CPS{s} in \cname{}.
Basically, we extend 
the \emph{timed process algebra TPL} of~\cite{HR95} with two main ingredients: 
\begin{itemize}
\item  two different constructs to read values detected at sensors and  write values on actuators, respectively; 
\item   special constructs to represent malicious activities on physical 
devices. 
\end{itemize}
The remaining constructs are the same as those of TPL. 
\begin{definition}[Processes]
\emph{Processes} are defined as follows: 
\begin{displaymath}
\begin{array}{rl}	
P,Q \Bdf & \nil \q\, \big| \q\, \tick.P \q\, \big| \q\, P \parallel Q \q\, 
\big| \q\, \timeout {\pi.P} {Q} 
\q\, \big|   \\[3pt]
& \ifelse b P Q \q\, \big| \q\, P{\setminus} c \q\, \big| \q\,  H \langle \tilde{w} \rangle .
\end{array}
\end{displaymath}
\end{definition}

We write $\nil$ for the \emph{terminated process}. The process $\tick.P$
sleeps for one time unit and then continues as $P$. We write $P \parallel
Q$ to denote the \emph{parallel composition}
 of concurrent \emph{threads}$P$ and $Q$.
 The process $\timeout {\pi.P} Q$, with $\pi\in
\{\OUT{c}{v},\LIN{c}{x}, \rsens x s, \wact v a, \allowbreak \rsens x
{\mbox{\Lightning}p}, \allowbreak \wact v {\mbox{\Lightning}p} \}$, denotes
\emph{prefixing with timeout}. Thus, $\timeout{\OUT c v . P}Q$ sends the
value $v$ on channel $c$ and, after that, it continues as $P$; otherwise,
after one time unit, it evolves into $Q$. The process $\timeout{\LIN c x.
P}Q$ is the obvious counterpart for reception.
The process 
 $\timeout{\rsens x s.P}{Q}$
reads the values detected by the sensor $s$, 
whereas $\timeout{\wact v a.P}{Q}$ writes 
on the  actuator $a$. 
For  $\pi \in \{ \rsens x {\mbox{\Lightning}p} , \wact v {\mbox{\Lightning}p} \}$, the process $\timeout{\pi.P}{Q}$ denotes the reading and the writing, respectively, of the physical device $p$ (sensor or actuator)
 made by the \emph{attack}. Thus, in \cname{}, \emph{attack processes} have specific constructs to interact with physical devices.

The process $P{\setminus}c$ is the channel restriction operator of CCS. It
is quantified over the set $\cal C$ of communication channels but we
sometimes write $P{\setminus}\{ c_1,
c_2, \ldots , c_n \}$ to mean
$P{\setminus}{c_1}{\setminus}{c_2}\cdots{\setminus}{c_n}$. 
The process $\ifelse b P Q$ is the standard conditional, where $b$ is a decidable guard. 
In processes of the form $\tick.Q$ and $\timeout {\pi.P} Q$, the occurrence of $Q$ is said to be \emph{time-guarded}. 
The process $H \langle \tilde{w} \rangle$ denotes (guarded) recursion. We  assume a set of \emph{process identifiers} ranged over by $H,H_1,H_2$.
We write $H \langle w_1,\ldots, w_k \rangle$ to denote a recursive process $H$ defined via an equation $H(x_1,\ldots, x_k) = P$, where (i) the tuple $x_1,\ldots, x_k$ contains all the variables that appear free in $P$, and (ii) $P$ contains only  guarded occurrences of the process identifiers, such as $H$ itself. We say that recursion is \emph{time-guarded} if $P$ contains only time-guarded occurrences of the process identifiers. Unless explicitly stated our recursive processes are always time-guarded.

In 
the two constructs $\timeout{\LIN c x. P}Q$ and $\timeout{ \rsens x \mu. P}Q$,
with $\mu \in \{ p, \mbox{\Lightning}p \}$, the variable $x$ is said to be
\emph{bound}. 
This gives rise to the standard notions of \emph{free/bound (process) variables} 
and \emph{$\alpha$-conversion}.
A term is \emph{closed} if it does not contain free 
 variables, and 
we assume to always work with closed processes: the absence of free variables is
preserved at run-time. As further notation, we write $T{\subst v x}$ for the substitution of all occurrences of the 
the free variable $x$ in $T$ with the value $v$.


Note that in \cname{}, a processes might use sensors and/or actuators which are 
not defined in the 
environment. To rule out ill-formed \CPS{s},
we   use the following  definition.
\begin{definition}[Well-formedness]
Given a process $P$ and an environment $E= \envCPSS 
{\statefun{}} 
{\actuatorfun{}} 
{\uncertaintyfun{}}  
{\evolmap{}}
{\errorfun{}}  
{\measmap{}}   
{\invariantfun{}}
{\safefun{}}
$, the \CPS{} $\confCPS E P$ is 
\emph{well-formed} if: (i) for any sensor $s$ mentioned in $P$, the function $\errorfun{}$ is defined on $s$; (ii) for any actuator $a$ mentioned in $P$, the function $\actuatorfun{}$ is defined on $a$. 
\end{definition}
Hereafter, we will always work with well-formed \CPS{s}.

Finally, we adopt some \emph{notational conventions}.  
To model \emph{time-persistent prefixing}, we write $\pi.P$ for 
the process defined via the  equation $\mathit{Rcv} = 
\timeout{\pi.P}{\mathit{Rcv}}$, where $\mathit{Rcv}$ does not occur in $P$. We write $\timeout{\pi}Q$ 
as an abbreviation for $\timeout{\pi.\nil}{Q}$, and 
$\timeout{\pi.P}{}$ as an abbreviation for $\timeout{\pi.P}{\nil}$. We write $\OUTCCS c$ (resp., $\LINCCS c$) when channel $c$ is used for pure synchronisation.
For $k\geq 0$, we write $\tick^{k}.P$ as a shorthand for $\tick.\tick. \ldots \tick.P$, where the prefix $\tick$ appears $k$ consecutive times. 
We write $\ifthen b P$ instead of $\ifelse b P \nil$. 
Let $M = \confCPSS E {\cal S} P$, we write 
$M\parallel Q$ for $\confCPSS E {\cal S} { (P\parallel Q) }$, and $M{\setminus}c$ for $\confCPSS E {\cal S} {P {\setminus}c}$. 

We can now finalise our running example. 
\begin{example}[Cyber component of the \CPS{} $\mathit{Sys}$]
\label{exa:sys}
Let us define the cyber component of the \CPS{} $\mathit{Sys}$ described in \autoref{exa:sys-physical}. 
We define two parallel processes: $\mathit{Ctrl}$ and $\mathit{IDS}$. 
The former models the \emph{controller} activity, consisting in reading the temperature sensor and in governing the cooling system via its actuator, whereas the latter models a simple \emph{intrusion detection system} that attempts to detect and signal abnormal behaviours of the system. Intuitively, $\mathit{Ctrl}$ senses the temperature of the engine at each time slot. When the 
\emph{sensed temperature} is above $10$ degrees, the controller activates the coolant. The cooling activity is maintained for $5$ consecutive time units. After that time, the controller synchronises with the $\mathit{IDS}$ component via a private channel $\mathit{sync}$, and then waits for \emph{instructions}, via a channel $\mathit{ins}$. The $\mathit{IDS}$ component checks whether the 
\emph{sensed temperature} is still 
above $10$. If this is the case, it sends an \emph{alarm} of ``high 
temperature'', via a specific channel, and then says to $\mathit{Ctrl}$ to keep cooling for other $5$ time units; otherwise, if the temperature is not above $10$, the $\mathit{IDS}$ component requires $\mathit{Ctrl}$ to stop the cooling activity.  
\begin{displaymath}
{\small  
\begin{array}{rcl}
\mathit{Ctrl} &  =  & \rsens x {s_{\operatorname{t}}} . \ifelse {x>10}
{ \mathit{Cooling} } { \tick.\mathit{Ctrl} } \\[2pt]
\mathit{Cooling}  & =  &   \wact{\on}{\emph{cool}}.\tick^5 . \mathit{Check}
\\[2pt]
\mathit{Check} & = & 
\OUTCCS{\mathit{sync}}. 
\LIN{\mathit{ins}}{y}.\mathsf{if} \, 
(y=\mathsf{keep\_cooling})\\
&& \, \{ \tick^5.\mathit{Check} \} \:
\mathsf{else} \;
\{  \wact{\off}{\mathit{cool}}.\tick .\mathit{Ctrl} \}\\[2pt]
\mathit{IDS} & = &  \LINCCS {\mathit{sync}}  .  \rsens x {s_{\operatorname{t}}} .  \mathsf{if} 
\, (x>10) \\
&& \,  \{ \OUT{\mathit{alarm}}{\mathsf{high\_temp}}. 
\OUT{\mathit{ins}}{\mathsf{keep\_cooling}}. \\
&& \, \tick.\mathit{IDS} \}  \; \mathsf{else} \; \{ \OUT{\mathit{ins}}{\mathsf{stop}}.\tick. \mathit{IDS} \}
\enspace . \end{array}
}
\end{displaymath}
Thus, the whole \CPS{} is defined as: 
\begin{displaymath}
\mathit{Sys} \; = \; \confCPS {\env} {(\mathit{Ctrl} \parallel \mathit{IDS})
{\setminus}\{ \mathit{sync}, \mathit{ins}\}} \,,
\end{displaymath}%
where $\env$ is the physical environment defined in \autoref{exa:sys-physical}.
We remark that, for the sake of simplicity, our $\mathit{IDS}$ component is quite basic: for instance, it does not check wether the temperature is too low. 
 However, it is straightforward to replace it with a more sophisticated one, containing more informative tests on sensor values and/or on actuators commands. 

\end{example}


\begin{figure}[t]
\centering
\includegraphics[width=5.5cm,keepaspectratio=true,angle=0]{./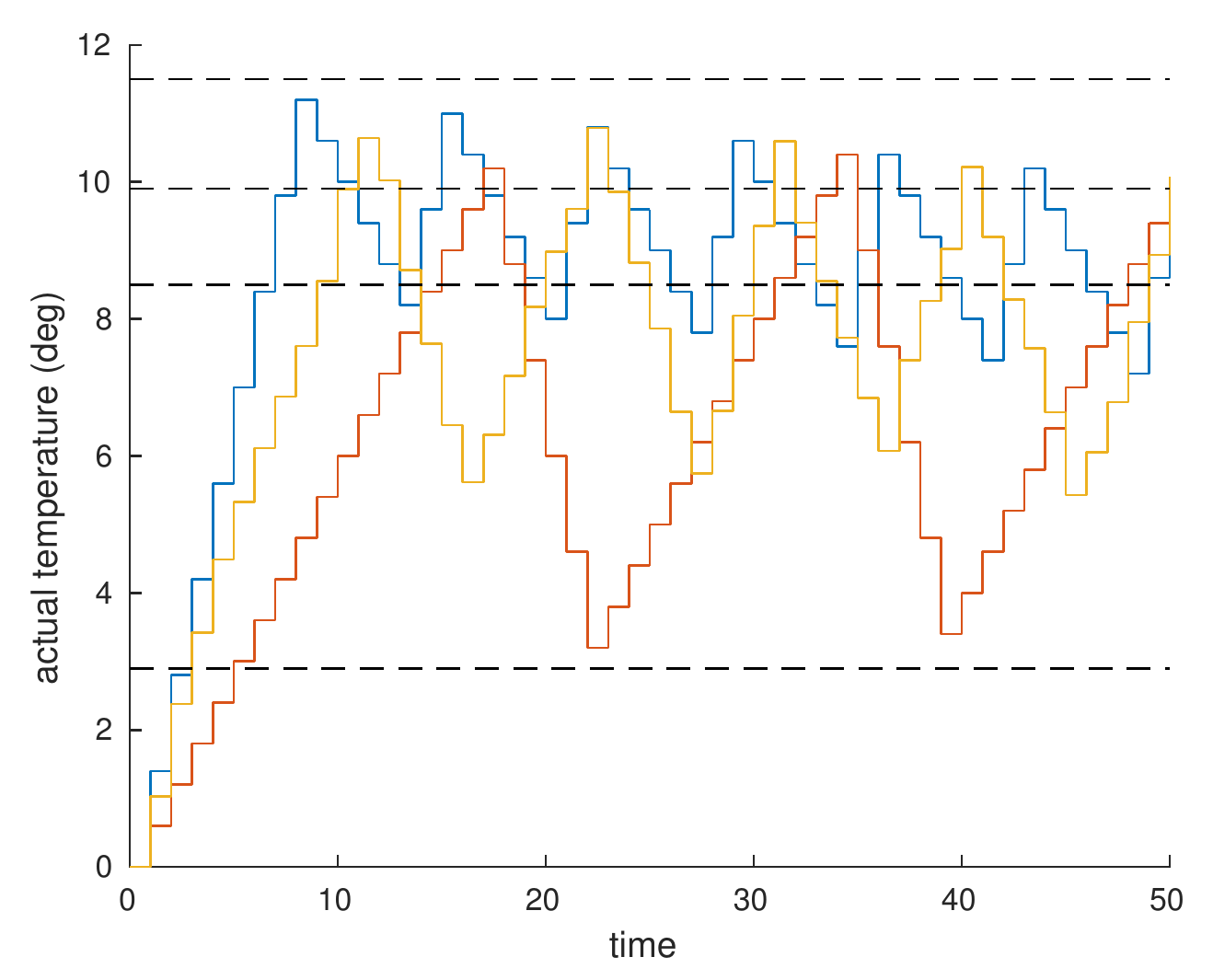}
\caption{Three  possible evolutions of the \CPS{} of \autoref{exa:sys}.}
\label{f:HS traj}
\end{figure}

\subsection{Labelled transition semantics}
\label{lab_sem}

\begin{table*}[t!]
\caption{LTS for processes}
\label{tab:lts_processes} 
\vspace*{-0.5cm}
\begin{displaymath}
\begin{array}{l@{\hspace*{10mm}}l}
\Txiom{Outp}
{-}
{ { \timeout{\OUT c v .P}Q } \trans{\out c v}   P}
&
\Txiom{Inpp}
{-}
{ { \timeout{\LIN c x .P}Q } \trans{\inp c v}    {P{\subst v x}}  }
\\[12pt]
\Txiom{Com}
{ P \trans{\out c v}  { P'}  \Q  Q \trans{\inp c v}  { Q'} }
{ P \parallel  Q \trans{\tau}  {P'\parallel Q'}}
&
\Txiom{Par}
{ P \trans{\lambda}  P' \Q \lambda \neq  \tick }
{ {P\parallel Q} \trans{\lambda} {P'\parallel Q}}
\\[12pt]
\Txiom{Write}
{ \mu \in \{ p , \mbox{\Lightning}p \} }  
{ { \timeout{\wact v \mu .P}Q } \trans{\snda \mu v}   P}
&
\Txiom{Read}
{  \mu \in \{ p , \mbox{\Lightning}p \} } 
{ { \timeout{\rsens x \mu .P}Q } \trans{\rcva \mu v}    {P{\subst v x}}  }
\\[12pt]
\Txiom{$\mbox{\Lightning}$SensWrite$\mbox{\,\Lightning}$}
{ P \trans{\snda {\mbox{\Lightning}s} v}  {P'}  \Q  Q \trans{\rcva s v}  { Q'} 
}
{ P \parallel  Q \trans{\tau:s}  {P'\parallel Q'}}
&
\Txiom{$\mbox{\Lightning}$ActRead$\mbox{\,\Lightning}$}
{ P \trans{\snda {a} v}  {P'}  \Q  Q \trans{\rcva {\mbox{\Lightning}a} v}  { Q'}
}
{ P \parallel  Q \trans{\tau:a}  {P'\parallel Q'}}
\\[12pt]
\Txiom{Res}{P \trans{\lambda} P' \Q \lambda \not\in \{ {\inp c v}, {\out c v} \}}{P {\setminus}c \trans{\lambda} {P'}{\setminus}c}
&
\Txiom{Rec}
{  P{\subst {\tilde{w}} {\tilde{x}}} \trans{\lambda}  Q \Q H(\tilde{x})=P}
{ H \langle \tilde{w} \rangle  \trans{\lambda}  Q}
\\[12pt] 
\Txiom{Then}{\bool{b}=\true \Q P \trans{\lambda} P'}
{\ifelse b P Q \trans{\lambda} P'}
&
\Txiom{Else}{\bool{b}=\false \Q Q \trans{\lambda} Q'}
{\ifelse b P Q \trans{\lambda} Q'}
\\[12pt]
\Txiom{TimeNil}{-}
{ \nil \trans{\tick}  \nil}
& 
\Txiom{Delay}
{-}
{  { \tick.P} \trans{\tick}  P}
\\[12pt]
\Txiom{Timeout}
{-}
{  {\timeout{\pi.P}{Q} }   \trans{\tick}  Q}
&
\Txiom{TimePar}
{
  P \trans{\tick}  {P'}  \Q 
   Q \trans{\tick} {Q'} 
}
{
  {P \parallel Q}   \trans{\tick}  { P' \parallel Q'}
}
\end{array}
\end{displaymath}
\end{table*}

In this section, we provide the dynamics of \cname{} in terms of a \emph{labelled transition system (LTS)} in the SOS style of Plotkin. 
\autoref{def:op-env} introduces  auxiliary
operators \nolinebreak on \nolinebreak environments.

\begin{definition}
\label{def:op-env}
Let $E = \envCPSS {\statefun{}} 
{\actuatorfun{}} 
{\uncertaintyfun{}}  
{\evolmap{}}
{\errorfun{}}  
{\measmap{}}   
{\invariantfun{}}
{\safefun{}}$. 
\begin{itemize} 
\item $\mathit{read\_sensor}(E,s)  \deff  \{\xi(s) \, : \,  \xi \in \measmap{}(\statefun{},\errorfun{})\}$,
\item $\mathit{update\_act}(E,a,v)  \deff
\replaceENV E {\actuatorfun{}}  {\actuatorfun{}[a \mapsto v]}$, 
\item $\mathit{next}(E)  \deff \bigcup_{\scriptsize \xi \in   \evolmap{}(\statefun{}, \actuatorfun{}, \uncertaintyfun{})} 
\{ \replaceENV E {\statefun{}}  {\xi}
\}$,
\item $\invariantfun{}(E)  \deff \invariantfun{}(\statefun{})$,
\item $\safefun{}(E)  \deff \safefun{}(\statefun{})$. 
\end{itemize}
\end{definition}

The operator 
$\mathit{read\_sensor}(E,s)$ returns the set of possible measurements detected by sensor $s$ in the environment $E$; it returns a set of possible values rather than a single value due to the error $\errorfun{}(s)$ of sensor $s$. 
$\mathit{update\_act}(E,a,v)$ returns the new environment in which the
actuator function is updated in such a manner to associate the actuator
$a$ with the value $v$.
$\mathit{next}(E)$ returns the set of the next
admissible environments reachable from $E$, by an application of $\evolmap{}$. $\invariantfun{}(E)$ checks whether the invariant is
satisfied by the current values of the state variables (here, abusing notation, we overload the meaning of the function
$\invariantfun{}$). 
$\safefun{}(E)$ checks whether the safety conditions are
satisfied by the current values of the state variables.

\begin{table*}[t]
\caption{LTS for \CPS{s}}
\label{tab:lts_systems} 
\begin{displaymath}
{\small 
\begin{array}{c}
\Txiom{Out}
{P \trans{\out c v}  P' \Q \operatorname{inv}(E)}
{\confCPSS E {\cal S} P   \trans{\out c v}   \confCPSS E {\cal S} {P' }}
\Q\Q\Q\Q
\Txiom{Inp}
{P  \trans{\inp c v}  P'\Q \operatorname{inv}(E)}
{\confCPSS E {\cal S} P    \trans{\inp c v}  \confCPSS E {\cal S} {P' }}
\\[13pt]
\Txiom{SensReadSec}{P \trans{\rcva s v} P' \Q s \in {\cal S} \Q \operatorname{inv}(E)\Q
\mbox{\small{$v \in \operatorname{read\_sensor}(E,s)$}} 
}
{\confCPSS E {\cal S} P \trans{\tau} \confCPSS E {\cal S} {P'}}
\\[13pt]
\Txiom{SensReadUnsec}{P \trans{\rcva s v} P' \Q s \not\in {\cal S} \Q P 
\ntrans{\snda{\mbox{\Lightning}s}{v}}\Q \operatorname{inv}(E)\Q
\mbox{\small{$v \in \operatorname{read\_sensor}(E,s)$}} 
}
{\confCPSS E {\cal S} P \trans{\tau} \confCPSS E {\cal S} {P'}}
\\[13pt]
\Txiom{$\mbox{\Lightning}$SensRead$\mbox{\,\Lightning}$}{P \trans{\rcva {\mbox{\Lightning}s} v} P'  \Q s \not\in {\cal S} \Q \operatorname{inv}(E)\Q
\mbox{\small{$v \in \operatorname{read\_sensor}(E,s)$}} 
}
{\confCPSS E {\cal S} P \trans{\tau} \confCPSS E {\cal S} {P'}}
\\[13pt]
\Txiom{ActWriteSec}{P \trans{\snda a v} {P'} \Q a \in {\cal S} \Q   \operatorname{inv}(E) \Q
{E'}=\operatorname{update\_act}(E,a,v)}
{\confCPSS E {\cal S} P \trans{\tau} \confCPSS {E'}{\cal S}{P'}}
\\[13pt]
\Txiom{ActWriteUnsec}{P \trans{\snda a v} {P'} \Q a \not\in {\cal S} \Q   P 
\ntrans{\rcva{\mbox{\Lightning}a}{v}} \Q \operatorname{inv}(E) \Q
{E'}=\operatorname{update\_act}(E,a,v)}
{\confCPSS E {\cal S} P \trans{\tau} \confCPSS {E'}{\cal S}{P'}}
\\[13pt]
\Txiom{$\mbox{\Lightning}$ActWrite$\mbox{\,\Lightning}$}{P \trans{\snda {\mbox{\Lightning}a} v} {P'}  \Q a \not\in {\cal S} \Q \operatorname{inv}(E) \Q
{E'}=\operatorname{update\_act}(E,a,v)}
{\confCPSS E {\cal S} P \trans{\tau} \confCPSS {E'}{\cal S}{P'}}
\\[13pt]
\Txiom{Tau}{(P \trans{\tau} P') \vee (P \trans{\tau:p} P'\q p \not\in {\cal S} )\Q \operatorname{inv}(E)}
{ \confCPSS E {\cal S} P \trans{\tau} \confCPSS E {\cal S} {P'}}
\Q\Q
\Txiom{Deadlock}
{\neg\operatorname{inv}(E)}
{ \confCPSS E {\cal S} P \trans{\dead} \confCPSS E {\cal S} {P}}
\\[13pt]
\Txiom{Time}{ P \trans{\tick} {P'} \Q 
\confCPSS E {\cal S} P \ntrans{\tau} \Q
\operatorname{inv}(E) \Q E' \in \operatorname{next}(E)  }
{\confCPSS E {\cal S} P \trans{\tick} \confCPSS {E'} {\cal S} {P'}}
\Q\Q
\Txiom{Safety}
{\neg\operatorname{safe}(E) \Q \operatorname{inv}(E)}
{ \confCPSS E {\cal S} P \trans{\unsafe} \confCPSS E {\cal S} {P}}
\end{array}
}
\end{displaymath}
\end{table*}

In \autoref{tab:lts_processes}, we provide transition rules for processes.
Here, the meta-variable $\lambda$ ranges over labels in the set 
$\{\tick, \tau, {\out c v}, {\inp c v}, \allowbreak \snda a v,\rcva s v, \snda {\mbox{\Lightning}p} v, \rcva {\mbox{\Lightning}p} v, \tau{:}p\}$. Rules \rulename{Outp}, \rulename{Inpp} and \rulename{Com} serve to model channel communication, on some channel $c$. Rules \rulename{Write} and
\rulename{Read} denote the writing/reading of some data on the physical
device $p$. Rule
\rulename{$\mbox{\Lightning}$SensWrite$\mbox{\,\Lightning}$} models an
\emph{integrity attack on sensor $s$}, where the controller of $s$ is
supplied with a fake value $v$ provided by the attack.
Rule \rulename{$\mbox{\Lightning}$ActRead$\mbox{\,\Lightning}$} models a
\emph{DoS attack to the actuator $a$}, where the update request of the controller is intercepted by the attacker and it never reaches the actuator. Rule \rulename{Par} propagates untimed actions over parallel components.
Rules \rulename{Res},  \rulename{Rec}, \rulename{Then} and \rulename{Else} 
 are standard. The following four rules model the passage of  time. The symmetric counterparts of rules \rulename{Com} 
and \nolinebreak \rulename{Par} \nolinebreak are \nolinebreak  omitted.

In \autoref{tab:lts_systems}, we lift the transition rules from processes
to systems. Except for rule \rulename{Deadlock}, all rules have a common
premise $\invariantfun{}(E)$: a system can evolve only if the invariant is
satisfied.
Here, actions, ranged over by $\alpha$, are in the set $\{\tau,
{\out c v}, {\inp c v}, \tick , \dead , \unsafe\}$. These actions denote:
internal activities ($\tau$); logical activities, more
precisely, channel transmission (${\out c v}$ and ${\inp c v}$); the
passage of time ($\tick$); and two specific  physical events: system
deadlock ($\dead$) and the violation of the safety conditions ($\unsafe$).
Rules \rulename{Out} and
\rulename{Inp} model transmission and reception, with an external system,
on a channel $c$. Rule \rulename{SensReadSec} models the reading of the
current data detected at a \emph{secured sensor} $s$, whereas rule
\rulename{SensReadUnsec} models the reading of an \emph{unsecured sensor}
$s$. In this case, since the sensor is not secured, the presence of a
malicious action $\snda {\mbox{\Lightning}s} w$ prevents the reading of
the sensor. We already said that rule
\rulename{$\mbox{\Lightning}$SensWrite$\mbox{\,\Lightning}$} of
\autoref{tab:lts_processes} models integrity attacks on an unsecured sensor
$s$, however, together with rule \rulename{SensReadUnsec}, it also serves
to model \emph{DoS attacks on an unsecured sensor $s$}, as the controller
of $s$ cannot read its correct value if the attacker is currently
supplying a fake value for it.

Rule \rulename{$\mbox{\Lightning}$SensRead$\mbox{\,\Lightning}$} allows
the attacker to read the confidential value detected at an unsecured sensor
$s$. Rule~\rulename{ActWriteSec} models the writing of a value $v$ on a
\emph{secured actuator} \nolinebreak $a$, whereas rule~\rulename{ActWriteUnsec} models the
writing on a \emph{unsecured actuator}  $a$. Again, if the actuator is
unsecured, the presence of an attack (capable of performing an action
$\rcva{\mbox{\Lightning}a}{v}$) prevents the correct access to the
actuator by the controller. Rule
\rulename{$\mbox{\Lightning}$ActWrite$\mbox{\,\Lightning}$} models an
\emph{integrity attack to an unsecured actuator $a$}, where the attack
updates the actuator \nolinebreak with \nolinebreak a \nolinebreak fake \nolinebreak value.

Note that our operational semantics ensures a preemptive power  to prefixes of the form $\wact v {\mbox{\Lightning}p}$ and $\rsens x {\mbox{\Lightning}p}$ \emph{on unsecured devices $p$}. This because an attack process  can always prevent the regular access to a unsecured physical device (sensor or actuator) by its controller. 
\begin{proposition}[Attack preemptiveness]
\label{prop:Attacker-preemptiveness}
Let  $M \allowbreak = \confCPSS E {\cal S} P$. 
\begin{itemize}[noitemsep]
\item If there is $Q$ such that $P \trans{\snda {\mbox{\Lightning}s} v} Q$,
with $s \not \in {\cal S}$,  then there is no $M'$ such that $M \trans{\tau} M'$ by an application of the rule \rulename{SensReadUnsec}. 
\item If there is $Q$ such that $P \trans{\rcva {\mbox{\Lightning}a} v} Q$, 
with $a \not \in {\cal S}$, then there is no $M'$ such that $M \trans{\tau} M'$ by an application of the rule \rulename{ActWriteUnsec}.
\end{itemize}
\end{proposition}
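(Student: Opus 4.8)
The plan is to prove both items directly by inspecting the side conditions of the relevant rules in \autoref{tab:lts_systems}. The key observation is that the rules \rulename{SensReadUnsec} and \rulename{ActWriteUnsec} each carry an explicit \emph{negative premise} on the process $P$, and that these negative premises are in direct contradiction with the hypotheses of the proposition. So the argument is essentially a matter of unfolding definitions and pointing at the offending side condition; there is no induction or semantic reasoning required.

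For the first item, suppose toward a contradiction that $M \trans{\tau} M'$ is derived by an application of rule \rulename{SensReadUnsec}. Looking at that rule, its premises require, among other things, that there exist $P'$ and $v'$ with $P \trans{\rcva {s} {v'}} P'$, that $s \not\in {\cal S}$, \emph{and} that $P \ntrans{\snda{\mbox{\Lightning}s}{v'}}$, i.e.\ $P$ cannot perform the malicious write action $\snda{\mbox{\Lightning}s}{v'}$. But the hypothesis of the proposition gives us $Q$ with $P \trans{\snda {\mbox{\Lightning}s} v} Q$; I would note that the value $v$ here is existentially quantified in the premises of \rulename{SensReadUnsec} in the same way $v'$ is, so the sensor-read action and the malicious-write action in the rule instance range over the same value, and hence the hypothesis $P \trans{\snda {\mbox{\Lightning}s} v} Q$ directly falsifies the negative premise $P \ntrans{\snda{\mbox{\Lightning}s}{v}}$. (One subtlety to address: in the rule, the matching of values between the sensor read and the blocked malicious write is what makes the negative premise bite; I should make clear that \rulename{SensReadUnsec} only fires when no malicious write on $s$ is possible \emph{for the value actually being read}, and the preemptiveness claim is tailored to exactly that situation.) The second item is entirely symmetric: if $M \trans{\tau} M'$ were derived by \rulename{ActWriteUnsec}, its premises would include $a \not\in {\cal S}$ and the negative premise $P \ntrans{\rcva{\mbox{\Lightning}a}{v}}$, contradicting the assumed transition $P \trans{\rcva {\mbox{\Lightning}a} v} Q$.

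The main (and really only) obstacle is being careful about the quantification of values in the negative premises of the two rules, so that the stated hypothesis — a \emph{single} malicious transition on some value $v$ — genuinely blocks the rule. This is a presentational rather than a mathematical difficulty: once it is clear that \rulename{SensReadUnsec} (resp.\ \rulename{ActWriteUnsec}) requires the absence of \emph{any} competing malicious action that would race with the read (resp.\ write) it performs, the contradiction is immediate and the proof is two short lines. No other proof technique (bisimulation, induction on derivations, well-formedness) is needed, since we only have to rule out one specific last rule in a would-be derivation.
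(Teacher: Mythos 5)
Your overall strategy---rule out the offending last rule by unfolding its premises and contradicting the negative one, with no induction---is exactly the intended argument (the paper treats this proposition as immediate from the rules in \autoref{tab:lts_systems} and gives no separate proof). The problem is how you resolve the value-quantification subtlety that you yourself flag: your resolution proves a weaker statement than the one claimed. In an instance of \rulename{SensReadUnsec} the value is the one actually read, i.e.\ some $v'\in \mathit{read\_sensor}(E,s)$ with $P \trans{\rcva{s}{v'}} P'$, and the negative premise of \emph{that} instance is $P \ntrans{\snda{\mbox{\Lightning}s}{v'}}$. Your hypothesis only provides a malicious write $P \trans{\snda{\mbox{\Lightning}s}{v}} Q$ for one particular $v$; since such actions are produced by rule \rulename{Write} from a prefix $\wact{v}{\mbox{\Lightning}s}$, the ability to emit $v$ does not entail the ability to emit $v'$. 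So under the value-specific reading you adopt (``the rule only fires when no malicious write is possible for the value actually being read, and the claim is tailored to exactly that situation'') you only block the instances with $v'=v$, whereas the proposition asserts that \emph{no} instance of \rulename{SensReadUnsec} fires at all. Indeed, under that reading the first item would be false for attacks that offer a single value outside the current sensor range---e.g.\ the replay attack of \autoref{exa:att:DoS2}, whose stale value soon lies outside $\mathit{read\_sensor}(E,s)$, so the controller could still read the genuine temperature, contradicting the paper's analysis of that example. The way to close the gap is to read the negative premise as requiring the absence of \emph{any} pending malicious write on $s$, for whatever value; this is the reading the paper's prose relies on (``the presence of a malicious action $\snda{\mbox{\Lightning}s}{w}$ prevents the reading of the sensor'', with $w$ arbitrary) and with it the contradiction with your hypothesis is immediate.

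Your ``entirely symmetric'' claim for the actuator item also glosses over a real asymmetry, although there the conclusion survives even under the value-specific reading: the negative premise of \rulename{ActWriteUnsec} concerns the value $v'$ the controller writes via $\snda{a}{v'}$, but a malicious read $\rcva{\mbox{\Lightning}a}{v}$ is generated by rule \rulename{Read}, which is value-generic, so $P \trans{\rcva{\mbox{\Lightning}a}{v}} Q$ for one $v$ yields $P \trans{\rcva{\mbox{\Lightning}a}{v'}} Q'$ for every $v'$, and the negative premise fails in every instance. That one-line observation (or, alternatively, the value-universal reading of the negative premises) is what your proof needs to state explicitly; as written, identifying the hypothesis value with the value of the rule instance does not follow from the rules.
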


Rule \rulename{Tau} lifts non-observable actions from processes to
systems. This includes communications channels and attacks' accesses to
unsecured physical devices. A similar lifting occurs in rule
\rulename{Time} for timed actions, where $\operatorname{next}(E)$ returns
the set of possible environments for the next time slot. Thus, by an
application of rule \rulename{Time} a \CPS{} moves to the next physical
state, in the next time slot. Rule~\rulename{Deadlock} is introduced 
 to signal the violation of the invariant. When the invariant
is violated, a system deadlock occurs and then, in \cname{}, the system
emits a special action $\dead$, forever. Similarly, rule~\rulename{Safety}
is introduced to detect the violation of safety conditions. In this case,
the system may emit a special action $\unsafe$ and then continue its
evolution.


Now, having defined the actions that can be performed by a system, we can
easily concatenate these actions to define the possible execution traces
of the system. Formally, given a trace  $t = \alpha_1 \ldots
\alpha_n$, we will write $\trans{t}$ as an abbreviation for
$\trans{\alpha_1}\ldots \trans{\alpha_n}$. In the following, we
will use the function $\#\tick(t)$ to get the number of occurrences of the
 action $\tick$ \nolinebreak in  \nolinebreak $t$.

The notion of trace allows us to provide a formal definition of soundness for \CPS{s}: a \CPS{} is said to be \emph{sound} if it never deadlocks and never violates the safety conditions.
\begin{definition}[System soundness]
Let $M$ be a well-formed \CPS{}. We say that $M$ is \emph{sound} if whenever 
$M \trans{t} M'$, for some  $t$, both actions  $\dead$ and  $\unsafe$
never occur  in  $t$. 
\end{definition}
In our security analysis, we will focus on sound \CPS{s}. For instance, 
\autoref{prop:sys} says that our running example $\mathit{Sys}$ is 
sound 
and it never transmits on the channel $\mathit{alarm}$.
\begin{proposition} 
\label{prop:sys}
\label{prop:sys1}
\label{prop:sys2}
Let $\mathit{Sys}$ be the \CPS{} defined in \autoref{exa:sys}. 
If $\mathit{Sys} \trans{t} \mathit{Sys}'$, for some trace $t {=}\alpha_1 \ldots \alpha_n$, then $\alpha_i \in \{ \tau , \tick \}$, for any $i \in \{1, \ldots, n\}$. 
\end{proposition}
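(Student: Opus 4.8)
The plan is to prove that a trace of $\mathit{Sys}$ can never contain a label of the form $\out c v$, $\inp c v$, $\dead$ or $\unsafe$, so that only $\tau$ and $\tick$ remain. The channel part is almost immediate from the syntax of $\mathit{Ctrl}\parallel\mathit{IDS}$: every communication prefix acts on $\mathit{sync}$, $\mathit{ins}$ or $\mathit{alarm}$, and since $\mathit{sync}$ and $\mathit{ins}$ are restricted in $\mathit{Sys}$, rule \rulename{Res} prevents any label on them from being exposed at system level (their handshakes are absorbed by \rulename{Com} and lifted to $\tau$ by \rulename{Tau}); so the only label on a free channel that could ever be produced is $\out{\mathit{alarm}}{\mathsf{high\_temp}}$, and this requires that, right after the five ticks of $\mathit{Cooling}$, the $\mathit{IDS}$ read from $s_{\mathrm{t}}$ a value strictly greater than $10$. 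Hence everything reduces to three invariants on the physical environment $\env$: along every computation of $\mathit{Sys}$, (i)~$0\le\statefun(\mathit{temp})\le 50$, so $\operatorname{inv}(\cdot)$ always holds and \rulename{Deadlock} never fires; (ii)~$\statefun(\mathit{stress})<5$, so $\operatorname{safe}(\cdot)$ always holds and \rulename{Safety} never fires; and (iii)~whenever the controller reaches $\mathit{Check}$ --- i.e.\ just after the five cooling ticks --- $\statefun(\mathit{temp})\le 9.9$, so that every value of $\operatorname{read\_sensor}(\cdot,s_{\mathrm{t}})\subseteq[\statefun(\mathit{temp})-0.1,\statefun(\mathit{temp})+0.1]$ is $\le 10$ and the $\mathit{IDS}$ is forced into its $\mathsf{stop}$ branch; consequently $\OUT{\mathit{alarm}}{\mathsf{high\_temp}}$, $\OUT{\mathit{ins}}{\mathsf{keep\_cooling}}$ and the dead $\tick^5.\mathit{Check}$ loop are all unreachable.

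I would establish (i)--(iii) by an invariant-set argument (equivalently, by induction on the length of the trace). Exhibit a finite set $\mathcal R$ of configurations, each a symbolic tuple consisting of a numeric interval for $\mathit{temp}$, a value for $\mathit{stress}$, the status $\on/\off$ of $\mathit{cool}$, and a residual process drawn from the finitely many continuations of $(\mathit{Ctrl}\parallel\mathit{IDS})\setminus\{\mathit{sync},\mathit{ins}\}$; show $\mathit{Sys}\in\mathcal R$; and verify that $\mathcal R$ is closed under all transitions of \autoref{tab:lts_systems}, each such transition carrying a label in $\{\tau,\tick\}$. The non-$\tick$ cases are routine: sensor reads of $s_{\mathrm{t}}$ use \rulename{SensReadUnsec}, actuator writes of $\mathit{cool}$ use \rulename{ActWriteUnsec}, and the $\mathit{sync}/\mathit{ins}$ handshakes use \rulename{Com} then \rulename{Tau}, all producing $\tau$; the attacker side conditions $P\ntrans{\snda{\mbox{\Lightning}s}{v}}$ and $P\ntrans{\rcva{\mbox{\Lightning}a}{v}}$ hold vacuously, since $\mathit{Sys}$ contains no attack prefix, and the premise $\operatorname{inv}(\cdot)$ is discharged by (i). The $\tick$ cases use \rulename{Time}, which updates $\env$ through $\evolmap$ --- heating $\mathit{temp}$ by $1\pm0.4$ when $\mathit{cool}=\off$, cooling it by $-1\pm0.4$ when $\mathit{cool}=\on$, and setting $\mathit{stress}$ to $\min(5,\mathit{stress}+1)$ or to $0$ according to whether $\statefun(\mathit{temp})>9.9$ --- and the real content of the proof is to check that the intervals recorded in $\mathcal R$ are preserved.

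The crux, and the step I expect to be the main obstacle, is pinning down those intervals tightly enough. The controller switches $\mathit{cool}$ on as soon as a measurement exceeds $10$, which is forced once $\statefun(\mathit{temp})>10.1$, and keeps it on for exactly the five ticks of $\tick^5$. Tracing worst cases: on the read preceding the one that triggers cooling $\mathit{temp}\le 10.1$, hence $\mathit{temp}\le 10.1+1.4=11.5$ when $\mathit{Cooling}$ starts, so $\mathit{temp}\le 50$; five cooling ticks then drop it by at least $5\cdot 0.6=3$, so $\mathit{temp}\le 8.5\le 9.9$ at $\mathit{Check}$, which is (iii), while $\mathit{temp}$ stays $\ge 9.9-7>2.9>0$, so $\mathit{temp}$ remains in $[0,11.5]\subseteq[0,50]$ and (i) holds. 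For (ii), note that $\statefun(\mathit{stress})$ is the length of the current run of consecutive ``hot'' ticks --- those performed with $\statefun(\mathit{temp})>9.9$ --- and is reset by any other tick; I would show that such a run has at most four ticks. The tick preceding it has $\mathit{temp}\le 9.9$ and necessarily $\mathit{cool}=\off$ (otherwise the next tick could not be hot), so after it $\mathit{temp}\le 11.3$; the run then contains at most one heating tick (one further heating tick raises $\mathit{temp}$ above $10.5>10.1$, forcing cooling) followed by at most three cooling ticks, because three cooling ticks bring $\mathit{temp}$ from at most $11.5$ down to at most $11.5-1.8=9.7\le 9.9$, resetting $\mathit{stress}$. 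Hence $\statefun(\mathit{stress})\le 4<5$ throughout. The care needed here is to state every bound for the worst \emph{simultaneous} choice of the nondeterministic measurement in $[\statefun(\mathit{temp})-0.1,\statefun(\mathit{temp})+0.1]$ and of the evolution noise $\gamma\in[-0.4,0.4]$, and to keep the heating phase (controller at $\mathit{Ctrl}$, $\mathit{cool}=\off$) cleanly separated from the fixed five-tick cooling phase; a loose counting of hot ticks recovers only the trivial bound $\statefun(\mathit{stress})\le 5$ and misses the strict inequality needed for safety. The remaining obligations --- membership $\mathit{Sys}\in\mathcal R$, the restriction reasoning, and the vacuity of the attacker premises --- are bookkeeping.
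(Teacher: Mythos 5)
Your proposal is correct and follows essentially the same route as the paper: the paper's proof goes through \autoref{lem:sys}, an invariant established by induction on the number of $\tick$-actions that ties the coolant status to the interval of $\mathit{temp}$ and the value of $\mathit{stress}$ (yielding exactly your bounds $\mathit{temp}\le 11.1+\delta=11.5$ when the coolant is off, $\mathit{temp}\in(2.9,8.5]$ after the five cooling ticks, and $\mathit{stress}\le 4$), from which no deadlock, no $\unsafe$ and no $\mathit{alarm}$ follow just as in your items (i)--(iii). Apart from a cosmetic slip ($9.9-7=2.9$, not $>2.9$) and your packaging of the induction as closure of an invariant set that also records the process continuation, the argument matches the paper's.
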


Actually, we can be quite precise on the temperature reached by
$\mathit{Sys}$ before and after the cooling:
 in each of the $5$
rounds of cooling, the temperature will drop of a value laying in the
real interval $[1{-}\delta, 1 {+} \delta]$, where $\delta$ is the uncertainty. 

\begin{proposition}
\label{prop:X}
Let $\mathit{Sys}$ be the \CPS{} defined in \autoref{exa:sys}. For any execution trace of $\mathit{Sys}$, we have:
\begin{itemize}
\item when $\mathit{Sys}$ \emph{turns on} the cooling, the value of
the state variable $\mathit{temp}$ ranges over $(9.9 \, , \,  11.5]$;
\item when $\mathit{Sys}$ \emph{turns off} the cooling, the value of
the 
variable $\mathit{temp}$ ranges over $(2.9, 8.5]$. 
\end{itemize}
\end{proposition}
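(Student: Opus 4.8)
The plan is to prove the two items together, by induction along the sequence of sensor readings performed by $\mathit{Ctrl}$ in an arbitrary execution of $\mathit{Sys}$. Throughout, I would use \autoref{prop:sys}, which tells us that $\mathit{Sys}$ performs only $\tau$ and $\tick$ actions; in particular $\mathit{Sys}$ never deadlocks, so the invariant $0\le\mathit{temp}\le 50$ holds at every reachable configuration and time keeps progressing. Since $\mathit{Ctrl}$ and $\mathit{IDS}$ interact only on the restricted channels $\mathit{sync}$ and $\mathit{ins}$, and since the maximal-progress side condition of rule \rulename{Time} (which requires that no $\tau$-transition be enabled) forces every available $\tau$-step (sensor reads, actuator writes, channel synchronisations) to be taken before any $\tick$, I would first record the essentially deterministic phase structure of an execution: a \emph{monitoring phase}, in which $\mathit{Ctrl}=\rsens x{s_{\mathrm{t}}}.\ifelse{x>10}{\mathit{Cooling}}{\tick.\mathit{Ctrl}}$ reads $s_{\mathrm{t}}$ repeatedly and, while the reading stays $\le 10$, lets one time unit pass with the coolant off; and a \emph{cooling phase}, entered the first time a reading exceeds $10$, consisting of switching the coolant on, five $\tick$-steps with the coolant on, a $\mathit{sync}$/$\mathit{ins}$ handshake with $\mathit{IDS}$ together with the $\mathit{IDS}$ reading of $s_{\mathrm{t}}$ (all $\tau$-steps, so consuming no time), and then --- provided $\mathit{IDS}$ replies $\mathsf{stop}$ --- switching the coolant off followed by a single $\tick$-step with the coolant off, before re-entering the monitoring phase.

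The heart of the argument is the invariant: \emph{at every monitoring read, $\mathit{temp}\le 11.5$}. I would prove it by induction on the number of monitoring reads performed so far. The first read occurs with $\mathit{temp}=0$. For a later read, consider the preceding monitoring read. If that one returned a value $\le 10$ then, since $\mathit{read\_sensor}(\env,s_{\mathrm{t}})\subseteq[\mathit{temp}-\epsilon,\mathit{temp}+\epsilon]$ with $\epsilon=0.1$, there $\mathit{temp}\le 10+\epsilon=10.1$, and the single intervening off-tick changes $\mathit{temp}$ by $1+\gamma$ with $\gamma\in[-\delta,\delta]$ and $\delta=0.4$, so at the new read $\mathit{temp}\le 10.1+1.4=11.5$. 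If instead it returned a value $>10$, it triggered a cooling phase: by the induction hypothesis $\mathit{temp}\le 11.5$ there, and since the reading exceeded $10$ also $\mathit{temp}>10-\epsilon=9.9$, so $\mathit{temp}\in(9.9,11.5]$ when the coolant is switched on; the five coolant-on ticks then change $\mathit{temp}$ by $-1+\gamma_i$ with $\gamma_i\in[-\delta,\delta]$, leaving $\mathit{temp}\in(9.9-5-5\delta,\,11.5-5+5\delta]=(2.9,8.5]$; at that point $\mathit{IDS}$ reads a value $\le 8.5+\epsilon=8.6<10$, hence replies $\mathsf{stop}$ (this discharges the assumption used in the phase structure), so $\mathit{Ctrl}$ switches the coolant off at once, with $\mathit{temp}$ still in $(2.9,8.5]$, and the one subsequent off-tick gives $\mathit{temp}\le 8.5+1.4=9.9\le 11.5$ at the next monitoring read.

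With the invariant established, both items follow. The coolant is switched on precisely when a monitoring read returns $x>10$; there $x\le\mathit{temp}+\epsilon$ forces $\mathit{temp}>9.9$ and the invariant gives $\mathit{temp}\le 11.5$, so $\mathit{temp}\in(9.9,11.5]$. The coolant is switched off precisely at the end of a cooling phase, i.e.\ immediately after the five coolant-on ticks (no time passes during the $\mathit{IDS}$ handshake), so by the computation above $\mathit{temp}\in(2.9,8.5]$ there. I expect the main obstacle to be organisational rather than computational: the ``switch-on'' bound and the ``switch-off'' bound are interdependent --- the former feeds the five-tick computation that yields the latter, while the latter, via the $\mathsf{stop}$ reply and the ensuing off-tick, is exactly what re-establishes the $11.5$ bound at the next monitoring read --- so the two must be threaded through a single induction rather than proved separately; and one has to appeal to maximal progress with care to rule out any stray $\tick$ interleaved with the read/branch steps of $\mathit{Ctrl}$ or with the $\mathit{IDS}$ handshake, since an extra off-tick would break the $10.1$ and $8.5$ bounds on which everything rests.
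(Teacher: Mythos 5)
Your proof is correct and takes essentially the same route as the paper: the paper establishes an invariant lemma by induction over time slots (bounding $\mathit{temp}$ by $11.1+\delta=11.5$ when the coolant is off, and by $(9.9-k(1+\delta),\,11.5-k(1-\delta)]$ after $k\le 5$ cooling slots, with the $\mathit{IDS}$ necessarily replying $\mathsf{stop}$ at $k=5$), and then derives the two items exactly as you do. Your induction over monitoring reads, with the leaner invariant $\mathit{temp}\le 11.5$ at each read plus the maximal-progress phase analysis, is a streamlined repackaging of that lemma (omitting only the $\mathit{stress}$ bookkeeping, which the paper needs for other results, not for this proposition).
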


\subsection{Behavioural semantics} 
We recall that the \emph{observable activities} in \cname{} are: time
passing, system deadlock, violation 
of safety conditions, and  channel communication.
Having defined a labelled transition semantics, we are ready to formalise 
our behavioural semantics, based on execution traces. 

We adopt a standard notation for weak transitions: we write $\Trans{}$ 
for $(\trans{\tau})^*$, whereas $\Trans{\alpha}$ means $\ttranst{\alpha}$, and finally $\ttrans{\hat{\alpha}}$ denotes $\Trans{}$ if $\alpha=\tau$ and $\ttrans{\alpha}$ otherwise. Given a trace $t = \alpha_1 {\ldots} \alpha_n$, we write 
$\trans{t}$
 for $\trans{\alpha_1}{\ldots} \trans{\alpha_n}$, and 
$\Trans{\hat{t}}$ as an abbreviation for $\Trans{\widehat{\alpha_1}} {\ldots} 
\Trans{\widehat{\alpha_n}}$.

\begin{definition}[Trace preorder]
\label{Trace-equivalence}
We write $M \sqsubseteq N$ if whenever $M \trans{t}
M'$, for some $t$, there is $N'$ such that $N\Trans{\hat{t}}N'$. 
\end{definition}
\begin{remark}
Unlike standard trace semantics, our trace preorder is able to observe deadlock thanks to the presence of the rule \rulename{Deadlock} and the special action $\dead$: if $M \sqsubseteq N$ and $M$ eventually deadlocks 
then also $N$ must eventually deadlock.
\end{remark}






As we are interested in examining timing aspects of attacks, such as
beginning and duration, we propose a timed variant of $\sqsubseteq$ up to
(a possibly infinite) time interval. Intuitively, we write $M
\sqsubseteq_{m..n} N$ if the \CPS{} $N$ simulates the execution traces of
$M$, except for the time interval $m..n$.

\begin{definition}[Trace preorder up to a time interval]
\label{Time-bounded-trace-equivalence}
 We write $M \sqsubseteq_{m..n} N$, 
for  $m \in \mathbb{N}^+$ and $ n \in \mathbb{N}^+ \cup
\infty$, with $m \leq n$,  if the following 
conditions hold: 
\begin{itemize}
\item $m$ is the minimum integer for which there is a trace $t$, with $\#\tick(t)=m-1$, such that $M \trans{t}$ and $N \not\!\!\Trans{\hat{t}}$;
%
%

\item $n$ is the infimum element of $\mathbb{N}^+ \cup \infty$, $n \geq m$, such that whenever $M \trans{t_1}M'$, with $\#\tick(t_1)=n-1$, there is $t_2$, with 
$\#\tick(t_1)=\#\tick(t_2)$, such that $N \trans{t_2}N'$, for some $N'$, and  $M' \sqsubseteq N'$.
\end{itemize}
\end{definition}
In the second item, note that 
 $\mathrm{inf}(\emptyset)=\infty$. 
Thus,  if $M \sqsubseteq_{m..\infty} N $  then $N$ simulates $M$ only in the first $m-1$ time slots. 

Finally, note that we could have equipped \cname{} with a
\emph{(bi)simulation-based} behavioural semantics rather than a
trace-based one, as done in~\cite{LaMe17} for a core of \cname{} with no
security features; however, our trace semantics is simpler than
(bi)simulation and it is sensitive to deadlocks of \CPS{s}. Thus, it is
fully adequate for the purposes of this paper.


\section{Cyber-Physical Attacks}
\label{sec:cyber-physical-attackers}

In this section, we use \cname{} to formalise a \emph{threat model} where
attacks can manipulate sensor and/or actuator signals in order to drive a
\emph{sound} \CPS{} into an undesired state~\cite{TeShSaJo2015}. 
 An attack may have
different levels of access to physical devices depending on the model
assumed. For example, it might be able to get read access to the sensors
but not write access; or it might get write-only access to the actuators
but not read-access. 
This level of
granularity is very important to model precisely how attacks can affect a
CPS~\cite{Cardenas2015}. For simplicity, in this paper we don't represent attacks on
communication channels as our focus is \nolinebreak on
\nolinebreak  attacks \nolinebreak to \nolinebreak physical 
\nolinebreak devices.

The syntax of our cyber-physical attack is a slight restriction of that
for processes: in terms of the form $\timeout{\pi.P}Q$, we require $\pi
\in \{ \wact v {\mbox{\Lightning}p}, \rsens x {\mbox{\Lightning}p} \}$.
Thus, we provide a syntactic way to distinguish attacks from genuine
processes.


\begin{definition}[Honest system]
A \CPS{} $\confCPSS E {\cal S} P$ is \emph{honest} if $P$ is honest, where 
$P$ is honest if it does not contain prefixes of the form $\wact v {\mbox{\Lightning}p}$ or $\rsens x {\mbox{\Lightning}p} $. 
\end{definition}


We group cyber-physical attacks in classes that describe both the malicious activity and the timing aspects of the attack.
Thus, let $\I$ be a set of malicious activities on a number of physical
devices, $m \in \mathbb{N}^+$ be the time slot when an attack starts, and
$n \in \mathbb{N}^+ \cup \infty$ be the time slot when the attack ends, we
say that an \emph{attack $A$ is of class $C \in [ \I \rightarrow {\cal
P}(m..n) ]$} if: (1) all possible malicious actions of $A$ coincide with
those contained in $\I$, (2) the first of those actions may occur in the
$m$-th time slot (i.e., after $m{-}1$ $\tick$-actions), and (3) the last
of those actions may occur in the $n$-th time slot (i.e., after $n{-}1$
$\tick$-actions). Actually, for $\iota \in \I$, $C(\iota)$ returns a (possibly
empty) set
of time instants when the attack tamper with the device $\iota$; this set
is contained in $m..n$. A class $C$ is always a total function. 
\begin{definition}[Class of attacks]
\label{def:attacker-class}	
Let ${\cal I} = \{ \mbox{\Lightning}p ? : p \in {\cal S} \cup
{\cal A} \} \cup \{ \mbox{\Lightning}p ! \, : p \in {\cal S} \cup {\cal A}
\}$ be the set of all possible \emph{malicious activities} on 
physical devices. Let 
 $m \in \mathbb{N}^{+}$,    $n \in \mathbb{N}^+ \cup
\infty$, with $m \leq n$. 
An attack $A$ is of \emph{class}
$C \in [ \I \rightarrow {\cal P}(m..n) ]$ whenever:
\begin{itemize}
\item 
{\small \begin{math}
C(\iota)=
\{  k  :  A \trans{t}\trans{\iota v} A' \,   \wedge \, 
k = \#\tick(t)+1  
\}
\end{math}},
 for  $\iota \in \I$; 
\item {\small 
\begin{math}
m = \inf \{ \, k \, : \, k \in C(\iota) 
  \, \wedge \, \iota \in \I \,  \}
\end{math};
}
\item {\small 
\begin{math}
n = \sup \{ \, k \, : \, k \in C(\iota) 
  \, \wedge \, \iota \in \I \,  \}. 
\end{math}
}
\end{itemize}
\end{definition}


According to the approach proposed 
in~\cite{FM99},
we can say that an attack $A$ affects a \emph{sound} \CPS{}  $M$ if the execution of the composed system $M \parallel A$ differs from that of the original system $M$, in an observable manner. Basically, a cyber-physical attack can influence the system under attack in at least two different ways:
\begin{itemize}
\item The system $M \parallel  A$ might deadlock when $M$ may not; this 
means that the attack $A$ 
affects the \emph{availability} of the system. We recall that 
in the context of \CPS{s}, deadlock is a particular severe 
physical event. 
\item The system $M \parallel  A$ might have non-genuine execution traces containing observables (violations of safety conditions or 
communications on channels) that cannot be reproduced by $M$; here the attack affects the \emph{integrity} of the system behaviour.
\end{itemize}

\begin{definition}[Attack tolerance/vulnerability]
\label{def:attack-tolerance}
Let $M$ be an honest and sound  \CPS{}. We say that $M$ 
is \emph{tolerant to an attack $A$} if 
$M \parallel A \, \sqsubseteq \,  M$. 
We say $M$ is \emph{vulnerable to
 an attack $A$} if there is a 
time interval $m..n$, with $m\in \mathbb{N}^+$ and 
 $n \in \mathbb{N}^+ \cup \infty$, such that $M \parallel A \: 
\sqsubseteq_{m..n} \,  M$. 
\end{definition}

%

Thus, if a system $M$ is vulnerable to an attack $A$ of class $C \in [\I
\rightarrow {\cal P}(m..n)]$, during the time interval $m'..n'$, then the
attack operates during the interval $m..n$ but it influences the system
under attack in the time interval $m'..n'$ (obviously, $m' \geq m$).
If $n'$ is finite we have a \emph{temporary attack}, otherwise we have a
\emph{permanent attack}. 
Furthermore, 
if $m'-n$ is big enough and
$n-m$ is small, then we have a quick nasty attack that affects the system
late enough to allow \emph{attack camouflages}~\cite{GGIKLW2015}. On the
other hand, if $m'$ is significantly smaller than $n$, then the attack
affects the observable behaviour of the system well before its termination
and the \CPS{} has good chances of undertaking countermeasures to stop the
attack. Finally, if $M \parallel A \trans{t}\trans{\dead}$, for some 
trace $t$, the  attack $A$ is called \emph{lethal}, as it 
is capable to
halt (deadlock) the \CPS{} $M$. This is obviously a permanent attack. 

Note that, according to \autoref{def:attack-tolerance}, the tolerance (or
vulnerability) of a \CPS{} also depends on the capability of the
$\mathit{IDS}$ component to detect and signal undesired physical
behaviours. In fact, the $\mathit{IDS}$ component might be designed to
detect abnormal physical behaviours going well further than deadlocks and
violations of safety conditions. 

In the following, we say that an attack is \emph{stealthy} if it is able
to drive the \CPS{} under attack into an incorrect physical state (either
deadlock or violation of the safety conditions) without being noticed by
the $\mathit{IDS}$ component.


In the rest of this section, we present a number of different attacks to
the \CPS{} $\mathit{Sys}$ described in \autoref{exa:sys}.

\begin{example}
\label{exa:att:DoS}
Consider the following \emph{DoS/Integrity attack}
on the (controller of) the actuator $\mathit{cool}$, of class $C \in [\I
\rightarrow {\cal P}(m..m)]$ with $C(\mbox{\Lightning}cool?)=C(\mbox{\Lightning}cool!)=\{ m \} $ and $C(\iota) = \emptyset$, for $\iota \not \in \{
\mbox{\Lightning}cool? , \mbox{\Lightning}cool! \}$; we call the attack $A_m$:
\begin{displaymath}
\tick^{m{-}1}.  \mathsf{timeout} \lfloor {\rsens x {\mbox{\Lightning}cool}}.\mathsf{if}   (x{=}{\off})   \{ {\wact {\off}{\mbox{\Lightning}cool}} \} \rfloor  . 
\end{displaymath}
Here, the attack $A_m$ operates exclusively in the $m$-th time slot, when
it tries to steal the cooling command (on or off) coming from the
controller, and fabricates a fake command to turn off the cooling system.
In practice, if the controller sends a command to turn off the coolant,
nothing bad will happen as the attack will put the same message back. 
When the controller sends (in the $m$-th time slot) a command to turn the
cooling on, the attack will drop the command. 
We recall that the controller will turn on the cooling only if the 
sensed temperature is greater than $10$ (and hence $\mathit{temp}> 9.9$); 
this may happen only if $m > 8$. 
Since the command to turn the cooling on is never re-sent by $\mathit{Ctrl}$, the temperature will continue to rise, and after $4$ time units
the system will violate the safety conditions emitting an action $\unsafe$, while the $\mathit{IDS}$ component will start sending
alarms every $5$ time units, until the whole system deadlocks because the temperature reaches the threshold of $50$ degrees. 

\end{example}

\begin{proposition}
\label{prop:att:DoS}
Let $\mathit{Sys}$ be the \CPS{} defined in \autoref{exa:sys}, and $A_m$
be the attack defined in \autoref{exa:att:DoS}. Then, 
\begin{itemize}
\item $\mathit{Sys} \parallel  A_m \; \sqsubseteq \; \mathit{Sys}$, for $m \leq 8$, 
\item
 $\mathit{Sys} \parallel  A_m \;\: 
\sqsubseteq_{m{+}4..\infty} \; \mathit{Sys}$, for $m > 8$. 
\end{itemize}
\end{proposition}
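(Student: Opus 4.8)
The plan is to split the proof along the case distinction $m \le 8$ versus $m > 8$, and in each case compare the execution traces of $\mathit{Sys} \parallel A_m$ against those of $\mathit{Sys}$, exploiting the tight characterisation of $\mathit{Sys}$'s behaviour that \autoref{prop:sys} and \autoref{prop:X} already give us. The key observation driving everything is that $A_m$ is dormant (only performs $\tick$ actions) for the first $m-1$ time slots, and in the $m$-th slot it intercepts the single actuator-write that $\mathit{Ctrl}$ issues on $\mathit{cool}$ (if any) via rules \rulename{$\mbox{\Lightning}$ActRead$\mbox{\,\Lightning}$} and \rulename{$\mbox{\Lightning}$ActWrite$\mbox{\,\Lightning}$}, re-emitting \off.

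For the first bullet ($m \le 8$), I would argue that the attack is semantically invisible. By \autoref{prop:X}, $\mathit{Sys}$ only turns on the cooling when $\mathit{temp} > 9.9$, which by the evolution law (temperature rises one degree, $\pm\delta$, per slot starting from $0$) cannot happen before slot $9$; hence in slot $m \le 8$ the controller's only possible actuator command on $\mathit{cool}$ is \off (or no command at all, if $\mathit{Ctrl}$ is in a $\tick$ or sensor-read step). The attack's conditional $\mathsf{if}\,(x{=}{\off})$ then matches, and it writes back exactly \off, leaving the environment's actuator function unchanged. So every transition of $\mathit{Sys}\parallel A_m$ is matched step-for-step by $\mathit{Sys}$: the $\tau{:}cool$ synchronisation maps to the plain $\tau$ that \rulename{ActWriteUnsec} would have produced, and all other transitions are literally shared. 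Formally I would set up a relation between configurations of $\mathit{Sys}\parallel A_m$ and of $\mathit{Sys}$ that pairs states with identical environments and "corresponding" processes, and check it is contained in $\sqsubseteq$ by the trace-matching condition of \autoref{Trace-equivalence}; the after-slot-$m$ residual of $A_m$ is $\nil$ (or a timed-out $\nil$), which contributes only $\tick$'s, so it stays invisible forever.

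For the second bullet ($m > 8$), I would first establish that in slot $m$ the controller genuinely tries to turn the cooling \emph{on}: this requires knowing that $\mathit{Sys}$'s control cycle has the temperature exceeding $10$ at slot $m$ — here I need to be a little careful, since $\mathit{Ctrl}$ only samples and reacts on a fixed cycle, so "$m > 8$" should be read together with the scheduling so that slot $m$ is indeed a slot in which $\mathit{Ctrl}$ executes $\wact{\on}{\mathit{cool}}$; I would lean on \autoref{prop:X} and the explicit cycle structure of $\mathit{Ctrl}$/$\mathit{Cooling}$/$\mathit{Check}$ from \autoref{exa:sys} to pin this down. Once the \on command is dropped, the cooling never activates (the controller does not re-send it — it proceeds to $\tick^5.\mathit{Check}$), so by the evolution law $\mathit{temp}$ keeps increasing by $\ge 1-\delta = 0.6$ per slot. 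Since the \emph{unsafe} event fires when $\mathit{stress}$ reaches $5$, and $\mathit{stress}$ increments each slot that $\mathit{temp} > 9.9$, starting from the value of $\mathit{temp}$ at slot $m$ (which is in $(9.9,11.5]$ by \autoref{prop:X}, so already above $9.9$), we get $\mathit{stress}$ climbing $1$ per slot and hitting $5$ exactly $4$ slots later — this is the $m+4$ in the statement. Thus $\mathit{Sys}\parallel A_m$ has a trace with $\#\tick = m+3$ reaching an \unsafe action, which $\mathit{Sys}$ (by \autoref{prop:sys}, whose traces contain only $\tau,\tick$) can never match; and $m+4$ is the \emph{minimum} such slot because up to slot $m+3$ the two systems are still bisimilar-up-to-environment in the sense above (the attack only matters from slot $m$, and the first \emph{observable} divergence is the \unsafe). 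Finally $n' = \infty$: I must show there is \emph{no} finite $n'$ closing the second clause of \autoref{Time-bounded-trace-equivalence}, i.e. the attacked system's divergent traces are never eventually re-simulated — which follows because the temperature monotonically rises to $50$ and the system deadlocks (\rulename{Deadlock}, emitting \dead forever), another observable $\mathit{Sys}$ can never produce, and meanwhile the $\mathit{IDS}$ emits $\OUT{\mathit{alarm}}{\mathsf{high\_temp}}$ which $\mathit{Sys}$ also never does by \autoref{prop:sys}.

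**The main obstacle** I anticipate is the bookkeeping in the second case: precisely tracking the controller's cyclic schedule so that "slot $m$" lands on a $\wact{\on}{\mathit{cool}}$ step, computing the exact arithmetic of how $\mathit{temp}$ and $\mathit{stress}$ evolve under the dropped command (including the $\pm\delta$ nondeterminism — I need the \emph{worst} branch to still reach \unsafe at $m+4$ and verify no branch reaches it earlier), and then showing $n' = \infty$ rigorously, i.e. that the divergence, once it appears, is permanent and never absorbed back into $\sqsubseteq$. The first case, by contrast, is mostly a routine invariant argument that the attack's write is the identity on the environment when $m \le 8$.
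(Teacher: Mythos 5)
Your first bullet is handled essentially as in the paper (cooling cannot be commanded before the $9$-th time slot because $\mathit{temp}$ is at most $7\cdot 1.4=9.8$ after $m-1\leq 7$ ticks, so the attack never intercepts anything harmful), and your overall skeleton for $m>8$ (stolen $\on$ command, $\mathit{stress}$ climbing to $5$, then deadlock making the divergence permanent) is also the paper's. But there is a genuine gap in the second bullet, and it sits exactly where you wave at "bookkeeping": to satisfy the first clause of \autoref{Time-bounded-trace-equivalence} you must \emph{exhibit}, for \emph{every} $m>8$, a trace of $\mathit{Sys}\parallel A_m$ with $m+3$ ticks that $\mathit{Sys}$ cannot match. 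Your suggestion to "read $m>8$ together with the scheduling so that slot $m$ is indeed a slot in which $\mathit{Ctrl}$ executes $\wact{\on}{\mathit{cool}}$" would silently weaken the statement to favourable $m$ only. The paper instead proves this existence for arbitrary $m$ by steering the $\pm\delta$ nondeterminism: $\mathit{temp}=10.1$ is reachable in any slot from $9$ (fastest, $\lceil 10.1/1.4\rceil=8$ ticks) to $18$ (slowest, $\lceil 10.1/0.6\rceil=17$ ticks), and by running full-rate cooling/heating cycles the value $10.1$ recurs with period $10$, so it can be placed in slot $m-1$ for every $m>8$; choosing the sensed value to be exactly $10$ there keeps the cooling off, so that at slot $m$ the temperature is $11.5$, $\mathit{stress}=1$, and the $\on$ command is issued and stolen. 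This construction is the heart of the proof and is absent from your proposal.

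The same omission produces an off-by-one in your arithmetic: from "$\mathit{temp}\in(9.9,11.5]$ at slot $m$" you conclude $\mathit{stress}$ hits $5$ "exactly 4 slots later", but that needs $\mathit{stress}=1$ already at slot $m$, i.e.\ $\mathit{temp}>9.9$ at slot $m-1$ \emph{while the cooling was nevertheless not activated there}, forcing $\mathit{temp}\in(9.9,10.1]$ at slot $m-1$ (sensed $\leq 10$). If $\mathit{stress}=0$ at slot $m$, the first $\unsafe$ comes only at slot $m+5$, so without pinning down this pre-state you do not get minimality at $m+4$. (Incidentally, for the other direction — no observable divergence before slot $m+4$ — you also need the upper bound on $\mathit{stress}$ at slot $m$ coming from \autoref{lem:sys}, plus the fact that alarms and deadlock each need several further slots; your "bisimilar-up-to-environment" remark gestures at this but the quantitative bounds are what actually close it.) So the proposal is not wrong in outline, but the existence-of-a-badly-timed-trace argument for arbitrary $m$ and the precise threshold state at slot $m-1$ are missing ideas, not mere bookkeeping.
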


In this case, the $\mathit{IDS}$ component of $\mathit{Sys}$ is effective
enough to detect the attack with only one time unit delay.

\begin{example}
\label{exa:att:DoS2}
Consider the following \emph{DoS/Integrity attack} to the (controller of) sensor $s_{\mathrm{t}}$, of class $C \in [\I
\rightarrow {\cal P}(2..\infty)]$ such that 
$C( \mbox{\Lightning}s_{\mathrm{t}}?) = \{ 2 \}$, 
$C(\mbox{\Lightning}s_{\mathrm{t}}!) = 2..\infty$ and $C(\iota) = \emptyset$, 
for $ \iota \not \in \{\mbox{\Lightning}s_{\mathrm{t}}!, \mbox{\Lightning}s_{\mathrm{t}}?\}$:
\begin{displaymath}
\begin{array}{rcl} A &  = & \tick.  
\mathsf{timeout}\lfloor \rsens x {\mbox{\Lightning}s_{\mathrm{t}} }.B \langle x 
\rangle    \rfloor{}  \\
B(y) & = &  \mathsf{timeout} \lfloor { \wact {y } {\mbox{\Lightning}s_{\mathrm{t}} }.\tick.B \langle y \rangle } \rfloor
{B \langle y \rangle } 
\end{array}
\end{displaymath}
Here, the attack $A$ does the following actions in sequence: (i) she
sleeps for one time unit, (ii) in the following time slot, she reads
the current temperature $v$ at sensor $s_{\mathrm{t}}$, and (iii) for the rest of
her life, she keeps sending the same temperature $v$ to the controller of
$s_{\mathrm{t}}$.
In the presence of this attack, the process $\mathit{Ctrl}$ never activates the $\mathit{Cooling}$ component (and, hence, nor the $\mathit{IDS}$ component, which is the only one which could send an alarm) as it will always detect a temperature below $10$.
Thus, the compound system $\mathit{Sys} \parallel A$ will
move to an unsafe state until the invariant will be violated and the
system will deadlock. Indeed, in the worst scenario, 
after $\lceil \frac{9.9}{1{+}\delta} \rceil =\lceil \frac{9.9}{1.4} \rceil=8$ 
$\tick$-actions (in the $9$-th time slot)
  the value 
of $\mathit{temp}$ will be above $9.9$, and after further $5$ $\tick$-actions
(in the $14$-th time slot)
the system will violate the safety conditions emitting an $\unsafe$ 
action.  After  $ = \lceil \frac{50}{1.4} \rceil=36$ $\tick$-actions, in the $37$-th time slot, the invariant may be broken because the state variable $\mathit{temp}$ may reach $50.4$ degrees, and the system will also emit a $\dead$ action. 
Thus, $\mathit{Sys} \parallel A \; \sqsubseteq _{14..\infty} \; \mathit{Sys}$. 
This is a \emph{lethal} attack, as it causes a 
shut down of the system. It is also a  \emph{stealthy attack}
as it remains unnoticed until the end. 

In this attack, the $\mathit{IDS}$ 
component is completely ineffective as the sensor used by the component
is compromised, and there is not way for the $\mathit{IDS}$ to 
understand whether the sensor is under attack. A more sophisticated $\mathit{IDS}$
might have a representation of the  plant to recognise abnormal evolutions of the sensed temperature. In such case, the $\mathit{IDS}$ might switch on a second sensor, hoping that this
one has not
been compromised yet. Another possibility for the designer of the 
\CPS{} is to secure the sensor. Although this is not always possible, as 
encryption/decryption of all packets depends on  energy constraints
of the device.
\end{example}

Our semantics ensures that secured devices cannot be attacked, as stated by the following proposition.
\begin{proposition}
\label{prop:critical2}
Let $M = \confCPSS E {\cal S} P$ be an honest  and sound \CPS{}.
Let $C \in [\I \rightarrow {\cal P}(m..n)]$, with 
$ \{ p :  C(\mbox{\Lightning}p?) \cup C(\mbox{\Lightning}p!)   \neq \emptyset \} \subseteq \cal S$. Then  $M \parallel A \sqsubseteq M$, for any attack $A$ of class $C$. 
\end{proposition}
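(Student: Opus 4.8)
The plan is to show that any attack $A$ of class $C$ whose malicious activities all target secured devices is effectively invisible to the system $M = \confCPSS E {\cal S} P$, so that $M \parallel A$ and $M$ have exactly the same execution traces. Since $\sqsubseteq$ only requires that $M$ can match every trace of $M \parallel A$, it suffices to prove that every transition of $M \parallel A$ can be mimicked by a transition of $M$ with the same label, and that the resulting pair of reducts again satisfies the same hypotheses (so we can proceed by induction on the length of the trace).

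The key observation is this: the malicious prefixes occurring in $A$ are of the form $\wact v {\mbox{\Lightning}p}$ or $\rsens x {\mbox{\Lightning}p}$ only for $p \in \cal S$, because the hypothesis $\{ p : C(\mbox{\Lightning}p?) \cup C(\mbox{\Lightning}p!) \neq \emptyset \} \subseteq \cal S$ together with \autoref{def:attacker-class} forces every device actually touched by $A$ to lie in $\cal S$. Now inspect the LTS for \CPS{s} in \autoref{tab:lts_systems}: the only rules that involve an attack action $\snda{\mbox{\Lightning}p}{v}$ or $\rcva{\mbox{\Lightning}p}{v}$ — namely \rulename{$\mbox{\Lightning}$SensRead$\mbox{\,\Lightning}$}, \rulename{$\mbox{\Lightning}$ActWrite$\mbox{\,\Lightning}$}, and the $\tau{:}p$ case of \rulename{Tau} (which uses rules \rulename{$\mbox{\Lightning}$SensWrite$\mbox{\,\Lightning}$} and \rulename{$\mbox{\Lightning}$ActRead$\mbox{\,\Lightning}$} from \autoref{tab:lts_processes}) — all carry the side condition $p \not\in \cal S$. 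Hence none of these rules can fire when the attacked device is secured, so the attack component can never perform any of its malicious actions inside $M \parallel A$. Moreover, by the side conditions $P \ntrans{\snda{\mbox{\Lightning}s}{v}}$ in \rulename{SensReadUnsec} and $P \ntrans{\rcva{\mbox{\Lightning}a}{v}}$ in \rulename{ActWriteUnsec} — which concern unsecured devices only, and where the "$P$" is really the whole parallel process $P \parallel A'$ — the presence of $A$ cannot block the honest process $P$ from reading an unsecured sensor or writing an unsecured actuator either, since $A$ only offers $\mbox{\Lightning}p$-actions on secured $p$. Consequently, for every observable action $\alpha \in \{\tau, \out c v, \inp c v, \tick, \dead, \unsafe\}$, the transitions of $M \parallel A$ are exactly the transitions of $M$ with the attack carried along passively; in particular, for $\tick$-transitions rule \rulename{TimePar} lets $A$ age by one slot (it always can, via \rulename{Timeout}/\rulename{Delay}/\rulename{TimeNil}), while the environment $E$ evolves identically in both systems.

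So the induction runs as follows. Suppose $M \parallel A \trans{\alpha} N_1$. By the case analysis above, $\alpha$ is produced by a rule that does not consume any malicious action of $A$, so $N_1 = M' \parallel A'$ where $M \trans{\alpha} M'$ (using the same LTS rule, whose non-attack side conditions — $\operatorname{inv}(E)$, $v \in \operatorname{read\_sensor}(E,s)$, $E' \in \operatorname{next}(E)$, etc. — are unaffected by the presence of $A$) and $A'$ is the residual of $A$ (equal to $A$ itself when $\alpha \neq \tick$, and the one-step-aged attack when $\alpha = \tick$). The residual $A'$ is again an attack whose remaining malicious activities still target only devices in $\cal S$ (the set of devices it can touch only shrinks), so the pair $(M', A')$ satisfies the induction hypothesis and we may continue. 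Iterating over the length of the trace $t$ gives $M \trans{t} M'$, hence $M \parallel A \trans{t} N'$ implies $M \Trans{\hat t} M'$ — in fact $M \trans{t} M'$, which is stronger — establishing $M \parallel A \sqsubseteq M$.

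The only subtle point — and the part of the argument that requires care rather than being purely routine — is the treatment of the negative side conditions in \rulename{SensReadUnsec} and \rulename{ActWriteUnsec}: one must be precise that when $M \parallel A$ is rewritten via \rulename{Par}/\rulename{TimePar} into a process term, the "$P$" appearing in those rules is the composite $P \parallel A$ (or its residual), and verify that $A$ contributes no $\mbox{\Lightning}p$-labelled transition for any \emph{unsecured} $p$ — which is immediate since every malicious prefix in $A$ names a \emph{secured} device. Everything else is a mechanical rule-by-rule check of \autoref{tab:lts_processes} and \autoref{tab:lts_systems}. (Soundness and honesty of $M$ are not strictly needed for $\sqsubseteq$, but they guarantee that $t$ contains no spurious $\dead$ or $\unsafe$ coming from $M$ alone, keeping the statement in line with \autoref{def:attack-tolerance}.)
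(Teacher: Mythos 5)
Your proof is correct and follows essentially the same route as the paper, which simply states ``by induction on the length of the trace'': you carry out that induction, showing rule by rule that all attack-related system rules (\rulename{$\mbox{\Lightning}$SensRead$\mbox{\,\Lightning}$}, \rulename{$\mbox{\Lightning}$ActWrite$\mbox{\,\Lightning}$}, the $\tau{:}p$ case of \rulename{Tau}) are blocked by their $p \not\in \cal S$ side conditions, that the negative premises of \rulename{SensReadUnsec}/\rulename{ActWriteUnsec} remain satisfied in the composed process, and that the residual attack keeps the invariant, so $M$ matches every transition of $M \parallel A$. Your treatment is in fact more detailed than the paper's one-line proof sketch.
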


 Now, let us examine a similar but less severe attack. 
\begin{example}
\label{exa:att:integrity}
Consider the following DoS/Integrity attack  to the controller of sensor $s_{\mathrm{t}}$, of class $C \in [\I 
\rightarrow {\cal P}(1..n)]$, for $n>0$, with $C(\mbox{\Lightning}s_{\mathrm{t}}!)=C(\mbox{\Lightning}s_{\mathrm{t}}?)=1..n$ and $C(\iota) = \emptyset$, for 
$\iota \not \in \{\mbox{\Lightning}s_{\mathrm{t}}!,\mbox{\Lightning}s_{\mathrm{t}}?\}$:  
\begin{displaymath}
\begin{array}{rcl}
A_n   &  =   &  \mathsf{timeout}\lfloor \rsens x {\mbox{\Lightning}s_{\mathrm{t}} }. 
\mathsf{timeout}\lfloor  \{ \wact {x{-}2} {\mbox{\Lightning}s_{\mathrm{t}} }. \\
&& \tick.A_{n-1 } \rfloor {A_{n-1 }} \rfloor {A_{n-1 }}
\end{array}
\end{displaymath}
with $ A_0  = \nil$. 
In this attack, for $n$ consecutive time slots, 
$A_n$ sends to the controller the current sensed temperature decreased by 
an offset $2$. The effect of this attack on the system depends on the
duration $n$ of the attack itself: 
\begin{itemize}
\item for $n \leq 8$, the attack is harmless as the variable $\mathit{temp}$ may not reach a (critical) temperature above $9.9$;
\item for $n=9$, the variable $\mathit{temp}$ might reach a temperature above $9.9$ in the $9$-th time slot, and the attack would delay the activation of the cooling system of one time slot; as a 
consequence, the system  might get into an unsafe state in the time 
slots $14$ and $15$, but no alarm will be fired. 
\item for $n \geq 10$, the system may get into an unsafe state in the time slot $14$ and in the following $n-7$ time slots; this is not a \emph{stealthy attack} as the $\mathit{IDS}$ will fire the alarm at most two time slots later (in the $16$-th time slot); this is a \emph{temporary attack} which ends in the time slot $n+7$.
\end{itemize}
\end{example}

\begin{proposition}
\label{prop:att:integrity}
Let $\mathit{Sys}$ be the \CPS{} defined in \autoref{exa:sys}, and $A_n$ be the attack defined in \autoref{exa:att:integrity}. Then:
\begin{itemize}
\item $\mathit{Sys} \parallel  A_n \, \sqsubseteq \, \mathit{Sys}$, for $n \leq 8$,
\item $\mathit{Sys} \parallel  A_n \; \sqsubseteq_{14..15} \; \mathit{Sys}$, for $n =9$, 
\item $\mathit{Sys} \parallel A_n \; \sqsubseteq_{14..n{+}7} \; \mathit{Sys}$, for $n 
\geq 10$.
\end{itemize}
\end{proposition}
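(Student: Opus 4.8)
The plan is to analyse the behaviour of the compound system $\mathit{Sys} \parallel A_n$ by tracking, round by round, the possible values of the state variable $\mathit{temp}$ and the \emph{sensed} value delivered to $\mathit{Ctrl}$, which under the attack is the true sensor reading decreased by the offset $2$ for the first $n$ slots and then the genuine reading thereafter. The key observation (already available from the reasoning behind \autoref{prop:X}) is that, starting from $\statefun(\mathit{temp})=0$ with the cooling off, after $k$ $\tick$-actions the value of $\mathit{temp}$ lies in the real interval $[k(1-\delta),\, k(1+\delta)] = [0.6k,\,1.4k]$ until the controller first turns the cooling on. Since $\mathit{Ctrl}$ activates $\mathit{Cooling}$ exactly when it reads a sensed value $>10$, under $A_n$ this happens when $\mathit{temp}-2 > 10$, i.e.\ when $\mathit{temp}>12$, which in the worst case can occur only at the $9$-th time slot (as $1.4\cdot 8 = 11.2 \le 12 < 1.4\cdot 9 = 12.6$), and in no case before. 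This is exactly what separates the three regimes.

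For the case $n\le 8$: I would show that during the (at most) $n$ slots in which the attack is active, the sensed value $\mathit{temp}-2$ never exceeds $10$, because $\mathit{temp}$ itself stays below $12$ through slot $8$; moreover when the attack ceases, the genuine readings resume and $\mathit{temp}$ is still low enough that the controller's subsequent behaviour coincides (up to the nondeterminism of the uncertainty $\gamma\in[-\delta,\delta]$ and the sensor error $\epsilon$) with one of the genuine executions of $\mathit{Sys}$. Concretely, one exhibits for every trace $t$ of $\mathit{Sys}\parallel A_n$ a matching trace $\hat t$ of $\mathit{Sys}$: since by \autoref{prop:sys} all actions of $\mathit{Sys}$ are in $\{\tau,\tick\}$, and the attack's accesses to the unsecured sensor are turned into $\tau$'s by rules \rulename{$\mbox{\Lightning}$SensWrite$\mbox{\,\Lightning}$}/\rulename{SensReadUnsec} and \rulename{Tau}, the trace $t$ is also a sequence of $\tau$'s and $\tick$'s, and one only has to check that the number of $\tick$'s can be reproduced and that $\mathit{Sys}$ never deadlocks or goes unsafe — which follows from \autoref{prop:X} once one checks the temperature stays in $[0,50]$ and $\mathit{stress}<5$. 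This gives $\mathit{Sys}\parallel A_n \sqsubseteq \mathit{Sys}$.

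For $n=9$ and $n\ge 10$: here the attack does perturb the observable behaviour, so one must (i) pin down the first time slot at which $\mathit{Sys}\parallel A_n$ admits a trace not simulable by $\mathit{Sys}$, and (ii) pin down the last such slot, matching \autoref{Time-bounded-trace-equivalence}. For (i): the earliest divergence is the emission of $\unsafe$, which requires $\mathit{stress}$ to reach $5$, i.e.\ five consecutive slots with $\mathit{temp}>9.9$; tracing the worst-case schedule (cooling delayed by one slot due to the offset) shows $\mathit{temp}$ can first exceed $9.9$ at slot $9$ and stay above it through slots $9$--$13$, so $\unsafe$ first becomes possible at slot $14$, whence $m=14$. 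For (ii): when $n=9$ the delay is a single slot, the controller then cools normally for five rounds, and by \autoref{prop:X}-style bounds $\mathit{temp}$ drops back so that $\mathit{stress}$ returns to $0$ after slot $15$ and no alarm is ever fired (the $\mathit{IDS}$ only checks after the five cooling rounds, when the temperature is already back in range); hence the last divergent slot is $15$ and $n'=15$. When $n\ge 10$ the offset keeps the sensed value low for longer, so the cooling keeps being delayed/suppressed and $\mathit{stress}$ stays at $5$; the last slot in which $\mathit{temp}>9.9$ can be forced is slot $n+6$, so $\unsafe$ last appears at slot $n+7$, and after the attack stops the system recovers to a genuine execution, giving $n'=n+7$. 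In both cases one then verifies the two bullet conditions of \autoref{Time-bounded-trace-equivalence}: that $m'$ (here $14$) is minimal, and that from slot $n'$ onward every reachable state of $\mathit{Sys}\parallel A_n$ is matched by some reachable state of $\mathit{Sys}$ with $M'\sqsubseteq N'$ — the latter reusing the $n\le 8$ argument applied to the post-attack residuals.

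The main obstacle I expect is the careful worst-case/best-case interval arithmetic: because both the plant uncertainty $\gamma\in[-\delta,\delta]$ and the sensor error $\epsilon$ contribute nondeterminism, one must argue over \emph{all} admissible traces simultaneously — showing, for the lower threshold results ($n\le 8$), that \emph{no} trace triggers cooling early, and for the interval results, that the claimed endpoints $14$, $15$, $n+7$ are exactly the infimum/supremum over all traces, neither slack nor tight-by-luck. Keeping the bookkeeping of $\mathit{temp}$, $\mathit{stress}$, the cooling phase counter, and the $\mathit{IDS}$ synchronisation aligned across these cases is the delicate part; the structural/bisimulation-style matching against $\mathit{Sys}$ is then routine given \autoref{prop:sys} and \autoref{prop:X}.
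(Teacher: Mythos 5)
Your treatment of the cases $n\le 8$ and $n=9$ follows essentially the same route as the paper (attack expires before the sensed value can ever exceed $10$; one-slot delay of the cooling forcing $\mathit{stress}=5$ exactly at slots $14$ and $15$, with no alarm since the temperature at the $\mathit{IDS}$ check is at most $9.6$), and those parts are fine up to minor interval-arithmetic slips (the activation threshold under the attack is $\mathit{temp}>11.9$ once the sensor error $\epsilon=0.1$ is accounted for, and the earliest slot at which it can be crossed is the $10$-th, not the $9$-th).

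The genuine gap is in the case $n\ge 10$. You assert that ``the cooling keeps being delayed/suppressed and $\mathit{stress}$ stays at $5$'' for the duration of the attack, and from this read off that $\mathit{temp}>9.9$ can last be forced at slot $n+6$, giving $n'=n+7$. This premise is false: with the cooling off the temperature grows by at least $1-\delta=0.6$ per slot, so it cannot be held below the (attacked) activation threshold for an arbitrarily long attack; as soon as $\mathit{temp}$ exceeds roughly $12.1$ the offset reading exceeds $10$ and the controller \emph{does} switch the cooling on, after which the temperature drops well below $9.9$ and $\mathit{stress}$ resets to $0$. (This is also why, by \autoref{lem:sys2}, $\mathit{temp}$ stays in $[0,13.5]$ and the attack is not lethal.) Consequently the claim that the divergence window extends exactly to $n+7$ needs a construction you do not provide: the paper exhibits a periodic trace in which $\mathit{temp}$ equals $12$ at some slot $m$ with $9<m\le n$ and again at $m+10i$ (cool from $12$ down to $5$ at rate $1+\delta$, reheat at rate $1+\delta$), so that at the final attack slot $n$ the sensed value can be chosen just below $10$, the last cooling cycle is delayed one slot, $\mathit{stress}$ reaches $5$, the $\mathit{IDS}$ check at slot $n+6$ still sees a temperature above $10$ (so an alarm, absent in $\mathit{Sys}$ by \autoref{prop:sys}, is emitted), and the unsafe state persists through slot $n+7$. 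Symmetrically, the tightness of the upper bound requires the quantitative bound $\mathit{temp}\le 13.5$ to show that every trace recovers (temperature $\le 9.9$, $\mathit{stress}$ reset) by slot $n+7$, so that no $\unsafe$, alarm or deadlock can occur later; your ``the system recovers to a genuine execution'' glosses over exactly this step.
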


\subsection{A technique for proving attack tolerance/vulnerability}
In this subsection, we provide sufficient criteria to prove attack
tolerance/vulnerability to attacks of an arbitrary class $C$. Actually, we
do more than that, we provide sufficient criteria to prove attack
tolerance/vulnerability to all attacks of a class $C'$ which is somehow
``weaker'' than a given class $C$.
\begin{definition} Let $C_1, C_2 \in [ \I \rightarrow {\cal P}(m..n) ]$ be 
two classes of attacks. We say that $C_1$ is \emph{weaker} than $C_2$, written 
$C_1 \preceq C_2$, if $C_1(\iota) \subseteq C_2(\iota)$, for any $\iota 
\in \I$.
\end{definition}

 The idea is to define a
notion of \emph{most powerful attack} (also called \emph{top attacker}) of a 
given class $C$,  such 
that, if a \CPS{} $M$ tolerates the most powerful attack of class
$C$ then it also tolerates \emph{any} attack  of 
class $C'$, with $C' \preceq C$. 
We will provide a similar condition for 
attack vulnerability: let  $M$ be a \CPS{}  
vulnerable to $\mathit{Top}(C)$ in the time interval $m_1..n_1$;
then, for any attack $A$ of class $C'$, with $C' \preceq C$, if $M$ is 
vulnerable to $A$ then it is so for   a smaller time interval $m_2..n_2 \subseteq m_1..n_1$.

Our notion of top attacker has two extra ingredients with respect to the
cyber-physical attacks seen up to now: (i) \emph{nondeterminism}, and (ii)
time-unguarded recursive processes. This extra power of the top attacker
is not a problem as we are looking for sufficient criteria.

For what concerns nondeterminism, we assume a generic procedure $\mathit{rnd}()$ that given an arbitrary 
set ${\cal Z}$ returns  an element of ${\cal Z}$  chosen in a nondeterministic manner.
This procedure allows us to express \emph{nondeterministic choice\/},  
  $P \oplus Q$,  as an abbreviation for the process 
$\ifelse {\mathit{rnd}(\{\true,\false\})}  P Q $. 
Thus, let  $\iota \in  \{ \mbox{\Lightning}p ? : p \in {\cal S} \cup
{\cal A} \} \cup \{ \mbox{\Lightning}p ! \, : p \in {\cal S} \cup {\cal A}
\}$,  $m \in \mathbb{N}^{+}$, $n \in \mathbb{N}^{+}  \cup
\infty$, with $m \leq n$, and ${\cal T} \subseteq m..n$, 
we define the attack process $ \mathit{Att}( \iota , k, {\cal T})$  
 as the attack which may achieve the malicious activity $\iota$, at the time 
slot $k$, and which tries to do the same in all subsequent time slots
of ${\cal T}$. Formally, 
\begin{displaymath}
\begin{array}{l}
\mathit{Att}( \mbox{\Lightning}p?, k, {\cal T})  = \\ 
\Q \ifthen {k \in {\cal T}} {  
(\mathsf{timeout}\lfloor\rsens x {\mbox{\Lightning}p?}.\mathit{Att}( \mbox{\Lightning}p?, k, {\cal T}) \rfloor \\
 \Q\q {\mathit{Att}( \mbox{\Lightning}p?, k{+}1, {\cal T})})  \, \oplus \,   \tick.  \mathit{Att}( \mbox{\Lightning}p? , k{+}1, {\cal T})} \: \mathsf{else}\:
\\
 \Q\q  \ifelse { k < \mathrm{sup}({\cal T})}
{\tick. \mathit{Att}( \mbox{\Lightning}p?, k{+}1, {\cal T})}
{\nil}\\[2pt]
\mathit{Att}( \mbox{\Lightning}p!, k, {\cal T})  =\\ 
\Q  \ifthen {k \in {\cal T}} {
(\mathsf{timeout} \lfloor \wact {\mathit{rnd}(\mathbb{R})} {\mbox{\Lightning}p!}.\mathit{Att}( \mbox{\Lightning}p!, k, {\cal T}) \rfloor \\
 \Q \q
{\mathit{Att}( \mbox{\Lightning}p!, k{+}1, {\cal T})}) \, \oplus \,   \tick.  \mathit{Att}( \mbox{\Lightning}p! , k{+}1, {\cal T})} \: \mathsf{else} 
\\
 \Q\q \ifelse {k < \mathrm{sup}({\cal T})}
{\tick. \mathit{Att}( \mbox{\Lightning}p! , k{+}1, {\cal T})} {\nil}
\enspace . 
\end{array}
\end{displaymath}
Note that for ${\cal T} = \emptyset$ we assume 
$\mathrm{sup}({\cal T})=- \infty$. 

We can now use the definition above to formalise the notion
of most powerful attack of a given  class $C$. 
\begin{definition}[Top attacker]
Let $C \in [ \I \rightarrow {\cal P}(m..n)]$ be a class of attacks. We define 
\begin{center}
\(
 \mathit{Top}(C)  \; = \;  \prod_{\iota \in \I } \mathit{Att}( 
\iota , 1 , C(\iota))
\)
\end{center}
as the most powerful attack, or \emph{top attacker\/},  of class $C$. 
\end{definition}

The following result provides  soundness criteria for attack 
tolerance and attack vulnerability.
\begin{theorem}[Soundness criteria]
\label{thm:sound}
Let $M$ be an honest and sound  \CPS{}, $C$ an arbitrary class
of attacks, and 
$A$  an  attack of a class $C'$, with $C' \preceq C$.
\begin{itemize}
\item If 
 $M \parallel \mathit{Top}(C)  \, \sqsubseteq \, M$ then  
 $M \parallel A \, \sqsubseteq  \, M$. 
\item If 
 $M \parallel \mathit{Top}(C)   \sqsubseteq_{{m_1}..{n_1}}   M$  then  
 either $M \parallel A     \sqsubseteq   M$ or
 $M \parallel A    \sqsubseteq_{{m_2}..{n_2}}   M$, 
with  $m_2..n_2 \subseteq m_1 .. n_1$. 
\end{itemize}
\end{theorem}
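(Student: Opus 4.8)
The plan is to reduce the statement about an arbitrary attack $A$ of class $C' \preceq C$ to a statement about $\mathit{Top}(C)$ by establishing a simulation-type result: every execution trace of $M \parallel A$ can be matched, step for step (up to $\tau$-moves), by an execution trace of $M \parallel \mathit{Top}(C)$. The key observation is that $\mathit{Top}(C)$ is built so that at each time slot $k$ and for each malicious capability $\iota \in \I$ with $k \in C(\iota)$, the component $\mathit{Att}(\iota,k,C(\iota))$ can \emph{nondeterministically} choose either to perform the malicious action $\iota$ (with an arbitrary fabricated value, thanks to $\mathit{rnd}(\mathbb{R})$) or to let time pass via $\tick$. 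Since $C' \preceq C$ means $C'(\iota) \subseteq C(\iota)$ for every $\iota$, whatever $A$ does in slot $k$ — whether it reads/writes a compromised device or merely ticks — is among the options available to the corresponding component of $\mathit{Top}(C)$. The products $\prod_{\iota \in \I}$ in $\mathit{Top}(C)$ and the \rulename{TimePar}/\rulename{Par} rules let these choices be made independently and in parallel, so $\mathit{Top}(C)$ can shadow $A$'s behaviour on \emph{all} devices simultaneously.

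\textbf{Key steps, in order.} First I would prove the core lemma: if $M$ is honest and $A$ is an attack of class $C' \preceq C$, then $M \parallel A \sqsubseteq M \parallel \mathit{Top}(C)$. The proof goes by induction on the length of the trace $t$ with $M \parallel A \trans{t} M'''$; the crucial case analysis is on the first action. For a $\tick$ action, rule \rulename{Time} applies to the whole system, $A$ must itself do a $\tick$ (via \rulename{Timeout} or \rulename{TimePar}), and each $\mathit{Att}(\iota,k,\cdot)$ component of $\mathit{Top}(C)$ takes the $\tick.\mathit{Att}(\iota,k{+}1,\cdot)$ branch (the right summand of the $\oplus$, or the \rulename{Timeout} reduct, or the \rulename{else} branch if $k \notin C(\iota)$), keeping indices synchronised with the global clock. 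For a $\tau$ action arising from a malicious interaction $\tau{:}p$ on an unsecured device $p$ (rules \rulename{$\mbox{\Lightning}$SensRead$\mbox{\,\Lightning}$}, \rulename{$\mbox{\Lightning}$SensWrite$\mbox{\,\Lightning}$}, \rulename{$\mbox{\Lightning}$ActRead$\mbox{\,\Lightning}$}, \rulename{$\mbox{\Lightning}$ActWrite$\mbox{\,\Lightning}$}), $A$ performs some $\snda{\mbox{\Lightning}p}{v}$ or $\rcva{\mbox{\Lightning}p}{v}$; since $A$ is of class $C'$, this happens in a slot $k \in C'(\iota) \subseteq C(\iota)$, so $\mathit{Att}(\iota,k,C(\iota))$ is in the \rulename{then} branch and can take the left summand of the $\oplus$, performing the matching malicious action with the same value $v$ (for writes this uses that $\mathit{rnd}(\mathbb{R})$ can yield $v$; for reads the value comes from $\mathit{read\_sensor}(E,s)$, identical on both sides since the physical environment $E$ and the secured set $\cal S$ are the same). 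Honest communications, sensor/actuator accesses, and conditionals of $M$ itself are propagated trivially by \rulename{Par}/\rulename{Tau}, with $\mathit{Top}(C)$ idle. I would also remark that the preemptiveness of \autoref{prop:Attacker-preemptiveness} behaves uniformly: whenever $A$ can preempt $M$'s access to an unsecured device, so can $\mathit{Top}(C)$, and conversely $\mathit{Top}(C)$'s extra nondeterministic branch always allows it \emph{not} to preempt, so it never introduces behaviours $A$ cannot also exhibit in some run — which is exactly what $\sqsubseteq$ needs (one direction). Once the core lemma is in hand, the first bullet follows immediately: $M \parallel A \sqsubseteq M \parallel \mathit{Top}(C) \sqsubseteq M$ by transitivity of $\sqsubseteq$ (which is routine from \autoref{Trace-equivalence}).

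\textbf{The vulnerability case.} For the second bullet, suppose $M \parallel \mathit{Top}(C) \sqsubseteq_{m_1..n_1} M$. By the core lemma, any trace of $M \parallel A$ that $M$ fails to match is in particular a trace of $M \parallel \mathit{Top}(C)$ that $M$ fails to match; hence the minimal number of ticks before a first mismatch for $M \parallel A$ is $\geq m_1$, giving $m_2 \geq m_1$. For the upper endpoint, the definition of $\sqsubseteq_{m_1..n_1}$ (second item of \autoref{Time-bounded-trace-equivalence}) says that for every trace $t_1$ of $M \parallel \mathit{Top}(C)$ with $\#\tick(t_1)=n_1-1$ there is a matching $t_2$ of $M$ with equal tick count and residuals in the $\sqsubseteq$ relation; applying the core lemma to lift any length-$(n_1{-}1)$-tick trace of $M \parallel A$ into $M \parallel \mathit{Top}(C)$, and then composing with the core lemma again on the residuals ($M' \parallel A' \sqsubseteq M''$ where $M'' \sqsubseteq N'$), we get that $M \parallel A$ is simulated by $M$ from slot $n_1$ onward, whence either there is no mismatch at all ($M \parallel A \sqsubseteq M$) or the first mismatch slot $m_2$ and the recovery slot $n_2$ satisfy $m_1 \leq m_2 \leq n_2 \leq n_1$, i.e. $m_2..n_2 \subseteq m_1..n_1$. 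The main obstacle I anticipate is bookkeeping the synchronisation of the clock index $k$ inside the recursive $\mathit{Att}(\iota,k,\cdot)$ definitions with the global tick count across the interleaving of $\tau$-moves and the nondeterministic $\oplus$: one must check that after any mixed sequence of honest $\tau$'s, malicious $\tau{:}p$'s, and $\tick$'s, the index $k$ in each surviving $\mathit{Att}$-component equals (one plus) the number of elapsed $\tick$'s, so that membership tests $k \in C(\iota)$ on the $\mathit{Top}(C)$ side line up exactly with the slot in which $A$ acts. This is a straightforward but delicate induction on trace length carried alongside the core lemma, strengthening the induction hypothesis to record this index invariant for every component of $\mathit{Top}(C)$.
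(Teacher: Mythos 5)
Your proposal is correct and takes essentially the same route as the paper: your core lemma is exactly the paper's \autoref{lem:top} (that $M \parallel \mathit{Top}(C)$ can replay, with the same tick count, any trace of $M \parallel A$, proved by induction on trace length using the $\oplus$/$\mathit{rnd}$ nondeterminism, the choice of the $\tick$ branch to satisfy the negative premises, and the clock-index bookkeeping you propose as a strengthened invariant, which is the paper's $\mathit{Top}^{h}(C)$). The two bullets are then obtained as in the paper, by transitivity of $\sqsubseteq$ and by comparing the tick counts of mismatching traces against the minimality of $m_1$ and the matching condition at $n_1$.
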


\begin{corollary}
Let $M$ be an honest and sound \CPS{}, and $C$ a class of attacks. If
$\mathit{Top}(C)$ is not lethal for $M$ then any attack $A$ of class $C'$,
with $C' \preceq C$, is not a lethal attack for $M$. If $\mathit{Top}(C)$
is not a permanent attack for $M$ then any attack $A$ of class $C'$, with
$C' \preceq C$, is not a permanent attack for $M$.
\end{corollary}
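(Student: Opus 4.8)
The plan is to derive the corollary from Theorem~\ref{thm:sound}, using in addition --- for the lethality clause only --- the \emph{domination property} that underlies that theorem's proof: for every attack $A$ of a class $C'$ with $C' \preceq C$ one has $M \parallel A \sqsubseteq M \parallel \mathit{Top}(C)$. Intuitively this holds because $\mathit{Top}(C)$ can replay $M \parallel A$ slot by slot: every malicious action performed by $A$ on a device $\iota$ occurs at a time lying in $C'(\iota) \subseteq C(\iota)$, so the matching nondeterministic branch of $\mathit{Att}(\iota,\cdot,C(\iota))$ is enabled, $\mathit{rnd}(\mathbb{R})$ supplies whatever value $A$ writes, and at the slots where $A$ stays idle $\mathit{Top}(C)$ takes the $\tick$-branch of each of its components; the extra internal steps thereby introduced are harmless, since weak transitions absorb $\tau$. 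I also recall that $A$ is \emph{lethal} for $M$ iff $M \parallel A \trans{t}\trans{\dead}$ for some $t$, and \emph{permanent} for $M$ iff $M$ is vulnerable to $A$ with an unbounded offending interval, i.e.\ $M \parallel A \sqsubseteq_{m..\infty} M$ for some $m$; consequently ``$A$ is not permanent'' means either $M \parallel A \sqsubseteq M$ (tolerance) or $M \parallel A \sqsubseteq_{m..n} M$ with $n < \infty$.

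For the permanent clause I would reason as follows. Assume $\mathit{Top}(C)$ is not a permanent attack for $M$. If $M \parallel \mathit{Top}(C) \sqsubseteq M$, the first item of Theorem~\ref{thm:sound} gives $M \parallel A \sqsubseteq M$, so $A$ is tolerated and \emph{a fortiori} not permanent. Otherwise $M \parallel \mathit{Top}(C)$ is vulnerable --- and for a sound honest system being vulnerable is exactly being non-tolerant, so no third case arises --- say $M \parallel \mathit{Top}(C) \sqsubseteq_{m_1..n_1} M$; not being permanent forces $n_1 < \infty$, and the second item of Theorem~\ref{thm:sound} yields either $M \parallel A \sqsubseteq M$ (done) or $M \parallel A \sqsubseteq_{m_2..n_2} M$ with $m_2..n_2 \subseteq m_1..n_1$, whence $n_2 \le n_1 < \infty$ and $A$ is at most a temporary attack, hence not permanent.

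For the lethality clause I would argue by contraposition: let $A$ of class $C' \preceq C$ be lethal for $M$, so $M \parallel A$ can eventually deadlock. By the domination property $M \parallel A \sqsubseteq M \parallel \mathit{Top}(C)$ together with the fact, noted in the remark following Definition~\ref{Trace-equivalence}, that $P \sqsubseteq Q$ and $P$ eventually deadlocking imply that $Q$ eventually deadlocks, we get that $M \parallel \mathit{Top}(C)$ can eventually deadlock, i.e.\ $\mathit{Top}(C)$ is lethal for $M$; equivalently, if $\mathit{Top}(C)$ is not lethal then no $A$ of class $C' \preceq C$ is lethal. The case analyses above and the bookkeeping with $\#\tick$ are routine; the only substantive ingredient is the domination property, which is essentially the core lemma behind Theorem~\ref{thm:sound}'s proof. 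The main obstacle is therefore to establish this property rigorously --- in particular, to check that $\mathit{Top}(C)$ can track the run of $M \parallel A$ precisely up to the first time slot at which $M \parallel A$'s invariant is violated (replaying $A$'s malicious actions, idling every other attack component), so that the deadlock of $M \parallel A$ --- and, more generally, every observable trace of $M \parallel A$ --- is weakly reproduced by $M \parallel \mathit{Top}(C)$.
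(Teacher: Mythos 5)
Your proposal is correct and follows exactly the route the paper intends: the permanence clause is an immediate case analysis on Theorem~\ref{thm:sound} (using the tolerant/vulnerable dichotomy and finiteness of $n_1$), and the lethality clause is the contrapositive of the trace-domination property, which is precisely Lemma~\ref{lem:top} (if $M \parallel A \trans{t}\trans{\dead}$ then $M \parallel \mathit{Top}(C)$ performs the same trace, so $\mathit{Top}(C)$ is lethal). Since that lemma is already established in the paper, no further work is needed.
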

 

\section{Impact of an attack}  
\label{sec:impact}

In the previous section,
we have grouped cyber-physical attacks by
focussing on the physical devices under attack and the timing aspects of
the attack (\autoref{def:attacker-class}). 
Then, we have provided a formalisation of when a \CPS{} should be considered tolerant/vulnerable to an attack 
(\autoref{def:attack-tolerance}).
%
In this section,
we show that it is important not only to demonstrate the tolerance
(or vulnerability) of a CPS with respect to certain attacks, but also
to evaluate the disruptive impact of those attacks on the
\CPS{}~\cite{GeKiHa2015}. 

The goal of this section is twofold: to provide a \emph{metric} to
estimate the impact of a successful attack on a \CPS{}, and to investigate
possible quantifications of the chances for an attack of being successful
when attacking a \CPS{}.

As to the metric,
we focus on the ability that an attack may
have to drag a CPS out of the correct behaviour modelled by its evolution
map, with the given uncertainty. Recall 
that $\evolmap{}$
is 
\emph{monotone} with respect to the uncertainty.
Thus, an increase of the uncertainty 
may translate into a widening of the range of the possible behaviours of the \CPS{}.

In the following, for $M = \confCPSS E {\cal S} P$, we write $\replaceENV
M {\psi} {\psi'}$ to mean $\confCPSS {\replaceENV E {\psi} {\psi'}} {\cal
S} P$.
\begin{proposition}[Monotonicity] 
\label{prop:monotonicity}
Let $M = \confCPSS E {\cal S} P$ be an honest and sound \CPS{},
 and $\uncertaintyfun{}$ 
the uncertainty  of $E$. 
If $\uncertaintyfun{} \leq \uncertaintyfun'{}$ and $M \trans{t} M'$ then 
$\replaceENV M {\uncertaintyfun{}}  {\uncertaintyfun'{}} \trans{t} \replaceENV 
{M'} {\uncertaintyfun{}}  {\uncertaintyfun'{}}$.
\end{proposition}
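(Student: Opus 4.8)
The plan is to proceed by induction on the length of the trace $t$, showing that each single transition of $M$ can be matched by a transition of $\replaceENV M {\uncertaintyfun{}}{\uncertaintyfun'{}}$ carrying the same label, where the target again differs only in having the larger uncertainty $\uncertaintyfun'{}$ installed. For the inductive step it suffices to establish the one-step claim: if $M \trans{\alpha} M'$ then $\replaceENV M {\uncertaintyfun{}}{\uncertaintyfun'{}} \trans{\alpha} \replaceENV{M'}{\uncertaintyfun{}}{\uncertaintyfun'{}}$; the full trace then follows by composing these steps, noting that replacing $\uncertaintyfun{}$ by $\uncertaintyfun'{}$ commutes with the intermediate configurations.

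The one-step claim is proved by a case analysis on the last rule of \autoref{tab:lts_systems} used to derive $M \trans{\alpha} M'$. For every rule except \rulename{Time}, the uncertainty function plays no role whatsoever: the premises involve only $\invariantfun{}$, $\safefun{}$, $\mathit{read\_sensor}$, $\mathit{update\_act}$, and process transitions, none of which inspect the uncertainty component; moreover $\mathit{update\_act}$ modifies only the actuator function, so it commutes with the substitution $\replaceENV{\cdot}{\uncertaintyfun{}}{\uncertaintyfun'{}}$, and likewise the invariant and safety checks depend only on $\statefun{}$, which is untouched. Hence in all these cases the very same rule instance applies to $\replaceENV M {\uncertaintyfun{}}{\uncertaintyfun'{}}$ and yields exactly $\replaceENV{M'}{\uncertaintyfun{}}{\uncertaintyfun'{}}$.

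The only interesting case is rule \rulename{Time}, where $M = \confCPSS E {\cal S} P \trans{\tick} \confCPSS{E'}{\cal S}{P'} = M'$ with $P \trans{\tick} P'$, $\confCPSS E {\cal S} P \ntrans{\tau}$, $\invariantfun{}(E)$, and $E' \in \mathit{next}(E)$. The process transition $P \trans{\tick} P'$ and the invariant check are independent of the uncertainty, and the absence of $\tau$-moves from $\confCPSS E {\cal S} P$ transfers to $\confCPSS{\replaceENV E {\uncertaintyfun{}}{\uncertaintyfun'{}}}{\cal S} P$ because, as just noted, the $\tau$-generating rules never consult the uncertainty, so the two configurations have the same $\tau$-successors (namely none). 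The point that must be checked is that $E' \in \mathit{next}(\replaceENV E {\uncertaintyfun{}}{\uncertaintyfun'{}})$: writing $E' = \replaceENV E {\statefun{}}{\xi}$ with $\xi \in \evolmap{}(\statefun{},\actuatorfun{},\uncertaintyfun{})$, the hypothesis $\uncertaintyfun{} \leq \uncertaintyfun'{}$ and the monotonicity of $\evolmap{}$ (stated in the text just before \autoref{exa:sys-physical}) give $\xi \in \evolmap{}(\statefun{},\actuatorfun{},\uncertaintyfun'{})$, hence $\replaceENV{E}{\uncertaintyfun'{}}{\uncertaintyfun'{}}$ with state function updated to $\xi$ — which is precisely $\replaceENV{E'}{\uncertaintyfun{}}{\uncertaintyfun'{}}$ — lies in $\mathit{next}(\replaceENV E {\uncertaintyfun{}}{\uncertaintyfun'{}})$. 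Applying rule \rulename{Time} then produces $\replaceENV M {\uncertaintyfun{}}{\uncertaintyfun'{}} \trans{\tick} \replaceENV{M'}{\uncertaintyfun{}}{\uncertaintyfun'{}}$, as required. I expect this \rulename{Time} case, and specifically the appeal to monotonicity of $\evolmap{}$, to be the only genuine content; everything else is a bookkeeping check that the uncertainty component is inert in the remaining rules.
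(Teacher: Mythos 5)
Your proof is correct, and it takes the only natural route here: the paper itself states this proposition without an explicit proof (it is treated as immediate and then invoked, e.g., in the proofs of Theorem~\ref{thm:sound2} and Proposition~\ref{prop:effect2}), and your argument supplies exactly the expected justification. The single substantive ingredient — that the uncertainty component is inert in every rule of the system LTS except \rulename{Time}, where monotonicity of $\evolmap{}$ with respect to the uncertainty (stated just before the running example) guarantees $\mathit{next}(E) $ is preserved, together with the observation that $\tau$-capability is unaffected so the negative premise of \rulename{Time} transfers — is precisely what the paper relies on.
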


However, a wider uncertainty in the model doesn't always
correspond to a widening of the possible behaviours of the \CPS{}. In
fact, this depends on the \emph{intrinsic tolerance} of a \CPS{} with
respect to changes in the uncertainty function.
\begin{definition}[System $\xi$-tolerance] 
An honest and sound \CPS{} $M = \confCPSS E {\cal S} P$, where
 $\uncertaintyfun{}$ is
the uncertainty  of $E$,   is \emph{$\xi$-tolerant}, for $\xi \in \mathbb{R}^{\hat{\cal X}}$ and $\xi \geq 0$, if
\begin{center}
\begin{math}
\xi \, = \, \sup \big\{ \xi'  : \,
 \replaceENV M {\uncertaintyfun{}}  {{\uncertaintyfun{}}{+}{\eta}}
\sqsubseteq  M, 
\text{ for any } 0 \leq \eta \leq  \xi'     
 \big\}.
\end{math}
\end{center}
\end{definition}

Intuitively, if a \CPS{} $M$ has been designed with a given uncertainty
$\uncertaintyfun{}$, but $M$ is actually $\xi$-tolerant, with $\xi > 0$, then the uncertainty $\uncertaintyfun{}$ is somehow underestimated: the real uncertainty of  $M$  is given by $\uncertaintyfun{} {+} \xi$.
This information is quite important when trying to estimate the impact of an attack on a \CPS{}. In fact, if a system  $M$ has been designed with a given uncertainty $\uncertaintyfun{}$, but $M$ is actually $\xi$-tolerant, with $\xi > 0$, then an attack has (at least) a ``room for maneuver'' $\xi$ to degrade the whole \CPS{} without being observed (and hence detected). 
Thus, in general, the tolerance 
 $\xi$ should be as small as possible.
 
Let $\mathit{Sys}$ be the \CPS{} of \autoref{exa:sys}. In the rest 
of the section, with an abuse of notation, we will write  
$\replaceENV {\mathit{Sys}}  {\delta}  {\gamma} $ to denote  $\mathit{Sys}$
where the uncertainty of the variable $\mathit{temp}$ is $\gamma$.
\begin{example}
\label{exa:toll}
The \CPS{} $\mathit{Sys}$ of \autoref{exa:sys} is $\frac{1}{20}$-tolerant. 
This because, 
\begin{math}
\sup \big\{ \xi'  :   \replaceENV {\mathit{Sys}} \delta {\delta {+} \eta}  \sqsubseteq  \mathit{Sys} , \text{ for } 0 \leq \eta \leq  \xi'   \big\}
\end{math} is equal to $\frac{1}{20}$. 
 Since $ \delta + \xi = \frac{8}{20} + \frac{1}{20}=\frac{9}{20}$, the
proof of this statement relies on 
the following proposition.
\end{example}

\begin{proposition}
\label{prop:toll}
Let $\mathit{Sys}$ be the \CPS{} of \autoref{exa:sys}. Then:
\begin{itemize}
\item 
$ \replaceENV {\mathit{Sys}} \delta  \gamma \, \sqsubseteq \, \mathit{Sys} $, for $\gamma \in (\frac{8}{20}, \frac{ 9}{20})$, 
\item 

$ \replaceENV  {\mathit{Sys}} \delta \gamma \, \not\sqsubseteq \,  \mathit{Sys}$, for $\gamma >\frac{ 9}{20}$.  

\end{itemize}
\end{proposition}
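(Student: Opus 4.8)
The plan is to analyse the behaviour of $\mathit{Sys}$ when the uncertainty $\delta = \frac{8}{20}$ of $\mathit{temp}$ is replaced by a larger value $\gamma$, and to pin down the exact threshold at which new observable behaviours appear. The key quantities to track are the extremal values of $\mathit{temp}$ over an execution, as computed in \autoref{prop:X}: with uncertainty $\delta$ the variable $\mathit{temp}$ lies in $(9.9, 11.5]$ when the cooling is switched on and in $(2.9, 8.5]$ when it is switched off. The crucial fact driving \autoref{prop:sys} is that $\mathit{Sys}$ never fires the alarm and never deadlocks; this rests on two thresholds: (i) after $5$ rounds of cooling the sensed temperature has dropped below $10$, so the $\mathit{IDS}$ test $x>10$ fails and no $\mathsf{high\_temp}$ alarm is sent and cooling stops; and (ii) the temperature never leaves $[0,50]$, so the invariant holds and no $\dead$ occurs. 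Both thresholds have slack, and I would compute exactly how much uncertainty can be absorbed before one of them is breached.

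For the first item ($\gamma \in (\tfrac{8}{20}, \tfrac{9}{20})$), I would first invoke \autoref{prop:monotonicity}: since $\delta \le \gamma$, every trace of $\mathit{Sys}$ is a trace of $\replaceENV{\mathit{Sys}}{\delta}{\gamma}$, so $\mathit{Sys} \sqsubseteq \replaceENV{\mathit{Sys}}{\delta}{\gamma}$ trivially; the content is the reverse inclusion $\replaceENV{\mathit{Sys}}{\delta}{\gamma} \sqsubseteq \mathit{Sys}$, i.e.\ that enlarging the uncertainty to $\gamma < \tfrac{9}{20}$ produces no genuinely new observable trace. To show this I would recompute the reachable ranges of $\mathit{temp}$ with uncertainty $\gamma$: in one heating step $\mathit{temp}$ grows by at most $1+\gamma$, in one cooling step it drops by at least $1-\gamma$. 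The worst case for the ``cooling sufficiently'' threshold is: cooling is activated when the \emph{sensed} temperature just exceeds $10$, hence the true $\mathit{temp}$ is at most $10+\epsilon+\gamma = 10.1+\gamma$; after $5$ cooling rounds it is at least $10.1+\gamma - 5(1+\gamma)$ — wait, I must be careful with signs: after the write of $\on$ there are $\tick^5$, i.e.\ five evolution steps with active cooling, so $\mathit{temp}$ decreases by at least $5(1-\gamma)$, landing at most at $(10.1+\gamma) - 0 = $ still need the \emph{upper} bound after cooling, which is at most $11.5' - 5(1-\gamma)$ where $11.5'$ is the new post-activation maximum $\le 10.1 + (1+\gamma)$ (one more heating step can occur before the write takes effect, as in the original analysis). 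The sensed temperature after cooling is then bounded above by this value plus $\epsilon$, and I must check it stays $\le 10$ precisely when $\gamma < \tfrac{9}{20}$, so that the $\mathit{IDS}$ never fires $\mathsf{high\_temp}$; simultaneously the invariant bound $\mathit{temp}\le 50$ must still hold (this one has much larger slack and is not the binding constraint). Since the set of \emph{observable} actions ($\tau,\tick$ only, by the shape of the argument for \autoref{prop:sys}) is unchanged, and since the control flow of $\mathit{Ctrl}$ and $\mathit{IDS}$ depends on $\mathit{temp}$ only through these threshold tests which still resolve the same way, every trace of $\replaceENV{\mathit{Sys}}{\delta}{\gamma}$ is matched by one of $\mathit{Sys}$, giving $\sqsubseteq$.

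For the second item ($\gamma > \tfrac{9}{20}$), I would exhibit a concrete trace of $\replaceENV{\mathit{Sys}}{\delta}{\gamma}$ with no counterpart in $\mathit{Sys}$: choose $\gamma$'s worth of uncertainty to push the post-cooling temperature just above $10$ in \emph{all} five steps (each heating step takes the maximal increment $1+\gamma$ before cooling, each cooling step the minimal decrement $1-\gamma$), so that after $5$ rounds the sensed temperature still exceeds $10$; then the $\mathit{IDS}$ test $x>10$ succeeds and the system performs $\OUT{\mathit{alarm}}{\mathsf{high\_temp}}$ — an observable output on channel $\mathit{alarm}$ — which by \autoref{prop:sys} is impossible for $\mathit{Sys}$. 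Hence $\replaceENV{\mathit{Sys}}{\delta}{\gamma} \not\sqsubseteq \mathit{Sys}$. I expect the main obstacle to be the careful bookkeeping of the one-step delay between a $\wact{\on}{\mathit{cool}}$ action and the first evolution step that actually reflects the cooling (the same subtlety already present in \autoref{prop:X}), together with the off-by-$\epsilon$ shifts between true temperature and sensed temperature at the two decision points; getting the arithmetic to land exactly at the boundary $\tfrac{9}{20}$ — equivalently $\delta + \xi = \tfrac{9}{20}$ with $\xi = \tfrac{1}{20}$, matching \autoref{exa:toll} — is where all the work is, and it is essentially the computation $10 = 10.1 + \gamma_{\max} - \text{(net cooling over the critical window)}$ solved for $\gamma_{\max}$.
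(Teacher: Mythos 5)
There is a genuine gap: you have identified the wrong observable as the one that pins the threshold at $\frac{9}{20}$. Your analysis tracks only the alarm (the $\mathit{IDS}$ test $x>10$ after five cooling steps) and the invariant ($\mathit{temp}\in[0,50]$), but it ignores the safety function, i.e.\ the $\mathit{stress}$ variable and the corresponding observable action $\unsafe$. Redo the arithmetic for the alarm: when the cooling is switched on the temperature is at most $10.1+(1+\gamma)=11.1+\gamma$, so after five cooling steps it is at most $11.1+\gamma-5(1-\gamma)=6.1+6\gamma$, and the sensed value is at most $6.2+6\gamma$; this exceeds $10$ only when $\gamma>\frac{19}{30}$, not $\frac{9}{20}$. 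Hence the distinguishing trace you propose for the second item (an output on channel $\mathit{alarm}$) does not exist for $\gamma\in(\frac{9}{20},\frac{19}{30}]$, so your non-inclusion argument fails precisely in the range adjacent to the claimed boundary; symmetrically, your proof of the first item never verifies that $\replaceENV{\mathit{Sys}}{\delta}{\gamma}$ cannot emit $\unsafe$, which is exactly where the real work and the exact threshold lie.

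The paper's proof goes through the stress bookkeeping (\autoref{lem:sys3} and \autoref{prop:sys:damage}): $\mathit{stress}$ is incremented at every step with $\mathit{temp}>9.9$ and reset otherwise, and it may already be $2$ when the cooling starts, so keeping $\mathit{stress}<5$ requires the temperature to drop to $9.9$ or below within \emph{three} cooling steps, i.e.\ $11.1+\gamma-3(1-\gamma)=8.1+4\gamma\le 9.9$, which holds iff $\gamma\le\frac{9}{20}$. For the first item this yields that every trace of $\replaceENV{\mathit{Sys}}{\delta}{\gamma}$ contains only $\tau$- and $\tick$-actions (no $\dead$, no alarm, no $\unsafe$), hence is matched by $\mathit{Sys}$. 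For the second item the paper exhibits a trace reaching $\mathit{temp}=10.1$ in the 11th slot with sensed value $10$ (so cooling is delayed while $\mathit{stress}$ starts counting), then after three cooling steps the temperature can be $8.1+4\gamma>9.9$, so $\mathit{stress}$ reaches $5$ and the system emits $\unsafe$, an action $\mathit{Sys}$ never performs by \autoref{prop:sys}. Your overall strategy—re-derive the parametric temperature bounds and locate the binding threshold—is the right one, and your observation that the invariant is not binding is correct, but without the stress/safety analysis the computation lands at $\frac{19}{30}$ instead of $\frac{9}{20}$, so the proposal as written proves neither item at the stated boundary.
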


%

\autoref{f:Ex2}
shows an evolution of  $\replaceENV {\mathit{Sys}} {\delta} {\frac{29}{30}}$: the red box denotes a violation
of the safety conditions  
because the cooling cycle wasn't sufficient to
drop the (sensed) temperature below $10$ (here, the controller imposes $5$
further time units of cooling).

\begin{figure}[t]
\centering
\includegraphics[width=5.5cm,keepaspectratio=true,angle=0]{./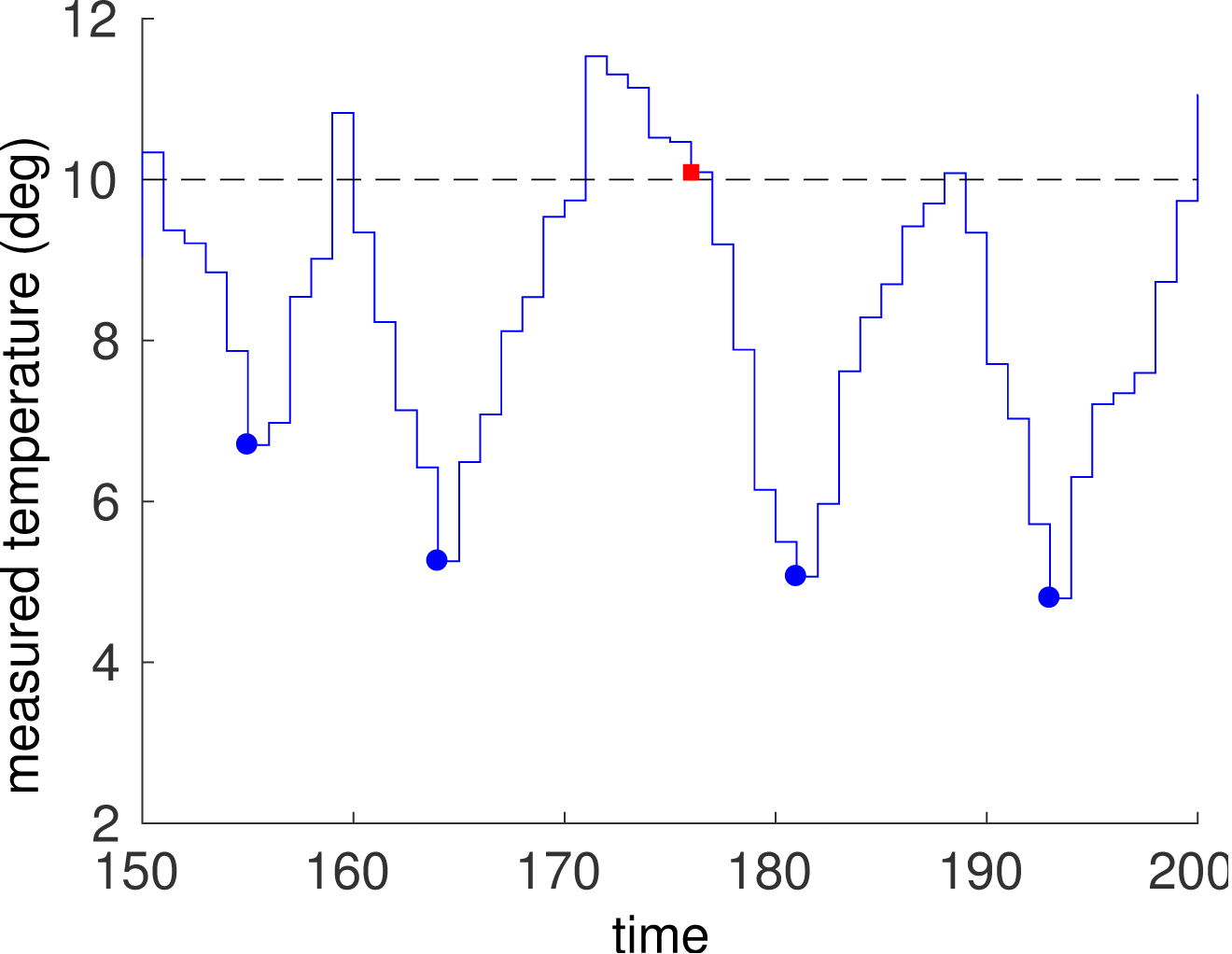}
\caption{Simulation of  $\replaceENV {\mathit{Sys}} {\delta} {\frac{19}{20}}$.}
\label{f:Ex2}
\end{figure}


Everything is in place to define our metric to estimate the 
impact of an attack.
\begin{definition}[Impact] 
\label{def:attack-xi-tolerance}
Let  $M= \confCPSS E {\cal S} P $ be an honest and sound \CPS{}, 
 where
 $\uncertaintyfun{}$ is
the uncertainty  of $E$. 
We say that an attack $A$ has  \emph{definitive impact} $\xi$ on the system $M$ if 
\begin{displaymath}
\xi  = \inf \big\{ \xi'  :  \xi' \in \mathbb{R}^{\hat{\cal X}} 
\: \wedge \: \xi'{>}0 \:  \wedge \: M \parallel  A  \sqsubseteq 
{\replaceENV M {\uncertaintyfun{}}  {{\uncertaintyfun{}}{+}{\xi'}}  }  \big\}.
\end{displaymath}%
It has \emph{pointwise impact} $\xi$ on the system 
$M$ at time $m$ if 
{\small 
\begin{displaymath}
\xi  {=}  \inf \big\{ \xi'  :  \xi' {\in} \mathbb{R}^{\hat{\cal X}} 
\,  \wedge \, M \parallel  A   \sqsubseteq_{m..n} 
\replaceENV M  {\uncertaintyfun{}}  {{\uncertaintyfun{}} {+} {\xi'}},  n {\in} \mathbb{N} {\cup} \infty   \big\}.
\end{displaymath}
}
\end{definition}

Intuitively, with this definition, we can establish either the definitive (and hence maximum) impact of the attack $A$ on the system $M$, or the impact at a specific time $m$.
In the latter case, by definition of $\sqsubseteq_{m..n}$, there are two possibilities: either the impact of the attack keeps growing after time $m$, or in the time interval $m{+}1$, the system under attack deadlocks.
 
The impact of $\mathit{Top}(C)$ provides an upper bound for the impact of all attacks of class $C'$, with $C' \preceq C$.  
\begin{theorem}[Top attacker's impact]
\label{thm:sound2}
Let  $M$ be an honest and sound \CPS{}, and 
 $C$  an arbitrary class of attacks. Let $A$ be an attack of class $C'$, with $C' \preceq C$.  
\begin{itemize}
\item The definitive impact of $\mathit{Top}(C)$ on $M$ is greater than or equal to the definitive impact of $A$ on $M$.
\item If $\mathit{Top}(C)$ has pointwise impact $\xi$ on $M$ at time $m$, and $A$ has pointwise impact $\xi'$  on $M$ at time $m'$, with  
$m' \leq m$, then $\xi' \leq \xi $.
\end{itemize}
\end{theorem}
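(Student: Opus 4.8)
\medskip\noindent\textit{Proof idea.}
The plan is to reduce both items to a single \emph{domination lemma} relating $M\parallel A$ to $M\parallel\mathit{Top}(C)$, and then to propagate the resulting inequalities through infima. The domination lemma --- which is exactly what underpins the proof of Theorem~\ref{thm:sound} --- asserts that for every attack $A$ of a class $C'\preceq C$,
\begin{displaymath}
M \parallel A \;\sqsubseteq\; M \parallel \mathit{Top}(C)\enspace .
\end{displaymath}
I would prove it by a slot‑by‑slot simulation: given an execution of $M\parallel A$, one builds a matching execution of $M\parallel\mathit{Top}(C)$ in which, whenever $A$ performs a malicious action $\iota$ on a device $p$ in the $k$‑th time slot (so $k\in C'(\iota)\subseteq C(\iota)$), the component $\mathit{Att}(\iota,1,C(\iota))$ of $\mathit{Top}(C)$ resolves its nondeterminism to perform the very same action --- replaying, via $\mathit{rnd}(\mathbb{R})$, the value $A$ writes on an actuator --- and, whenever $A$ merely lets time pass, $\mathit{Top}(C)$ takes the $\tick$‑branch of each of its components. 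Since malicious reads leave the physical environment unchanged and malicious writes act on it only through $\mathit{update\_act}$, the two environments stay in lock‑step, and the preemptive effect of malicious prefixes on unsecured devices is reproduced identically; only honesty and soundness of $M$ are used, and both hold by hypothesis. Recall also that $\sqsubseteq$ is transitive, and more generally composes with $\sqsubseteq_{m..n}$ in the sense that $X\sqsubseteq Y$ and $Y\sqsubseteq_{m..n}Z$ imply $X\sqsubseteq Z$ or $X\sqsubseteq_{m'..n'}Z$ with $m'..n'\subseteq m..n$ (passing to a trace‑smaller system can only shrink or delete the mismatch interval).

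For the first item, write $N_{\xi'}\deff\replaceENV M{\uncertaintyfun}{\uncertaintyfun+\xi'}$ and let $S_{\mathit{Top}}=\{\xi'>0 : M\parallel\mathit{Top}(C)\sqsubseteq N_{\xi'}\}$ and $S_A=\{\xi'>0 : M\parallel A\sqsubseteq N_{\xi'}\}$, whose infima are, by \autoref{def:attack-xi-tolerance}, the definitive impacts of $\mathit{Top}(C)$ and of $A$ on $M$. If $\xi'\in S_{\mathit{Top}}$, then by the domination lemma and transitivity $M\parallel A\sqsubseteq M\parallel\mathit{Top}(C)\sqsubseteq N_{\xi'}$, so $\xi'\in S_A$; hence $S_{\mathit{Top}}\subseteq S_A$ and $\inf S_A\le\inf S_{\mathit{Top}}$, which is precisely the claim. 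Note that for $\xi'>0$ the system $N_{\xi'}$ need not be sound, so Theorem~\ref{thm:sound} cannot be applied verbatim with $N_{\xi'}$ in place of $M$; this is why the domination lemma is invoked directly.

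For the second item, I would introduce, for a system $X$ and an integer $k\ge1$, the \emph{agreement depth}
\begin{displaymath}
g_X(k)\;\deff\;\inf\bigl\{\xi' : N_{\xi'}\Trans{\hat t}\text{ whenever }X\trans{t}\text{ with }\#\tick(t)<k\bigr\}\enspace .
\end{displaymath}
Three facts suffice: (a) $g_X$ is non‑decreasing in $k$ (immediate); (b) by \autoref{prop:monotonicity} the trace set of $N_{(\cdot)}$ grows with the uncertainty, so ``agreeing up to $k$ ticks'' is an up‑set in $\xi'$, and the set of $\xi'$ with $X\sqsubseteq_{m..n}N_{\xi'}$ for some $n$ is exactly $[g_X(m{-}1),g_X(m))$; hence whenever $X$ has a finite pointwise impact $\xi$ at time $m$, then $\xi=g_X(m{-}1)$; and (c) by the domination lemma every trace of $M\parallel A$ is weakly matched by $M\parallel\mathit{Top}(C)$, so any $\xi'$ witnessing agreement of $M\parallel\mathit{Top}(C)$ with $N_{\xi'}$ up to $k$ ticks also witnesses that of $M\parallel A$, i.e.\ $g_{M\parallel A}(k)\le g_{M\parallel\mathit{Top}(C)}(k)$. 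Chaining, if $\mathit{Top}(C)$ has pointwise impact $\xi$ at $m$ and $A$ has pointwise impact $\xi'$ at $m'\le m$, then $\xi'=g_{M\parallel A}(m'{-}1)\le g_{M\parallel A}(m{-}1)\le g_{M\parallel\mathit{Top}(C)}(m{-}1)=\xi$.

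The conceptual core is the domination lemma; its proof is routine but laborious, since one must track process states and physical environments together through the LTS rules of \autoref{tab:lts_processes} and \autoref{tab:lts_systems}. The more delicate accounting is in the pointwise part: agreement depth must be phrased with exactly the right weak‑/strong‑trace and $\#\tick$ conventions so that it simultaneously composes along $\sqsubseteq$ and relates cleanly to $\sqsubseteq_{m..n}$, and the degenerate cases have to be disposed of --- $M$ tolerating $A$ outright, or the first‑mismatch slot ``jumping over'' $m$, so that the pointwise impact there is $+\infty$ and the inequality holds vacuously.
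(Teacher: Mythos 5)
Your proposal is correct and rests on the same pillar as the paper's proof: your ``domination lemma'' is exactly the paper's \autoref{lem:top}, which is proved there in the stronger strong-trace form ($M \parallel A \trans{t}$ implies $M \parallel \mathit{Top}(C) \trans{t}$ with matched residuals) by precisely the slot-by-slot replay of malicious actions and $\tick$-branches that you sketch; your treatment of the definitive impact (inclusion of the witness sets plus transitivity of $\sqsubseteq$) coincides with the paper's. Where you diverge is the pointwise case: the paper argues by contradiction, assuming $\xi' > \xi$, choosing intermediate witnesses $\xi \leq \overline{\xi} \leq \xi' \leq \overline{\xi'}$, extracting from $M \parallel A \sqsubseteq_{m'..n'} \replaceENV{M}{\uncertaintyfun{}}{\uncertaintyfun{}+\overline{\xi'}}$ a mismatching trace with $m'{-}1$ ticks, and pushing it via \autoref{prop:monotonicity} down to $\overline{\xi}$ and via \autoref{lem:top} up to $M \parallel \mathit{Top}(C)$, contradicting the minimality of $m$ in \autoref{Time-bounded-trace-equivalence}; your agreement-depth function $g_X$ packages the same three ingredients (domination, monotonicity in the uncertainty, the minimality clause of $\sqsubseteq_{m..n}$) into the direct chain $\xi' = g_{M\parallel A}(m'{-}1) \leq g_{M\parallel A}(m{-}1) \leq g_{M\parallel \mathit{Top}(C)}(m{-}1) = \xi$, which is arguably cleaner and dispenses with the paper's case split between $m'=m$ and $m'<m$. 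Two small caveats, neither fatal: only the infimum identity ``pointwise impact at $m$ equals $g_X(m{-}1)$'' is needed (the literal interval claim $[g_X(m{-}1), g_X(m))$ is stronger than necessary and endpoint-sensitive), and your argument, like the paper's own, tacitly treats the vectors $\xi$ as totally ordered when manipulating infima over $\mathbb{R}^{\hat{\cal X}}$; also, the degenerate situation you dismiss as ``vacuous'' (empty mismatch set at $m'$, impact $+\infty$) is really excluded by the hypothesis that $A$ \emph{has} a pointwise impact at $m'$, rather than making the inequality trivially true.
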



\begin{example}
\label{exa:effect2}
Let us consider the attack $A$ of \autoref{exa:att:DoS2}. 
Then, $A$ has a definitive impact of $8.5$ on the \CPS{} $\mathit{Sys}$ defined in \autoref{exa:sys}. 
 Formally, 
\begin{math}
8.5= \inf \big \{ \, \xi'  :\: \xi'> 0 \: \wedge \: \mathit{Sys} \parallel  A \, \sqsubseteq \, \replaceENV {\mathit{Sys}} \delta  {\delta{+}\xi'} \big \}. 
\end{math}
Here, the attack can prevent the activation of the cooling system, 
and the  temperature will keep growing until the \CPS{} before enters continuously in an unsafe state and eventually deadlocks. 
Since $\delta + \xi = 0.4+8.5=8.9$, the proof of this statement relies on 
the following proposition.
\end{example}

\begin{proposition}
\label{prop:effect2}
 Let $\mathit{Sys}$ be the \CPS{} defined in \autoref{exa:sys}, and $A$ be the attack defined in \autoref{exa:effect2}. Then:
\begin{itemize}
\item  
$\mathit{Sys} \parallel A \, \not \sqsubseteq \, 
\replaceENV {\mathit{Sys}}
 \delta  \gamma$, for $\gamma \in (0.4,8.9)$, 
\item 
 $\mathit{Sys} \parallel A \, \sqsubseteq \, \replaceENV {\mathit{Sys}}
 \delta  \gamma$, for $\gamma >8.9$. 
\end{itemize} 
\end{proposition}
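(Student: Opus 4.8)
The plan is to work directly with the labelled transition semantics, reasoning about the traces of $\mathit{Sys}\parallel A$ and of $N_\gamma$, shorthand for $\replaceENV{\mathit{Sys}}{\delta}{\gamma}$, and to use the monotonicity of $\sqsubseteq$ in the uncertainty (\autoref{prop:monotonicity}): since every trace of $N_{\gamma'}$ is a trace of $N_\gamma$ whenever $\delta\le\gamma'\le\gamma$, both items reduce to understanding $N_\gamma$ for $\gamma$ near the threshold $8.9$. First I would pin down the traces of $\mathit{Sys}\parallel A$, reusing the analysis of \autoref{exa:att:DoS2}: the attack freezes at time $2$ a sensed value $v\le 1.5$, so $\mathit{Cooling}$ — hence the $\mathit{IDS}$ and the channel $\mathit{alarm}$ — is never engaged, and the only state variable that moves is $\mathit{temp}$, which increases monotonically by a quantity in $[1-\delta,1+\delta]=[0.6,1.4]$ per time unit; thus after $k$ ticks $\mathit{temp}\in[0.6k,1.4k]$, the counter $\mathit{stress}$ reaches $5$ no earlier than the $14$-th slot and the invariant may fail (via $\mathit{temp}>50$) no earlier than the $37$-th slot, and — modulo the $\tau$'s produced by the faked reads — every trace of $\mathit{Sys}\parallel A$ is a string of $\tick$'s, with $\unsafe$ optionally emitted at any slot $\ge 14$, followed possibly by $\dead^\omega$ from some slot $\ge 37$, and with no output on $\mathit{alarm}$.

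For the first item ($0.4<\gamma<8.9$, hence $\not\sqsubseteq$) I would exhibit one trace $t$ of $\mathit{Sys}\parallel A$ that $N_\gamma$ cannot weakly match, the natural choice being a maximal-rate run that emits $\unsafe$ at every slot from the $14$-th up to the slot immediately before the deadlock and then $\dead^\omega$. To reproduce $t$, the system $N_\gamma$ would have to keep $\mathit{stress}=5$ over a long stretch — hence $\mathit{temp}>9.9$ there — \emph{without} ever outputting on $\mathit{alarm}$ (which $t$ does not), and then leave the region $[0,50]$ at the deadlock slot of $t$. The key lemma to establish is that this is impossible when $\gamma\le 8.9$: the controller is forced to activate cooling once the sensed temperature exceeds $10$, a cooling phase lasts exactly five ticks and must leave $\mathit{temp}\le 10.1$ at the ensuing $\mathit{Check}$ (otherwise the $\mathit{IDS}$ reads a value $>10$ and signals on $\mathit{alarm}$), which confines $\mathit{temp}$, on an alarm-free run, to a band whose upper end lies well below $50$ for $\gamma\le 8.9$; and the step out of a configuration with $\mathit{temp}>9.9$ changes $\mathit{temp}$ by at least $-1-\gamma$, so $\mathit{temp}$ stays above $9.9-1-\gamma=8.9-\gamma\ge 0$. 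Hence the invariant of $N_\gamma$ still holds at the slot where $t$ emits $\dead$, so $t$ is not a trace of $N_\gamma$ and $\mathit{Sys}\parallel A\not\sqsubseteq N_\gamma$.

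For the second item ($\gamma>8.9$, hence $\sqsubseteq$) I would do the converse: for every trace $t$ of $\mathit{Sys}\parallel A$, build a run of $N_\gamma$ producing $t$, in three phases: (i) while the attacked run keeps $\mathit{temp}$ below $\approx 10$ with cooling off, copy its perturbations verbatim — admissible because $[-\gamma,\gamma]\supseteq[-\delta,\delta]$; (ii) when $t$ requires a high-stress stretch, use the slack $\gamma>1$ to hold $\mathit{temp}$ inside the window $(9.9,10.1]$ with cooling off, so that $\mathit{stress}$ climbs to $5$ and stays there, the controller is never obliged to cool, and the $\mathit{IDS}$ and $\mathit{alarm}$ stay idle; (iii) at the slot where $t$ emits $\dead$, break the invariant of $N_\gamma$ with a single step of perturbation $-\gamma$ that takes $\mathit{temp}$ from just above $9.9$ below $0$ — which is exactly what $\gamma>9.9-1=8.9$ provides. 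Checking that the $\tick$-counts, the slots carrying $\unsafe$, and the terminal $\dead^\omega$ of the constructed run coincide with those of $t$ then yields $N_\gamma\Trans{\hat{t}}$ for every $t$, i.e.\ $\mathit{Sys}\parallel A\sqsubseteq N_\gamma$.

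I expect the bookkeeping around the cooling/$\mathit{IDS}$ machinery to be the hard part. In the $\not\sqsubseteq$ direction one must prove the ``alarm-free $\Rightarrow \mathit{temp}$ confined'' invariant tightly, simultaneously accounting for the five-tick cooling window, the $\mathit{Check}$ reading, and the off-tick between consecutive cooling phases, so that the resulting bound at the deadlock slot is genuinely below $50$; in the $\sqsubseteq$ direction one must keep the simulating run inside the stress window for arbitrarily many slots without ever triggering cooling, and time the final invariant-breaking step precisely at the deadlock slot of the matched trace. The remaining ingredients — $\tau$-absorption of the faked sensor reads, the monotonicity reduction, and the interval arithmetic $\mathit{temp}\in[0.6k,1.4k]$ — are routine.
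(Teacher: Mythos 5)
Your second item is essentially the paper's proof: the same monotonicity reduction to $\gamma$ slightly above $8.9$, the same three-phase simulating run (copy/stall; park $\mathit{temp}$ just above $9.9$ with $\mathit{stress}$ pinned at $5$ and neither cooling nor alarm; then a one-step plunge below $0$), and the same case analysis on the shape of the matched trace. One detail you gloss over: the final drop of $1+\gamma$ is only available when the coolant is \emph{on} (with cooling off the most negative per-tick change is $\gamma-1$, which would need $\gamma>10.9$), so, as the paper does explicitly, you must let the controller switch the cooling on in the slot preceding the deadlock---admissible because $\mathit{temp}>9.9$ allows a sensed value above $10$. That is a fixable omission.

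The genuine gap is in your first item. Your ``key lemma''---that on an alarm-free run of $\replaceENV{\mathit{Sys}}{\delta}{\gamma}$ the temperature is confined to a band well below $50$ because each five-tick cooling phase must end with $\mathit{temp}\le 10.1$ at the $\mathit{Check}$---is not proved, and its justification breaks down precisely in the upper part of the interval, which is the part you must handle (your own monotonicity reduction forces you to prove non-inclusion for $\gamma$ arbitrarily close to $8.9$). For $\gamma>1$ a cooling tick changes $\mathit{temp}$ by a value in $[-1-\gamma,\,-1+\gamma]$, i.e.\ ``cooling'' can \emph{raise} the temperature by up to $\gamma-1$ per tick; moreover cooling can be entered with $\mathit{temp}$ as high as $11.1+\gamma$ (one tick after a slot in which the controller sensed $\le 10$, hence $\mathit{temp}\le 10.1$). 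So within the five cooling ticks, before the $\mathit{Check}$ is ever reached, the temperature can climb to $11.1+\gamma+5(\gamma-1)=6.1+6\gamma$, which exceeds $50$ once $\gamma>43.9/6\approx 7.3$; at that point the invariant fails and only $\dead$ is derivable, so no alarm appears, while $\mathit{temp}>9.9$ throughout keeps $\mathit{stress}=5$ and the $\unsafe$-actions available. Hence the Check/alarm mechanism does not cap the temperature \emph{during} a cooling phase, and your conclusion that the invariant of the simulating system still holds at the slot where $t$ emits $\dead$ does not follow from what you argue. The paper's proof does not rest on such a band lemma: it asserts (via a $\gamma$-parametric variant of \autoref{lem:sys3}) the pointwise bound $\mathit{temp}\in(9.9,\,10.1+\gamma]$ at the slot preceding the matched deadlock on any run that keeps violating safety without alarms, and then concludes because one tick from that interval can reach neither above $50$ ($10.1+\gamma+1+\gamma\le 28.9$) nor below $0$ ($9.9-1-\gamma\ge 0$ for $\gamma\le 8.9$). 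To repair your argument you need a bound of that pointwise strength, established in the regime $\gamma>1$ where cooling no longer cools; the appeal to ``otherwise the IDS signals on $\mathit{alarm}$'' does not provide it.
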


\autoref{def:attack-xi-tolerance} provided an instrument to estimate the
impact of a successful attack. However, there is at least another question
that 
a \CPS{} designer could ask: ``Is there a way to
estimate the chances that an attack will be successful during the
execution of my \CPS{}?'' To paraphrase in a more operational manner: how
many execution traces of my \CPS{} are prone to be attacked by a specific
attack?

For instance, consider again the simple attack $A_m$  proposed in
\autoref{exa:att:DoS}:
\begin{displaymath}
\tick^{m{-}1}.  
\timeout{\rsens x {\mbox{\Lightning}cool}.
\mathsf{if}  \, 
(x{=}{\off}) \,  { \{ \wact{\off}{\mbox{\Lightning}cool}} \}} {}.
\end{displaymath}
Here, in the $m$-th time slot the attack tries to eavesdrop a
command 
to turn on the cooling. The attack is very
quick and condensed in a single time slot. The question is: what are
the chances of success of such a quick attack?


\begin{figure}[t!]
\centering
\includegraphics[width=5.5cm,keepaspectratio=true,angle=0]{./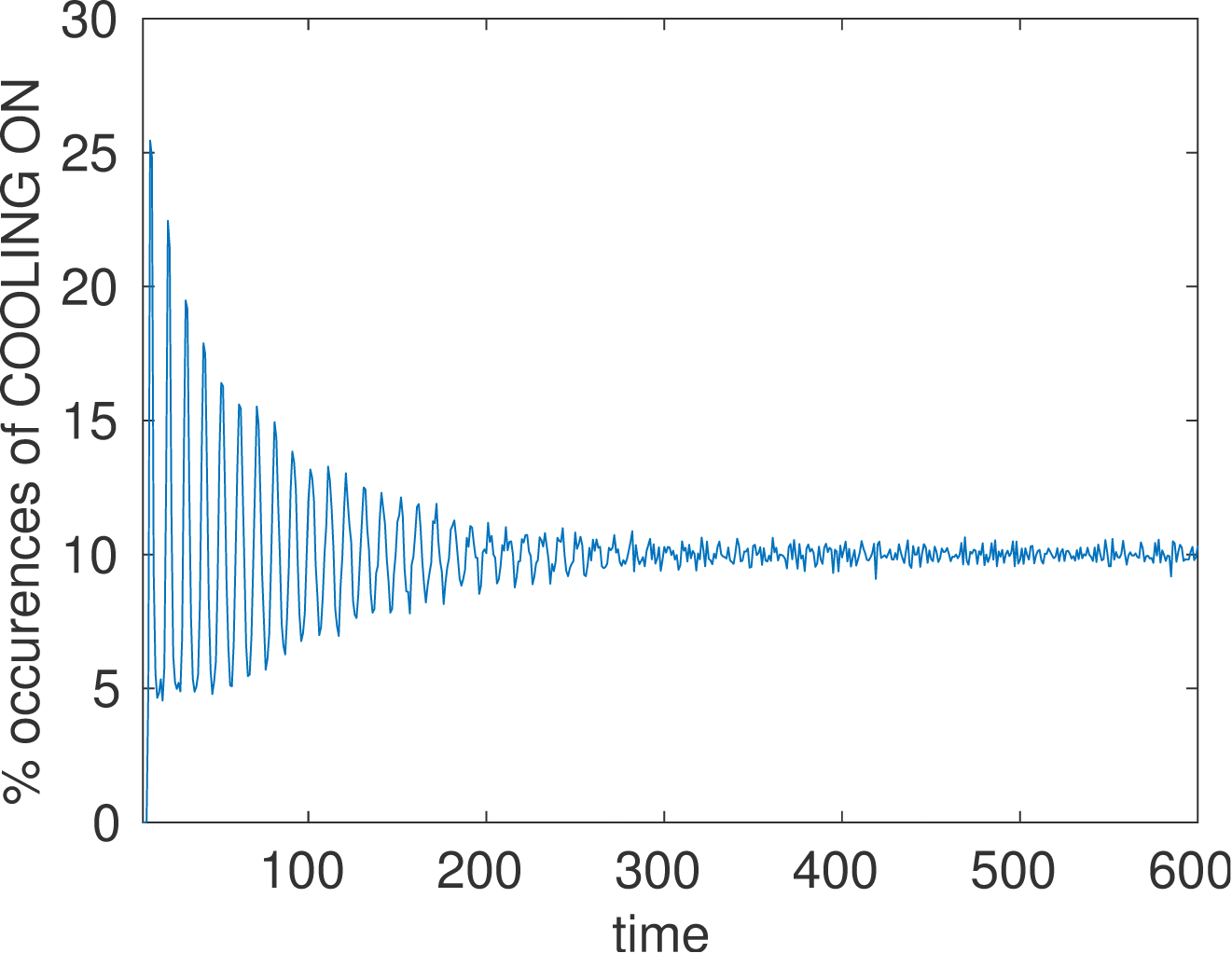}
\caption{A quantitative analysis of the attack of \autoref{exa:att:DoS}.}
\label{fig:cooling-trace}
\end{figure}

\autoref{fig:cooling-trace} provides a representation  of an
experiment in MATLAB where we launched $10000$ executions of our \CPS{} in
isolation, lasting $700$ time units each. From the aggregated data
contained in this graphic, we note that after a transitory phase (whose
length depends on several things: the uncertainty $\delta$, the initial
state of the system, the length of the cooling activity, etc.) that lasts
around $300$ time slots, the rate of success of the attack $A_m$ is
around $10\%$. 
The reader may wonder why exactly the $10\%$.
This depends on the periodicity of our \CPS{}, as in average the
cooling is activated every $10$ time slots. 

\begin{figure*}[t]
\centering
\includegraphics[width=5.5cm,keepaspectratio=true,angle=0]{./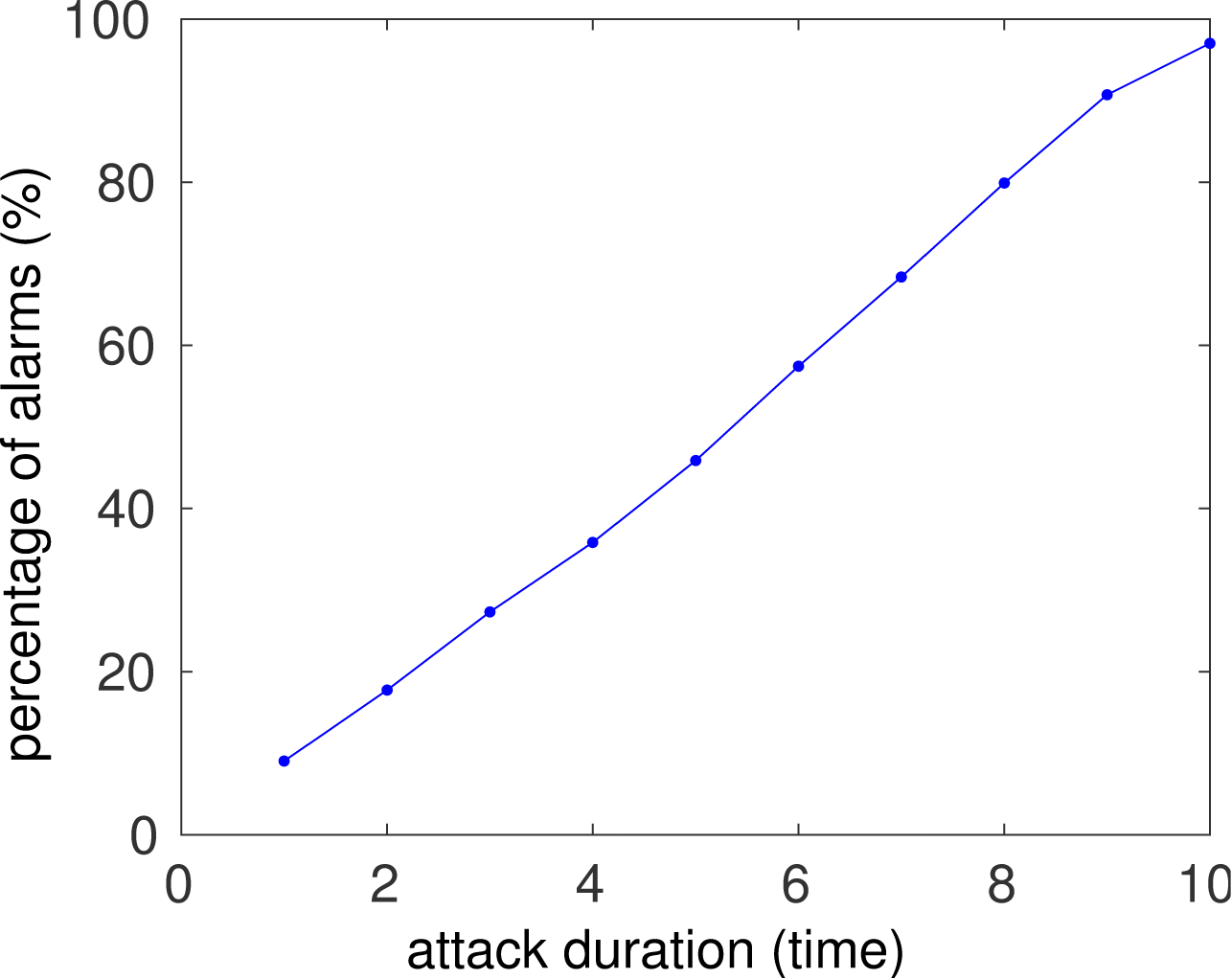}
 \Q\Q\Q\Q\Q\Q
\medskip
\includegraphics[width=5.5cm,keepaspectratio=true,angle=0]{./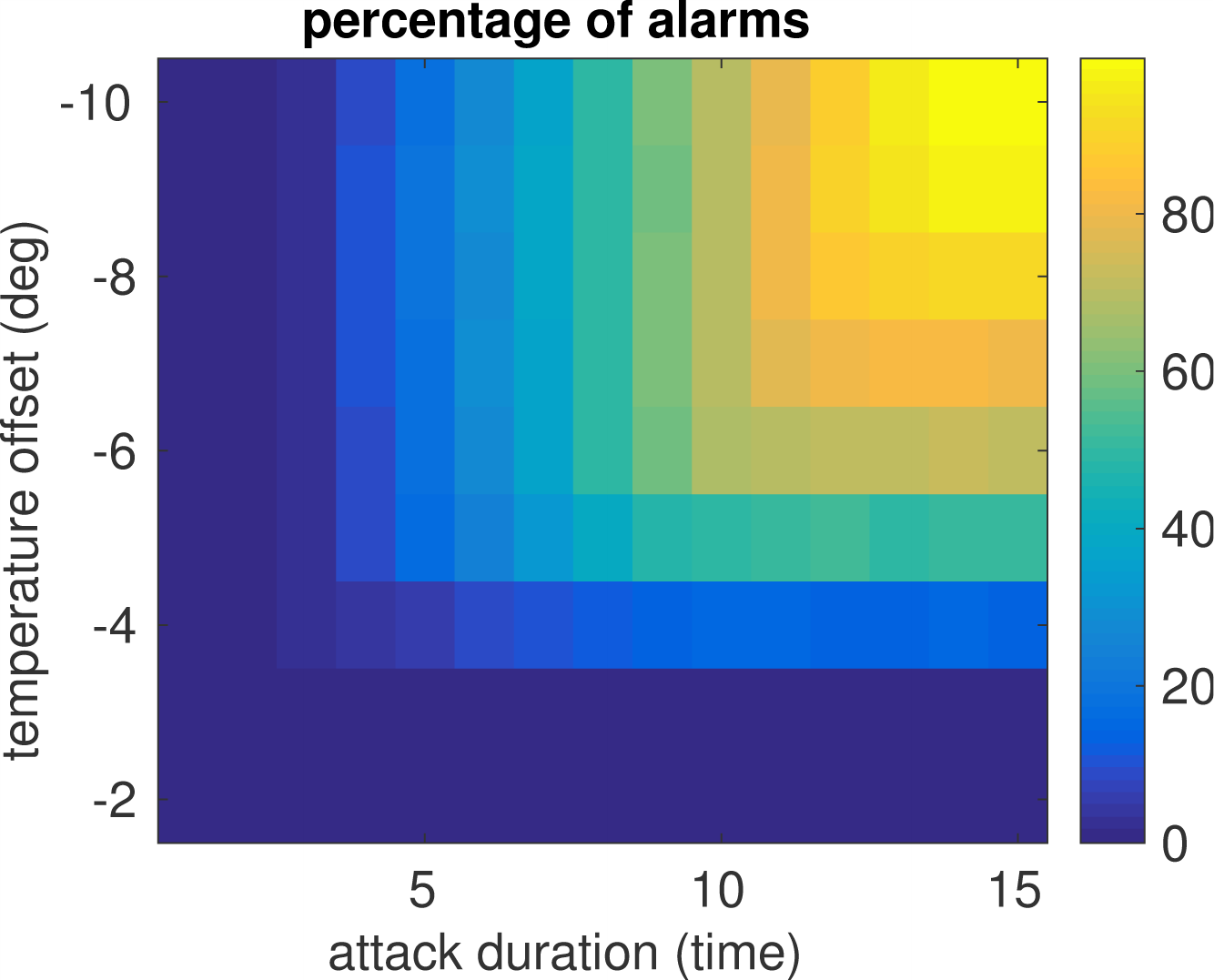}
\caption{A quantitative analysis of two different attacks.}
\label{fig:attack-trace}
\end{figure*}

This example shows that, as pointed out in~\cite{GGIKLW2015}, the
effectiveness of a cyber-physical attack depends on the information the
attack has about the functionality of the whole  \CPS{}.
%
%
For instance, if the attacker were not aware of the exact periodicity of
the \CPS{}, she might try, if possible, to repeat the attack on more
consecutive time slots. In this case, the left graphic of
\autoref{fig:attack-trace} says that the rate of success of the attack
increases linearly with the length of the attack itself (data obtained by attacking the \CPS{} after the transitory period). Thus,
if the attack of \autoref{exa:att:DoS} were iterated for $10$ 
time slots, say  
\begin{displaymath}
\begin{array}{rcl}
A^{10}_m & = & \tick^{m{-}1}.  B_{10}\\[1pt]
 B_i & = &  \mathsf{timeout} \lfloor
{\rsens x {\mbox{\Lightning}cool}}.\mathsf{if} \, (x={\off})\, \\
&& \{ {\wact {\off}{\mbox{\Lightning}cool}}.\tick.B_{i-1} \}
 \rfloor
{B_{i-1}},  \mbox{\small $\textrm{ for } 1 \leq i \leq 10$}
\end{array}
\end{displaymath}
with $B_0 = \nil$, 
the rate of success would be almost
$100\%$.

Finally, consider a generalisation of the attack of
\autoref{exa:att:integrity}:
\begin{displaymath}
\begin{array} {rcl}
A_0^k & =  & \nil \\[1pt]
A^k_n &  =  & \mathsf{timeout} \lfloor \rsens x {\mbox{\Lightning}s_{\mathrm{t}}}.
\mathsf{timeout} \lfloor  \wact {x{-}k} {\mbox{\Lightning}s_{\mathrm{t}} }.
\\
&& \tick.A_{n-1 }^k \rfloor  {A_{n-1}^k}
\rfloor {A_{n-1 }^k}
\end{array} 
\end{displaymath} 
for $1 \leq n \leq 15$ and $2 \leq k \leq 10$. Here, the attack decreases the sensed temperature of an offset $k$. Now, suppose to launch this attack after, say, $300$ time slots (i.e., after the transitory phase). Formally, we define the attack: 
$B^k_n \; = \; \tick^{300}.A_{n }^k$. 
In this case, the right graphic of \autoref{fig:attack-trace} provides a graphical representation of the percentage of alarms on $5000$ execution traces lasting $100$ time units each. Thus, for instance, an attack lasting $n=8$ time units with an offset $k=5$ affects around $40\%$ of the execution traces of the \CPS{}.


\section{Conclusions, related and future work}
\label{sec:conclusions}

We have provided formal \emph{theoretical foundations} to reason about and 
statically detect
attacks to physical devices of \CPS{s}.
To that end, we have
proposed a hybrid process calculus, called \cname{}, as a formal
\emph{specification language} 
to model physical and
cyber components of \CPS{s} as well as cyber-physical attacks.
Based on \cname{} and its labelled transition semantics, we have formalised a
threat model for \CPS{s} by grouping attacks in classes,
according to the target physical devices
and two timing parameters: begin and duration of the attacks. Then, we
relied on the trace semantics of \cname{} to assess
\emph{attack tolerance/vulnerability} with respect to a given attack. 
Along the lines of GNDC~\cite{FM99}, we defined a notion of \emph{top
attacker}, $\mathit{Top}(C)$, of a given class of attacks $C$, which has
been used to provide sufficient criteria to prove attack
tolerance/vulnerability to all attacks of class $C$ \nolinebreak (and
\nolinebreak weaker \nolinebreak ones).
Here, would like to mention that in our companion paper~\cite{LaMe17} we developed a \emph{bisimulation congruence} for a  simpler version of the calculus where security features have been completely stripped off.  For simplicity, in the current submission, we adopted as main behavioural equivalence trace equivalence instead of bisimulation. 
We could switch to a bisimulation semantics, preserved by parallel composition, which would  allow us to scale our verification method (\autoref{thm:sound}) to bigger systems. 
Finally, we have provided a metric to estimate the impact of a successful attack on a \CPS{} 
together with possible quantifications of the success chances of an attack.
We proved that the impact of the most powerful attack $\mathit{Top}(C)$ represents an upper bound for the impact of any attack $A$ of class $C$ \nolinebreak (and \nolinebreak weaker \nolinebreak ones).

 We have illustrated our concepts by means of a running example,
focusing in particular on a formal treatment of both integrity and DoS
attacks to sensors and actuators of \CPS{s}. Our example is simple but far from trivial and designed to describe a wide number of attacks. 

\subsubsection*{Related work}
Among the $118$ papers discussed in the comprehensive
survey~\cite{survey-CPS-security-2016}, $50$ adopt a discrete notion of
time similar to ours, $13$ a continuous one, $48$ a quasi-static time
model, and the rest use a hybrid time model. Most of these papers
investigate attacks on \CPS{s} and their protection by relying on
\emph{simulation test systems} to validate the results.

We focus on the papers that are most  related to our work.
 Huang et
al.~\cite{HCALTS2009} were among the first to propose \emph{threat models}
for \CPS{s}. Along with~\cite{KrCa2013,BestTime2014}, they stressed the
role played by timing parameters on integrity and DoS attacks. Alternative
threat models are discussed in~\cite{GeKiHa2015,GGIKLW2015,TeShSaJo2015}.
In particular, Gollmann et al.~\cite{GGIKLW2015} discussed possible goals
(\emph{equipment damage}, \emph{production damage}, \emph{compliance
violation}) and \emph{stages} (\emph{access}, \emph{discovery},
\emph{control}, \emph{damage}, \emph{cleanup}) of cyber-physical attacks.
In the current paper, we focused on the damage stage, where the attacker already has a
rough idea of the plant and the control architecture of the target \CPS{}.

A number of
works use
formal methods for \CPS{} security, although they apply methods, and most
of the time have goals, that are quite different from ours.

Burmester et al.~\cite{BuMaCh2012} employed \emph{hybrid timed automata} to give a threat 
framework based on the traditional Byzantine faults model for
crypto-security.
However, as remarked in~\cite{TeShSaJo2015},
cyber-physical attacks and faults have inherently distinct
characteristics.
Faults are considered as physical events that affect the system behaviour,
where simultaneous events 
don't act in a coordinated way; 
cyber-attacks may be performed over a significant number of attack points and in a coordinated way.

In~\cite{Vig2012}, Vigo presented an attack scenario that addresses some
of the peculiarities of a cyber-physical adversary, and discussed how this
scenario relates to other attack models popular in the security protocol
literature. Then, in~\cite{Vigo2015,VNN2013} Vigo et al.\ proposed an
untimed calculus of broadcasting processes 
equipped with 
notions of 
failed and unwanted communication. These works differ quite considerably from ours, e.g., they focus on DoS attacks without taking into consideration timing aspects or impact of the attack.

C\'ombita et al.~\cite{Cardenas2015} and Zhu and
Basar~\cite{game-theory-CPS2015} applied \emph{game theory} to capture the
conflict of goals between an attacker who seeks to maximise the damage
inflicted to a \CPS{}'s security and a defender who aims to minimise
it~\cite{game-theory-2013}.

Finally, there are three  recent papers that were developed in
parallel to ours: \cite{Nigam-Esorics2016,RocchettoTippenhauer2016a,RocchettoTippenhauer2016b}.  Rocchetto and
Tippenhaur~\cite{RocchettoTippenhauer2016a} introduced a taxonomy of the diverse attacker models proposed
for \CPS{} security and outline requirements for generalised attacker
models; in~\cite{RocchettoTippenhauer2016b}, they then proposed an
extended Dolev-Yao attacker model suitable for \CPS{s}. 
In their approach, physical layer interactions are modelled as abstract
interactions between logical components to support reasoning on the
physical-layer security of \CPS{s}. This is done by introducing additional
orthogonal channels. Time is not represented.





Nigam et al.~\cite{Nigam-Esorics2016} 
work around the notion of Timed Dolev-Yao Intruder Models for
Cyber-Physical Security Protocols by bounding the number of intruders
required for the automated verification of such protocols. Following a
tradition in security protocol analysis, they provide an answer to the
question: How many intruders are enough for verification and where should
they be placed? They also extend the strand space model to \CPS{}
protocols by allowing for the symbolic representation of time, so that
they can use the tool Maude~\cite{Maude} along with SMT support. Their
notion of time is however different from ours, as they focus on the time a
message needs to travel from an agent to another. The paper does not
mention physical devices, such as sensors and/or actuators.





\subsubsection*{Future work} While much is still to be done, we believe that our paper 
provides a
stepping stone for the development of formal and automated tools to
analyse the security of \CPS{s}. We will consider
applying, possibly after proper enhancements, existing tools and
frameworks for automated security protocol analysis, resorting to the
development of a dedicated tool if existing ones prove not up to the task.
We will also consider further security properties and concrete examples of
\CPS{s}, as well as other kinds of cyber-physical attackers and attacks,
e.g., periodic attacks. This will allow us to refine the classes of
attacks we have given here (e.g., by formalising a type system amenable to
static analysis), and provide a formal definition of when a \CPS{} is more
secure than another so as to be able to design, by progressive refinement,
secure variants of a vulnerable \CPS{s}.

We also aim to extend the preliminary quantitative analysis we have given
here by developing a suitable behavioural theory ensuring that our trace
semantics considers also the probability of a trace to actually occur. We
expect that the discrete time stochastic hybrid systems of~\cite{abate06}
will be useful to that extent.


\bibliographystyle{abbrv}

\bibliography{IoT_bib}


\appendix

\section{Proofs}

\subsection{Proof of \autoref{sec:calculus}}

In order to prove \autoref{prop:sys} and \autoref{prop:X}, we use the following lemma that formalises the invariant properties binding 
the state variable $\mathit{temp}$ with the activity of the cooling system.

Intuitively, when the cooling system is inactive the value of the state
variable $\mathit{temp}$ lays in the real interval $[0, 11.5]$.
Furthermore, if the coolant is not active and the variable $\mathit{temp}$
lays in the real interval $(10.1, 11.5]$, then the cooling will be turned
on in the next time slot. Finally, when active the cooling system will
remain so for $k\in1..5$ time slots (counting also the current time slot)
with the variable $\mathit{temp}$ being in the real interval $(
9.9-k{*}(1{+}\delta) , 11.5-k{*}(1{-}\delta)]$.

\begin{lemma} 
\label{lem:sys}
Let $\mathit{Sys}$ be the system defined in \autoref{exa:sys}.
Let
\begin{small}
\begin{displaymath}
\mathit{Sys} = \mathit{Sys_1} \trans{t_1}\trans\tick 
\mathit{Sys_2}\trans{t_2}\trans\tick  \dots 
\trans{t_{n-1}}\trans\tick  \mathit{Sys_n}
\end{displaymath}
\end{small}%
such that the traces $t_j$ contain no $\tick$-actions, for any $j \in  1 .. n{-}1 $, and for any  $i \in  1 .. n $, $\mathit{Sys_i}= \confCPS {E_i}{P_i} $ with 
$E_i = \envCPS 
{\statefun^i{}} 
{\actuatorfun^i{}} 
{ \delta }  
{\evolmap{}}
{ \epsilon }  
{\measmap{}}   
{\invariantfun{}}$.
Then, for any $i \in 1 .. n{-}1 $, we have the following:
\begin{enumerate}

\item \label{uno}
 if   $ \actuatorfun^i{}(\mathit{cool})= \off $ then
 $\statefun^i{}(\mathit{temp})  \in [0, 11.1+\delta ]$;
with  $\statefun^i{}(\mathit{stress})=0$ if $ \statefun^i{}(\mathit{temp})  \in [0, 10.9+\delta ] $, and  $\statefun^i{}(\mathit{stress})=1 $, otherwise; 

\item \label{due}
  if   $ \actuatorfun^i{}(\mathit{cool})= \off $ and 
$\statefun^i{}(\mathit{temp})\in (10.1, 11.1+\delta ]$ then, in the next time slot,  $\actuatorfun^{i{+}1}{}(\mathit{cool})=\on$
  and $\statefun^{i{+}1}{}(\mathit{stress})  \in 1..2$;

\item \label{tre}
 if  $ \actuatorfun^i{}(\mathit{cool})=\on$ then   $\statefun^i{}(\mathit{temp}) \in ( 9.9-k {*}(1{+}\delta) , 11.1+\delta -k{*}(1{-}\delta)] $, 
for some  $k  \in 1 .. 5 $   such that $\actuatorfun^{i-k}{}(\mathit{cool})=\off $ and 
$\actuatorfun^{i-j}{}(\mathit{cool}) =\on $, for $j \in 0..k{-}1$;
moreover,  if  $k\in 1.. 3$ then     $\statefun^i{}(\mathit{stress})  \in  1..k{+}1  $, otherwise,  
$\statefun^i{}(\mathit{stress}) =0$. 
\end{enumerate}
\end{lemma}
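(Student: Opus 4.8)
The plan is to prove the three invariants \emph{simultaneously}, by strong induction on the index $i$ (equivalently, on the length $n$ of the trace prefix displayed in the statement). The base case $i=1$ is immediate from \autoref{exa:sys}: the actuator $\mathit{cool}$ is $\off$, $\statefun^1(\mathit{temp})=0\in[0,11.1+\delta]$ and $\statefun^1(\mathit{stress})=0$, while items~\ref{due} and~\ref{tre} are vacuous since no earlier slot exists. For the inductive step the real work is a case analysis on the \emph{control phase} of the cyber process $P_i$, so the first thing I would do is pin down the (finite) set of reachable process configurations.

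The key structural observation --- itself part of the induction --- is that the $\mathit{IDS}$ component always answers $\mathsf{stop}$: whenever $\mathit{Check}$ reaches the synchronisation on $\mathit{sync}$, the burst of cooling preceding it has lasted exactly five $\tick$-transitions, so by item~\ref{tre} with $k=5$ the true temperature lies in $(2.9,8.5]$, hence every admissible sensor reading is at most $8.6<10$ and $\mathit{IDS}$ takes its $\mathsf{else}$ branch. Consequently the command $\mathsf{keep\_cooling}$ is never sent, the channel $\mathit{alarm}$ is never used (re-deriving \autoref{prop:sys} along the way), and the control flow collapses to a simple loop: the controller idles (on a reading $\le 10$, firing one $\tick$) for some slots, then on a reading $>10$ it writes $\on$, performs five $\tick$-transitions with the coolant active, runs $\mathit{Check}$, writes $\off$, performs one more $\tick$, and returns to idling. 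This pins $P_i$ down up to a small finite index set --- idling; ``$j$-th tick of a cooling burst'' for $j\in 1..5$; ``inside $\mathit{Check}$''; ``about to write $\off$'' --- which is exactly the partition underlying items~\ref{uno}--\ref{tre}.

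With the phase fixed, I would push the state-variable bounds through each phase using the evolution map of \autoref{exa:sys-physical}: when the coolant is off a $\tick$ adds $1+\gamma$ with $\gamma\in[-\delta,\delta]$ to $\mathit{temp}$, and when it is on it adds $-1+\gamma$. So an idling slot is reached either right after a burst --- where the predecessor slot had $\mathit{temp}\le 8.5$ (the $k=5$ case) and hence now $\mathit{temp}\le 9.9\le 11.1+\delta$ --- or from another idling slot, whose $\mathit{temp}$ was necessarily $\le 10.1$ (otherwise item~\ref{due} would already have forced the coolant on, contradicting that it is off now) and hence now $\mathit{temp}\le 10.1+(1+\delta)=11.1+\delta$; this gives the temperature bound of item~\ref{uno}, while item~\ref{due} follows because a reading $>10$ is \emph{forced} once $\mathit{temp}>10.1$. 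The first slot of a cooling burst is entered because the controller read $>10$, so by the measurement map $\mathit{temp}>9.9$; combined with item~\ref{uno} for the predecessor slot this yields the $k=1$ interval after the write-and-$\tick$, and iterating the $-1+\gamma$ update $j$ times gives the $k=j$ interval of item~\ref{tre}, the $k=5$ endpoint $(2.9,8.5]$ closing the loop as above. The stress counter is tracked in parallel by unfolding the rule $\xi(\mathit{stress})=\min(5,\statefun^i(\mathit{stress})+1)$ when $\statefun^i(\mathit{temp})>9.9$ and $\xi(\mathit{stress})=0$ otherwise, against the temperatures just computed.

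The main obstacle will be exactly this last piece of bookkeeping: the stress update at a $\tick$ refers to the \emph{previous} slot's temperature, not the current one, so the invariant cannot be maintained for $\mathit{temp}$ and $\mathit{stress}$ separately --- one has to carry a \emph{joint} invariant relating control phase, temperature \emph{and} stress, and be careful at the boundary values $\mathit{temp}=9.9$ and $\mathit{temp}=10.1$ (the threshold above which the controller's branch becomes deterministic), where the interval endpoints are doing real work. A further subtlety is that the ``$\mathit{IDS}$ always replies $\mathsf{stop}$'' fact has to be folded into the very same induction --- it relies on item~\ref{tre} for the burst in progress, which in turn relies on the bounds established at earlier slots --- rather than being assumed at the outset.
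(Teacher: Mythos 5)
Your proposal is correct and essentially the paper's own argument: the paper also proceeds by induction along the trace (on the number of $\tick$-actions), proving the three items jointly, and its case analysis on whether the coolant is on/off in slots $i$ and $i-1$ (with the burst counter $k$ of item~3) is exactly your control-phase analysis expressed in actuator terms. In particular, the paper likewise establishes \emph{inside} the induction that after five cooling ticks the temperature lies in $(2.9,8.5]$, so the $\mathit{IDS}$ always replies $\mathsf{stop}$ --- precisely the ingredient you flag as needing to be folded into the same induction.
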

\begin{proof}
Let us write $v_i$ and $s_i$ to denote the values of the state variables
$\mathit{temp}$ and $\mathit{stress}$, respectively, in the systems
$\mathit{Sys_i}$, i.e., $\statefun^i{} (\mathit{temp})=v_i $ and
$\statefun^i{} (\mathit{stress})=s_i $. Moreover, we will say that the
coolant is active (resp., is not active) in $\mathit{Sys_i}$ if
$\actuatorfun^i{}(\mathit{cool})=\on$ (resp.,
$\actuatorfun^i{}(\mathit{cool})=\off$).

The proof is by mathematical induction on $n$, i.e., the number of
$\tick$-actions of our traces.

The \emph{case base} $n=1$ follows directly from the definition of $\mathit{Sys}$. 

Let us prove the \emph{inductive case}. 
We assume that the three statements hold for $n-1$ and prove that they  
also hold for $n$.
\begin{enumerate}[noitemsep]
\item Let us assume that the cooling  is not active  in $\mathit{Sys_{n}}$.
In this case, we prove that $v_n \in [0, 11.1+\delta ]$, with and $s_n=0$ if $ v_n  \in [0, 10.9+\delta ] $, and $s_n=1$ otherwise.

We consider separately the cases in which the coolant is active or not in $\mathit{Sys_{n-1}}$
\begin{itemize}[noitemsep]
\item Suppose the coolant is not active in $\mathit{Sys_{n{-}1}}$ (and
not active in $\mathit{Sys_{n}}$).

By the induction hypothesis we have 
$v_{n-1} \in [0, 11.1+\delta ]$; with $s_{n{-}1}=0$ if $ v_{n{-}1}  \in [0, 10.9+\delta ] $, and $s_{n{-}1}=1 $ otherwise. Furthermore,  if   
$v_{n-1} \in (10.1, 11.1+\delta ]$, then, by the induction hypothesis, the coolant must be active in $\mathit{Sys_{n}}$.
Since we know that in $\mathit{Sys_n}$ the cooling is not active,
it follows that $v_{n-1} \in [0, 10.1]$ and $s_n =0$.
Furthermore, in $\mathit{Sys_{n}}$ the temperature
will increase of a value laying in the real interval $[1-\delta,1+\delta]=[0.6,1.4]$. Thus, $v_{n}$ will be in 
$ [0.6, 11.1+\delta ]\subseteq[0, 11.1+\delta ]$.  
Moreover, if $v_{n-1} \in [0, 9.9]$, then the state variable $\mathit{stress}$ is not incremented and hence $s_n=0$ with   
$ v_n  \in [0+1-\delta \, , \,  9.9+ 1+\delta ]=[0.6 \, , \, 10.9+\delta]\subseteq [0 \, , \, 10.9+\delta ] $. Otherwise, 
if $v_{n-1} \in (9.9,10.1]$, then the state variable $\mathit{stress}$ is incremented, and hence $s_n=1$.

\item Suppose the coolant is active in $\mathit{Sys_{n{-}1}}$ (and not
active in $\mathit{Sys_{n}}$).

By the induction hypothesis, $v_{n-1} \in ( 9.9-k *(1+\delta) , 11.1+\delta
-k*(1-\delta)]$ for some $k \in 1..5$ such that the coolant is not active
in $\mathit{Sys_{n{-}1{-}k}}$ and is active in $\mathit{Sys_{n{-}k}},
\ldots, \mathit{Sys_{n-1}}$.

The case $k \in \{1,\ldots,4\}$  is not admissible. 
In fact if  $k \in \{1,\ldots,4\}$ then the coolant would be active for less than $5$ $\tick$-actions as we know that 
$\mathit{Sys_{n}}$ is not active. 
Hence, it must be $k=5$. Since $\delta=0.4$ and $k=5$, it holds that $v_{n-1 }\in (9.9-5*1.4, 11.1+0.4 -5*0.6]=(2.8, 8.6] $
and $s_{n{-}1} =0$. Moreover, since
the coolant is active for $5$ time slots, in $\mathit{Sys_{n{-}1}}$ the controller and the $\mathit{IDS}$ synchronise together via channel $\mathit{sync}$ and hence the $\mathit{IDS}$ checks the
temperature. Since $v_{n-1} \in (2.8, 8.6]$ the $\mathit{IDS}$ process 
sends to the controller a command to 
$\mathsf{stop}$ the cooling, and the controller will switch off the 
cooling system. Thus, in the next time slot, the temperature
will increase of a value laying in the real interval $[1-\delta,1+\delta]=[0.6,1.4]$. As
a consequence, in $\mathit{Sys_{n}}$ we will have $v_{n}
\in [2.8+0,6, 8.6+1.4]=[3.4,10] \subseteq [0, 11.1+\delta ]$.
Moreover, since $v_{n-1} \in (2.8, 8.6]$ and $s_{n{-}1} =0$, we derive that the state variable $\mathit{stress}$ is not increased and hence
$s_{n } =0$, with $v_{n} \in [3.4,10] \subseteq [0, 10.9+\delta ]$. 
\end{itemize}

\item Let us assume that the coolant is not active in $\mathit{Sys_{n}}$
and $v_n \in (10.1, 11.1+\delta ]$; we prove that the coolant is active in
$\mathit{Sys_{n{+}1}}$ with $s_{n {+}1} \in 1..2 $. Since the coolant is
not active in $\mathit{Sys_{n}}$, then it will check the temperature
before the next time slot. Since $v_n \in (10.1, 11.1+\delta ]$ and
$\epsilon=0.1$, then the process $\mathit{Ctrl}$ will sense a temperature
greater than $10$ and the coolant will be turned on. Thus, the coolant
will be active in $\mathit{Sys_{n{+}1}}$. Moreover, since $v_n \in (10.1,
11.1+\delta ]$, and $s_{n}$ could be either $0$ or $1$, the state variable
$\mathit{stress}$ is increased and therefore $s_{n {+}1} \in 1..2$.

\item Let us assume that the coolant is active in $\mathit{Sys_{n}}$; we
prove that $v_{n} \in ( 9.9-k *(1+\delta), 11.1+\delta -k*(1-\delta)] $
for some $k \in 1..5 $ and the coolant is not active in
$\mathit{Sys_{n{-}k}}$ and active in $\mathit{Sys_{n-k+1}}, \dots,
\mathit{Sys_{n}}$. Moreover, we have to prove that if $k\leq 3$ then $s_n
\in 1..k{+}1 $, otherwise, if $k > 3$ then $s_n =0$.

We prove the first statement. That is, we prove that $v_{n} \in ( 9.9-k
*(1+\delta), 11.1+\delta -k*(1-\delta)] $, for some $k \in 1..5 $, and the
coolant is not active in $\mathit{Sys_{n{-}k}}$, whereas it is active in
the systems $\mathit{Sys_{n-k+1}}, \dots, \mathit{Sys_{n}}$.


We separate the case in which the coolant is active in
$\mathit{Sys_{n{-}1}}$ from that in which is not active.

\begin{itemize}[noitemsep]
\item Suppose the coolant is not active in $\mathit{Sys_{n{-}1}}$ (and active in $\mathit{Sys_{n}}$).

In this case $k=1$ as the coolant is not active in $\mathit{Sys_{n-1}}$
and it is active in $\mathit{Sys_{n}}$. Since $k=1$, we have to prove $v_n
\in (9.9-(1+\delta), 11.1+\delta-(1-\delta)]$.

However, since the coolant is not active in $\mathit{Sys_{n-1}}$ and is
active in $\mathit{Sys_{n}}$ it means that the coolant has been switched
on in $\mathit{Sys_{n-1}}$ because the sensed temperature was above $10$
(since $\epsilon=0.1$ this may happen only if $v_{n-1} > 9.9$). By the
induction hypothesis, since the coolant is not active in
$\mathit{Sys_{n-1}}$, we have that $v_{n-1} \in [0, 11.1+\delta ]$.
Therefore, from $v_{n-1} > 9.9$ and $v_{n-1} \in [0, 11.1+\delta ]$ it
follows that $v_{n-1} \in (9.9, 11.1+\delta ]$. Furthermore, since the
coolant is active in $\mathit{Sys_{n}}$, the temperature will decrease of
a value in $[1-\delta,1+\delta]$ and therefore $v_n \in (9.9-(1+\delta),
11.1+\delta-(1-\delta)]$, which concludes this case of the proof.

\item Suppose the coolant is active in $\mathit{Sys_{n{-}1}}$ (and active
in $\mathit{Sys_{n}}$ as well).

By the induction hypothesis, there is $h \in 1..5$ such that $v_{n-1} \in
( 9.9-h *(1+\delta) , 11.1+\delta -h*(1-\delta)] $ and the coolant is not
active in $\mathit{Sys_{n{-}1{-}h}}$ and is active in
$\mathit{Sys_{n{-}h}}, \ldots, \mathit{Sys_{n{-}1}}$.

The case $h=5$ is not admissible. In fact, since $\delta=0.4$, if $h=5$
then $v_{n-1 }\in (9.9-5*1.4, 11.1+\delta -5*0.6]=(2.8, 8.6] $.
Furthermore, since the cooling system has been active for $5$ time
instants, in $\mathit{Sys_{n{-}1}}$ the controller and the IDS synchronise
together via channel $\mathit{sync}$, and the $\mathit{IDS}$ checks the
received temperature. As $v_{n-1 }\in (2.8, 8.6] $, the $\mathit{IDS}$
sends to the controller via channel $\mathit{ins}$ the command
$\mathsf{stop}$. This implies that the controller should turn off the
cooling system, in contradiction with the hypothesis that the coolant is
active in $\mathit{Sys_{n }}$.

Hence, it must be $h \in 1 .. 4$. Let us prove that for $k=h+1$ we obtain
our result. Namely, we have to prove that, for $k=h+1$, (i) $v_{n} \in (
9.9-k *(1+\delta), 11.1+\delta -k*(1-\delta)] $, and (ii) the coolant is
not active in $\mathit{Sys_{n{-}k}}$ and active in $\mathit{Sys_{n-k+1}},
\dots, \mathit{Sys_{n}}$.

Let us prove the statement (i). By the induction hypothesis, it holds that
$v_{n-1} \in ( 9.9-h *(1+\delta) , 11.1+\delta -h*(1-\delta)] $. Since the
coolant is active in $\mathit{Sys_{n}}$, the temperature will decrease
Hence, $v_{n } \in ( 9.9-(h+1) *(1+\delta) , 11.1+\delta
-(h+1)*(1-\delta)] $. Therefore, since $k=h+1$, we have that $v_{n} \in (
9.9-k *(1+\delta) , 11.1+\delta -k*(1-\delta)] $.

Let us prove the statement (ii). By the induction hypothesis the coolant
is not active in $\mathit{Sys_{n-1-h}}$ and it is active in
$\mathit{Sys_{n-h}}, \ldots, \mathit{Sys_{n-1}}$. Now, since the coolant
is active in $\mathit{Sys_{n}}$, for $k=h+1$, we have that the coolant is
not active in $\mathit{Sys_{n-k}}$ and is active in $\mathit{Sys_{n-k+1}},
\ldots, \mathit{Sys_{n}}$, which concludes this case of the proof.
\end{itemize}

Thus, we have proved that $v_{n} \in ( 9.9-k *(1+\delta), 11.1+\delta
-k*(1-\delta)] $, for some $k \in 1..5 $; moreover, the coolant is not
active in $\mathit{Sys_{n{-}k}}$ and active in the systems
$\mathit{Sys_{n-k+1}}, \dots, \mathit{Sys_{n}}$.

It remains to prove that $s_n \in 1..k{+}1 $ if $k\leq 3$, and $s_n =0$,
otherwise.

By inductive hypothesis, since the coolant is not active in
$\mathit{Sys_{n{-}k}}$, we have that $s_{n{-}k} \in 0..1$. Now, for $k \in
[1..2]$, the temperature could be greater than $9.9$. Hence if the state
variable $\mathit{stress}$ is either increased or reset, then $s_n \in
1..k{+}1 $, for $k\in 1.. 3$. Moreover, since for $k\in 3..5 $ the
temperature is below $9.9$, it follows that $s_n =0$ for $k> 3$.
\end{enumerate}
\end{proof}

\begin{proof}[Proof of  \autoref{prop:sys}]
Since $\delta=0.4$, by \autoref{lem:sys} the value of the state variable
$\mathit{temp}$ is always in the real interval $[0, 11.5]$. As a
consequence, the invariant of the system is never violated and the system
never deadlocks. Moreover, after $5$ time units of cooling, the state
variable $\mathit{temp}$ is always in the real interval $( 9.9-5 *1.4 ,
11.1+0.4-5*0.6]=(2.9, 8.5]$. Hence, the process $\mathit{IDS}$ will never
transmit on the channel $\mathit{alarm}$.

Finally, by \autoref{lem:sys} the maximum value reached by the state variable $\mathit{stress}$ is $4$ and therefore the system does not reach unsafe states.
\end{proof}

\begin{proof}[Proof of  \autoref{prop:X}]
Let us prove the two statements separately. 
\begin{itemize}
\item Since $\epsilon=0.1$, if process $\mathit{Ctrl}$ senses a
temperature above $10$ (and hence $\mathit{Sys}$ turns on the cooling)
then the value of the state variable $\mathit{temp}$ is greater than
$9.9$. By \autoref{lem:sys}, the value of the state variable
$\mathit{temp}$ is always less than or equal to $11.1+\delta $. Therefore, if
$\mathit{Ctrl}$ senses a temperature above $10$, then the value of the
state variable $\mathit{temp}$ is in $(9.9,11.1+\delta ]$.

\item By \autoref{lem:sys} (third item), the coolant can be active for no
more than $5$ time slots. Hence, by \autoref{lem:sys}, when $\mathit{Sys}$
turns off the cooling system the state variable $\mathit{temp}$ ranges
over $( 9.9-5 *(1+\delta) , 11.1+\delta-5*(1-\delta)]$. 
\end{itemize}
\end{proof}

\subsection{Proofs of \autoref{sec:cyber-physical-attackers}}

%
%
%

\begin{proof}[Proof of \autoref{prop:att:DoS}]
We distinguish the two cases, depending on $m$.
\begin{itemize}[noitemsep]
\item Let $m\leq 8$.
We recall that the cooling system is activated only when the sensed temperature is above $10$. Since $\epsilon = 0.1$, when this happens the state variable $\mathit{temp}$ must be at least $9.9$. Note that after $m{-}1 \leq 7$ $\tick$-actions, when the attack tries to interact with the
controller of the actuator $\mathit{cool}$, the variable $\mathit{temp}$ may reach at most $7* (1 + \delta)= 7 * 1.4=9.8$ degrees. Thus, the cooling system will not be activated and the attack will not have any effect.

\item Let $m>8$. 
By \autoref{prop:sys}, the system $\mathit{Sys}$ in
isolation may never deadlock, it does not get into an unsafe state, and it may never emit an output on channel $\mathit{alarm}$. Thus, any execution trace of the system $\mathit{Sys}$ consists of a sequence of $\tau$-actions and $\tick$-actions.

In order to prove the statement it is enough to show the following four facts:
\begin{itemize}[noitemsep]
\item the system $\mathit{Sys} \parallel A_m$ may not deadlock in the first
$m+3$ time slots; 

\item the system $\mathit{Sys} \parallel A_m$ may not emit any output in the first $m+3$ time slots; 

\item the system $\mathit{Sys} \parallel A_m$ may not enter in an unsafe
state in the first $m+3$ time slots;

\item the system $\mathit{Sys} \parallel A_m$ has a trace reaching un unsafe state from the $(m{+}4)$-th time slot on, and until the invariant gets violated and the system deadlocks. 
\end{itemize}


The first three facts are easy to show as the attack may steal the command
addressed to the actuator $\mathit{cool}$ only in the $m$-th time slot.
Thus, until time slot $m$, the whole system behaves correctly. In
particular, by \autoref{prop:sys} and \autoref{prop:X}, no alarms,
deadlocks or violations of safety conditions occur, and the temperature
lies in the expected ranges. Any of those three actions requires at least
further $4$ time slots to occur. Indeed, by \autoref{lem:sys}, when the
cooling is switched on in the time slot $m$, the variable
$\mathit{stress}$ might be equal to $2$ and hence the system might not
enters in an unsafe state in the first $m+3$ time slots. Moreover, an
alarm or a deadlock needs more than $3$ time slots and hence no alarm can
occur in the first $m+3$ time slots.

Let us show the fourth fact, i.e., that there is a trace where the system
$\mathit{Sys} \parallel A_m$ enters into an unsafe state starting from the
$(m{+}4)$-th time slot and until the invariant gets violated.

Firstly, we prove that for all time slots $n$, with $9\leq n < m$, there
is a trace of the system $\mathit{Sys} \parallel A_m$ in which the state
variable $\mathit{temp}$ reaches the values $10.1$ in the time slot $n$.

The fastest trace reaching the temperature of $10.1$ degrees requires
$\lceil \frac{10.1}{1 + \delta }\rceil = \lceil \frac{10.1}{1.4 }\rceil
=8$ time units, whereas the slowest one $\lceil \frac{10.1}{1 - \delta
}\rceil = \lceil \frac{10.1}{0.6 }\rceil =17$ time units. Thus, for any
time slot $n$, with $9 \leq n \leq 18$, there is a trace of the system
where the value of the state variable $\mathit{temp}$ is $10.1$. Now, for
any of those time slots $n$ there is a trace in which the state variable
$\mathit{temp}$ is equal to $10.1$ in all time slots $n+10i < m$, with
$i\in \mathbb{N}$. Indeed, when the variable $\mathit{temp}$ is equal to
$10.1$ the cooling might be activated. Thus, there is a trace in which the
cooling system is activated. We can always assume that during the cooling
the temperature decreases of $1+\delta$ degrees per time unit, reaching at
the end of the cooling cycle the value of $5$. This entails that the trace
may continue with $5$ time slots in which the variable $\mathit{temp}$ is
increased of $1+\delta$ degrees per time unit; reaching again the value
$10.1$. Thus, for all time slots $n$, with $9 \leq n < m$, there is a
trace of the system $\mathit{Sys} \parallel A_m$ in which the state
variable $\mathit{temp}$ is $10.1$ in $n$.

As a consequence, we can suppose that in the $m{-}1$-th time slot there is
a trace in which the value of the variable $\mathit{temp}$ is $10.1$.
Since $\epsilon=0.1$, the sensed temperature lays in the real interval
$[10,10.2]$. Let us focus on the trace in which the sensed temperature is
$10$ and the cooling system is not activated. In this case, in the $m$-th
time slot the system may reach a temperature of $10.1 + (1 + \delta)=11.5$
degrees and the variable $\mathit{stress}$ is $1$.

The process $\mathit{Ctrl}$ will sense a temperature above $10$ sending
the command $\snda {cool} {\on}$ to the actuator $\mathit{cool}$. Now,
since the attack $A_m$ is active in that time slot ($m > 8$), the command
will be stolen by the attack and it will never reach the actuator. Without
that dose of coolant, the temperature of the system will continue to grow.
As a consequence, after further $4$ time units of cooling, i.e.\ in the
$m{+}4$-th time slot, the value of the state variable $\mathit{stress}$
may be $5$ and the system enters in an unsafe state.

After $1$ time slots, in the time slot $m+5$, the controller and the
$\mathit{IDS}$ synchronise via channel $\mathit{sync}$, the $\mathit{IDS}$
will detect a temperature above $10$, and it will fire the output on
channel $\mathit{alarm}$ saying to process $\mathit{Ctrl}$ to keep
cooling. But $\mathit{Ctrl}$ will not send again the command $\snda {cool}
{\on}$. Hence, the temperature would continue to increase and the system
remains in an unsafe state while the process $\mathit{IDS}$ will keep
sending of $\mathit{alarm}$(s) until the invariant of the environment gets
violated.
\end{itemize}
\end{proof}

\begin{proof}[Proof of \autoref{prop:critical2}]
By induction on the length of the trace.  
\end{proof}

In order to prove \autoref{prop:att:integrity}, we introduce \autoref{lem:sys2}. This is a variant of \autoref{lem:sys} in which the \CPS{} $\mathit{Sys} $ runs in parallel with the attack $A_n$ defined in \autoref{exa:att:integrity}. Here, due to the presence of the attack, the temperature is $2$ degrees higher when compared to the system $\mathit{Sys}$ in isolation. 
\begin{lemma} 
\label{lem:sys2}
Let $\mathit{Sys}$ be the system defined in \autoref{exa:sys} and $A_n$ be the attack of \autoref{exa:att:integrity}. 
Let
\begin{small}
\begin{displaymath}
\mathit{Sys} \parallel A_n= \mathit{Sys_1}  \trans{t_1}\trans\tick \dots \mathit{Sys_{n-1}}  \trans{t_{n-1}}\trans\tick \mathit{Sys_n} 
\end{displaymath}
\end{small}%
such that the traces $t_j$ contain no $\tick$-actions, for any $j \in  1 .. n{-}1 $,  and for any  $i \in  1 .. n $ $\mathit{Sys_i}= \confCPS {E_i}{P_i} $ with 
$E_i = \envCPS 
{\statefun^i{}} 
{\actuatorfun^i{}} 
{ \delta }  
{\evolmap{}}
{ \epsilon }  
{\measmap{}}   
{\invariantfun{}}$.
Then, for any $i \in 1 .. n{-}1 $ we have the following:
\begin{itemize}[noitemsep]
\item  if   $ \actuatorfun^i{}(\mathit{cool})= \off $, then
 $\statefun^i{}(\mathit{temp})\in [0, 11.1+2+\delta ]$; 

\item if $ \actuatorfun^i{}(\mathit{cool})= \off $ and
$\statefun^i{}(\mathit{temp})\in (10.1+2, 11.1+2+\delta ]$, then we have $
\actuatorfun^{i+1}{}(\mathit{cool}) =\on$;

\item  if  $ \actuatorfun^i{}(\mathit{cool})=\on$, then   $\statefun^i{}(\mathit{temp}) \in ( 9.9+2-k *(1+\delta) , 11.1+2+\delta -k*(1-\delta)] $, 
for some  $k  \in 1..5$,   such that $\actuatorfun^{i-k}{}(\mathit{cool}) =\off $ and $ \actuatorfun^{i-j}{}(\mathit{cool}) =\on $, for $j \in 0..k{-}1$. 

\end{itemize}
\end{lemma}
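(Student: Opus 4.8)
The plan is to prove the three statements simultaneously by mathematical induction on $n$, the number of $\tick$-actions occurring in the trace, following step by step the structure of the proof of \autoref{lem:sys}. The one genuinely new ingredient is the effect of the attack $A_n$ on the temperature \emph{perceived} by the logic of $\mathit{Sys}$: by \autoref{prop:Attacker-preemptiveness}, whenever $\mathit{Ctrl}$ or $\mathit{IDS}$ accesses the unsecured sensor $s_{\mathrm{t}}$ while $A_n$ is live, the read is forced to synchronise with the write $\wact{x{-}2}{\mbox{\Lightning}s_{\mathrm{t}}}$ of $A_n$, where $x$ is the value the attack has just sampled from $s_{\mathrm{t}}$, so that $x \in [\statefun{}(\mathit{temp}){-}\epsilon,\statefun{}(\mathit{temp}){+}\epsilon]$ and the value delivered to the logic lies in $[\statefun{}(\mathit{temp}){-}\epsilon{-}2,\statefun{}(\mathit{temp}){+}\epsilon{-}2]$. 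In other words every threshold of the logic (the test $x>10$ that switches the coolant on, and the test $x>10$ in the $\mathit{IDS}$ check) is effectively raised by $2$, which is exactly why each real interval of \autoref{lem:sys} reappears here shifted by $2$. Note also that, unlike \autoref{lem:sys}, the present statement does not mention the variable $\mathit{stress}$, so that part of the argument is simply dropped.

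First I would dispatch the base case $n=1$ directly from the definitions in \autoref{exa:sys} and \autoref{exa:att:integrity}: initially $\actuatorfun^1{}(\mathit{cool})=\off$ and $\statefun^1{}(\mathit{temp})=0\in[0,11.1+2+\delta]$, while the premises of the second and third items are vacuous. For the inductive step I would assume the three claims for the trace of length $n-1$ and derive them for length $n$, using the same case split as in \autoref{lem:sys}, namely on whether the coolant is active in $\mathit{Sys_{n-1}}$, and exploiting the fact that the coolant, once switched on, stays on for exactly $5$ consecutive slots (imposed by the $\tick^5$ in $\mathit{Cooling}$, and unaffected by the attack). The arithmetic is the shifted version of that of \autoref{lem:sys}: if the coolant is off and $\statefun^{n-1}{}(\mathit{temp})\in(10.1+2,11.1+2+\delta]$ then the perceived temperature exceeds $10$ for every resolution of the sensor nondeterminism, so the coolant is switched on in $\mathit{Sys_n}$ (second item); if the coolant is off in both $\mathit{Sys_{n-1}}$ and $\mathit{Sys_n}$ then the contrapositive of the second item forces $\statefun^{n-1}{}(\mathit{temp})\le 10.1+2$, and one tick of heating keeps $\mathit{temp}$ in $[0,11.1+2+\delta]$; if instead the coolant has just been switched off at the end of a cooling cycle, the third item with $k=5$ gives $\statefun^{n-1}{}(\mathit{temp})\in(4.9,10.5]$, and one more heating step again lands inside $[0,11.1+2+\delta]$. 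The third item itself follows by combining the temperature range at the slot in which the coolant was switched on --- which is $(9.9+2,11.1+2+\delta]$, by the first item together with the (shifted) activation test --- with $k$ cooling steps, each decreasing $\mathit{temp}$ by an amount in $[1-\delta,1+\delta]$.

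The step I expect to be the main obstacle is showing that $k=5$ is really the maximal value in the third item, i.e.\ that the $\mathit{IDS}$ component always drives the coolant off after (at most) $5$ rounds rather than issuing $\mathsf{keep\_cooling}$. For $k<5$ this is automatic, since a cooling cycle cannot be cut short before its $\tick^5$ has elapsed; the delicate case is $k=5$, where I must check that after $5$ cooling steps the value reaching $\mathit{IDS}$ on $s_{\mathrm{t}}$ stays at most $10$, so that the $\mathsf{else}$ branch is taken. Using the $k=5$ instance of the third item, the true temperature then lies in $(9.9+2-5(1+\delta),\,11.1+2+\delta-5(1-\delta)]=(4.9,10.5]$, whence the attacker-supplied value lies in $[4.9-\epsilon-2,\,10.5+\epsilon-2]=[2.8,8.6]$, in particular at most $10$, so $\mathit{IDS}$ sends $\mathsf{stop}$ and $\mathit{Ctrl}$ executes $\wact{\off}{\mathit{cool}}$; hence the coolant cannot remain active for a sixth consecutive round. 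This is precisely the point where the concrete values $\epsilon=0.1$ and $\delta=0.4$ are essential, and where one has to be careful that it is indeed the attacker's fabricated reading that reaches $\mathit{IDS}$ (cf.\ \autoref{prop:Attacker-preemptiveness}). Assembling the three items for length $n$ then completes the induction.
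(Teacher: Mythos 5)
Your proposal is correct and follows essentially the same route as the paper: the paper's own proof of this lemma is literally ``similar to the proof of \autoref{lem:sys}'', i.e.\ redo that induction with every interval shifted by the attacker's offset $2$, which is exactly what you do, including the only delicate point (the $k=5$ case, where the value reaching $\mathit{IDS}$ lies in $(2.8,8.6]$ so the $\mathsf{stop}$ branch is taken and the cooling cannot exceed $5$ slots) and the observation that the $\mathit{stress}$ bookkeeping of \autoref{lem:sys} is dropped here. Your appeal to \autoref{prop:Attacker-preemptiveness} to justify that sensor reads are intercepted while $A_n$ is alive matches the assumption the paper itself makes in \autoref{exa:att:integrity} and in the proof of \autoref{prop:att:integrity}, so no discrepancy with the paper's argument arises.
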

\begin{proof}
Similar to the proof of    \autoref{lem:sys}.
\end{proof}

Now, everything is in place to prove \autoref{prop:att:integrity}. 
\begin{proof}[Proof of  \autoref{prop:att:integrity}]
Let us proceed by case analysis. 
\begin{itemize}[noitemsep]
\item 
Let $0 \leq n \leq 8$.  
In the proof of \autoref{prop:att:DoS}, we remarked that the system $\mathit{Sys}$ in isolation may sense a temperature greater than $10$  
only after $8$ $\tick$-actions, i.e., in the $9$-th time slot.
However, the life of the attack is $n \leq 8$, and in the $9$-th time
slot the attack is already terminated. As a consequence, starting
from the $9$-th time slot the system will correctly sense the
temperature and it will correctly activate the cooling system.
\item Let $n=9$. 
The maximum value that may be reached by the state variable
$\mathit{temp}$ after $8$ $\tick$-actions, i.e., in the $9$-th time
slot, is $8 * (1+ \delta)=8*1.4= 11.2$. However, since in the $9$-th
time slot the attack is still alive, the process $\mathit{Ctrl}$
will sense a temperature below $10$ and the system will move to the next
time slot and the state variable   $\mathit{stress}$ is incremented. Then, in the $10$-th time slot, when the attack is
already terminated, the maximum temperature the system may reach is $11.2
+ (1+ \delta)=12.6$ degrees and the state variable $\mathit{stress}$ is equal to  $1$. Thus, the process $\mathit{Ctrl}$ will sense
a temperature greater than $10$, activating the cooling system  and incrementing the state variable $\mathit{stress}$. 
As a consequence, during the following $4$ time units of cooling, the value of the state variable $\mathit{temp}$ will be at most
$12.6 - 4*(1-\delta)= 12.6- 4*0.6=10.2$, and hence  in the $14$-th time
slot, the value of the state variable $\mathit{stress}$ is $5$. 
As a consequence,  the system will enter in an unsafe state.
In the $15$-th time
slot, the value of the state variable $\mathit{stress}$ is still equal to $5$ 
and the system will still be in an unsafe state.
However, 
 the value of the state variable $\mathit{temp}$ will be at most
$12.6 - 5*(1-\delta)= 12.6- 5*0.6=9.6$
 which will be sensed by process $\mathit{IDS}$  as at most
$9.7$ (sensor error $\epsilon=0.1$). As a consequence,  no alarm will be turned on and the   variable $\mathit{stress}$ will be reset.
Moreover, the invariant will be obviously always preserved. 

As in the current time slot the attack has already terminated, from this
point in time on, the system will behave correctly with neither deadlocks
or alarms.

\item 
Let $n \geq 10$. 
In order to prove that  $\mathit{Sys} \parallel A_{n} \, \simeq_{[14,n{+}7]} \,
\mathit{Sys}$,   it is enough to show that:
\begin{itemize}[noitemsep]

\item  the system $\mathit{Sys} \parallel A_n$ does not deadlock;

\item the system $\mathit{Sys} \parallel A_{n}$ may not emit any output in the first $13$ time slots; 
\item there is a trace in which the system $\mathit{Sys} \parallel A_{n}$ enters in an unsafe state in the $14$-th time slot;

\item 
there is a trace in which the system $\mathit{Sys} \parallel A_n$ is in an unsafe state in the $(n{+}7)$-th time slot;

 \item   the system $\mathit{Sys} \parallel A_n$ does not have any execution
trace emitting an output along channel $\mathit{alarm}$
 or entering in an unsafe state after   the $n+7$-th time slot. 

\end{itemize}

As regards the first fact, since $\delta=0.4$, by \autoref{lem:sys2} the
temperature of the system under attack will always remain in the real
interval $[0, 13.5]$. Thus, the invariant is never violated and the trace
of the system under attack cannot contain any $\dead$-action. Moreover,
when the attack terminates, if the temperature is in $[0,9.9]$, the system
will continue his behaviour correctly, as in isolation. Otherwise, since
the temperature is at most $13.5$, after a possible sequence of cooling
cycles, the temperature will reach a value in the interval $[0,9.9]$, and
again the system will continue its behaviour correctly, as in isolation.

Concerning the second and the third facts, the proof is analogous to that
of case $ n=9$.


Concerning the fourth fact, firstly we prove that for all time slots $m$,
with $9 < m \leq n$, there is a trace of the system $\mathit{Sys}
\parallel A_n$ in which the state variable $\mathit{temp}$ reaches the
values $12$ in the time slot $m$. Since the attack is alive at that time,
and $\epsilon=0.1$, when the variable $\mathit{temp}$ will be equal to
$12$ the sensed temperature will lay in the real interval $[9.9,10.1]$.

The fastest trace reaching the temperature of $12$ degrees requires
$\lceil \frac{12}{1 + \delta }\rceil = \lceil \frac{12}{1.4 }\rceil =9$
time units, whereas the slowest one $\lceil \frac{12}{1 - \delta }\rceil =
\lceil \frac{12}{0.6 }\rceil =20$ time units. Thus, for any time slot $m$,
with $9 < m \leq 21$, there is a trace of the system where the value of
the state variable $\mathit{temp}$ is $12$. Now, for any of those time
slots $m$ there is a trace in which the state variable $\mathit{temp}$ is
equal to $12$ in all time slots $m+10i < n$, with $i\in \mathbb{N}$. As
already said, when the variable $\mathit{temp}$ is equal to $12$ the
sensed temperature lays in the real interval $[9.9, 10.1]$ and the cooling
might be activated. Thus, there is a trace in which the cooling system is
activated. We can always find a trace where during the cooling the
temperature decreases of $1+\delta$ degrees per time unit, reaching at the
end of the cooling cycle the value of $5$. Thus, the trace may continue
with $5$ time slots in which the variable $\mathit{temp}$ is increased of
$1+\delta$ degrees per time unit; reaching again the value $12$. Thus, for
all time slots $m$, with $9 < m \leq n$, there is a trace of the system
$\mathit{Sys} \parallel A_n$ in which the state variable $\mathit{temp}$
has value $12$ in the time slot $m$.

Therefore, we can suppose that in the $n$-th time slot the variable
$\mathit{temp}$ is equal to $12$ and, since the maximum increment of
temperature is $1.4$, the the variable $\mathit{stress}$ is at least equal
to $1$. Since the attack is alive and $\epsilon=0.1$, in the $n$-th time
slot the sensed temperature will lay in $[9.9,10.1]$. We consider the case
in which the sensed temperature is less than $10$ and hence the cooling is
not activated.

Thus, in the $n{+}1$-th time slot the system may reach a temperature of
$12 + 1 + \delta=13.4$ degrees and the process $\mathit{Ctrl}$ will sense
a temperature above $10$, and it will activate the cooling system. In this
case, the variable $\mathit{stress}$ will be increased. As a consequence,
after further $5$ time units of cooling, i.e.\ in the $n{+}6$-th time
slot, the value of the state variable $\mathit{temp}$ may reach $13.5 -
5*(1-\delta)=10.4$ and the alarm will be fired and the variable
$\mathit{stress}$ will be still equal to $5$. Therefore, in the $n{+}7$-th
time slot the variable $\mathit{stress}$ will be still equal to $5$ and
the system will be in an unsafe state.

Concerning the fifth fact,
%
%
by \autoref{lem:sys2}, in the $n{+}1$-th time slot the attack will be terminated and the system may reach a temperature that is, in the worst case, at most $13.5$. Thus, the cooling system may be activated and the variable $\mathit{stress}$ will be increased. As a consequence, in the $n{+}7$-th time slot, the value of the state variable $\mathit{temp}$ may be at most $13.5-6*(1-\delta)=13.5-6*0.6=9.9$ and the variable $\mathit{stress}$ will be reset to $0$. Thus, after the $n+7$-th time slot, the system will behave correctly, as in isolation.
\end{itemize}
\end{proof}

In order to prove \autoref{thm:sound}, we introduce the following lemma. 
\begin{lemma}
\label{lem:top}
Let $M$ be an honest and sound  \CPS{}, $C$ a class
of attacks, and 
$A$  an  attack of an arbitrary class  $C' \preceq C$.
Whenever $M\parallel A\trans t  M'\parallel A'$, then 
\[
M\parallel \mathit{Top}(C) \trans t  M'\parallel\prod_{\iota \in \I } \mathit{Att}( 
\iota ,  \#\tick(t){+}1   , C(\iota)) \enspace .
\]
\end{lemma}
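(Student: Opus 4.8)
The plan is to proceed by induction on the length of the trace $t$, using the definition of $\mathit{Top}(C)$ as a parallel product of the processes $\mathit{Att}(\iota,1,C(\iota))$ and the fact that $C'\preceq C$ means $C'(\iota)\subseteq C(\iota)$ for every $\iota\in\I$. The key structural observation is that $\mathit{Att}(\iota,k,{\cal T})$, by its very definition via the nondeterministic choice $\oplus$, can always either mimic a malicious action $\iota$ when $k\in{\cal T}$ (taking the left branch of the timeout, then recursing with the same index $k$), or silently let time pass via $\tick$ (taking the right summand or the timeout branch), in both cases bumping the index to $k+1$ after a $\tick$. Hence $\mathit{Att}(\iota,k,{\cal T})$ is "at least as permissive" as any attack component that could perform $\iota$ at time $k$ and whose schedule for $\iota$ is contained in ${\cal T}$: whatever a class-$C'$ attack $A$ does to device $\iota$ at time slot $k$, the matching component $\mathit{Att}(\iota,k,C(\iota))$ can reproduce it because $C'(\iota)\subseteq C(\iota)$ guarantees $k\in C(\iota)$ whenever $A$ genuinely fires $\iota$ at slot $k$.

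First I would set up the induction on $|t|$. The base case $t=\varepsilon$ is immediate, with $\mathit{Top}(C)=\prod_{\iota\in\I}\mathit{Att}(\iota,1,C(\iota))$ and $\#\tick(\varepsilon)+1=1$. For the inductive step, write $t=t'\alpha$ with $M\parallel A\trans{t'}M''\parallel A''\trans{\alpha}M'\parallel A'$; by the induction hypothesis $M\parallel\mathit{Top}(C)\trans{t'}M''\parallel\prod_{\iota}\mathit{Att}(\iota,\#\tick(t')+1,C(\iota))$. I would then do a case analysis on the last action $\alpha$. If $\alpha$ is a channel action, a $\tau$ internal to $M''$, a secured-device access, or a $\dead/\unsafe$ observable, then $A''$ is unchanged ($A'=A''$), the environment/process of $M''$ evolves on its own, and each $\mathit{Att}(\iota,\cdot,\cdot)$ stays put; $\#\tick$ is unchanged so the index does not move. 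If $\alpha=\tau{:}p$ arising from an attack access to an unsecured device $p$ (rules $\mbox{\Lightning}$SensWrite$\mbox{\,\Lightning}$ or $\mbox{\Lightning}$ActRead$\mbox{\,\Lightning}$ lifted by \rulename{Tau}), then $A''$ performs some $\iota v$ and $A'$ is the residual; here I invoke that $A$ has class $C'$, so the time slot $k=\#\tick(t')+1$ at which this $\iota$-action occurs lies in $C'(\iota)\subseteq C(\iota)$, hence $\mathit{Att}(\iota,k,C(\iota))$ can take its left ($\oplus$) branch, fire the same $\iota v$ via \rulename{Write}/\rulename{Read}, and recurse to $\mathit{Att}(\iota,k,C(\iota))$ with the \emph{same} index — matching the fact that $\#\tick$ did not increase. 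The other components $\mathit{Att}(\iota',\cdot,\cdot)$ with $\iota'\neq\iota$ are untouched. Finally, if $\alpha=\tick$, then rule \rulename{TimePar} forces \emph{every} parallel component to make a $\tick$-move; each $\mathit{Att}(\iota,k,C(\iota))$ can do so (via \rulename{Timeout} on the timeout, \rulename{Delay} on the $\tick.\mathit{Att}(\dots)$ summand, or \rulename{TimeNil} on $\nil$ when $k>\sup C(\iota)$) and evolves to $\mathit{Att}(\iota,k{+}1,C(\iota))$, while $M''$ moves to $M'$ by the matching \rulename{Time} instance — and $\#\tick(t)=\#\tick(t')+1$, so the new common index $k+1$ is exactly $\#\tick(t)+1$ as required.

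The main obstacle I anticipate is the $\tick$-case, specifically checking that $M''\parallel\mathit{Top}(C)$ can actually perform a $\tick$ whenever $M''\parallel A''$ can. Rule \rulename{Time} has the side condition $\confCPSS E {\cal S} P \ntrans{\tau}$, i.e., time can elapse only if no $\tau$ is enabled; and $\mathit{Top}(C)$ is strictly more nondeterministic than $A$, so in principle it could offer extra $\tau{:}p$ transitions (via its $\oplus$ left branches) that block $\tick$. The resolution is that $\oplus$ is a \emph{nondeterministic} choice, so $\mathit{Att}(\iota,k,C(\iota))$ always \emph{also} offers the right summand $\tick.\mathit{Att}(\iota,k{+}1,C(\iota))$ and thus can always perform $\tick$ on its own; and since in the source derivation $M''\parallel A''\trans{\tick}$ was possible, $A''$ itself was not offering any unsecured-device interaction that \rulename{Tau} would turn into a $\tau$ — but even if it were, we are free to \emph{choose} the $\tick$-resolution of each $\oplus$ in $\mathit{Top}(C)$, and the statement only claims the \emph{existence} of the displayed transition sequence, not that it is forced. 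So the side condition is met by the same argument that met it for $A$, modulo this choice of branch. The rest is routine bookkeeping of which SOS rule fires in which component.
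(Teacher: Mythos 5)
Your proposal is correct and follows essentially the same route as the paper's proof: induction on the length of $t$ with a case analysis on the last action, using $C'\preceq C$ to place each malicious action's time slot in $C(\iota)$ so that the matching $\mathit{Att}$ component can mimic it while keeping the same index, advancing every index by one on $\tick$, and resolving the nondeterministic choices so that $\mathit{Top}(C)$ does not block time passing. The only point worth making explicit is that the $\oplus$-resolution argument you give for the $\ntrans{\tau}$ side condition of rule Time is also needed for the negative premises of rules SensReadUnsec and ActWriteUnsec when the honest controller accesses an unsecured device in parallel with $\mathit{Top}(C)$ — the paper handles exactly this by resolving $\mathit{rnd}(\{\true,\false\})=\false$ in those sub-cases as well.
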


\begin{proof}
Let us denote with $\mathit{Top}^h(C)$ the attack process
\[
  \prod_{\iota \in \I } \mathit{Att}( 
\iota , h , C(\iota)).
\]
Obviously, $\mathit{Top}^1 (C)=\mathit{Top} (C)$.

The proof is by mathematical induction on the  length $k$ of the trace $t$.

\noindent
\emph{Base case} $k=1$. \\
This means $t=\alpha $, for some action $\alpha$.
We proceed by case analysis on the action $\alpha$.
\begin{itemize}
\item $\alpha = \out {c} v$. Since the attacker $A$ does not use a
communication channel, from $M\parallel A\trans {\out {c} v} M'\parallel
A'$ we can derive that $A=A'$ and $M \trans {\out {c} v} M' $. Hence by
rules \rulename{Par} and \rulename{Out}, we derive $M \parallel
\mathit{Top}(C) \trans {\out {c} v} M' \parallel \mathit{Top}^1(C)=M'
\parallel \mathit{Top}(C)$.

\item $\alpha = \inp {c} v$. This case is similar to the previous one.

\item $\alpha =\tau$. There are several sub-cases. 
\begin{itemize}
\item Let $M\parallel A\trans {\tau} M'\parallel A'$ be derived by an
application of rule \rulename{SensReadSec}. Since the attacker $A$
performs only malicious actions on physical devices, from $M\parallel
A\trans {\tau} M'\parallel A'$ we can derive that $A=A'$ and $P \trans
{\rcva { s} v} P' $, for some processes $P$ and $P'$ such that
$M=\confCPSS E {\cal S} P$ and $M'=\confCPSS E {\cal S} {P'}$. Hence by an
application of rules \rulename{Par} and \rulename{SensReadSec} we derive
$M \parallel \mathit{Top}(C) \trans {\tau} M' \parallel
\mathit{Top}^1(C)=M' \parallel \mathit{Top}(C)$.
 
\item  Let $M\parallel A\trans {\tau}  M'\parallel A'$  be derived by
an application of  rule \rulename{ActWriteSec}.
This case is similar to the previous one.

\item  Let $M\parallel A\trans {\tau}  M'\parallel A'$ be  derived by 
an application of rule \rulename{SensReadUnSec}. 
Since the attacker   $A$  performs only malicious actions, from
 $M\parallel A\trans {\tau}  M'\parallel A'$ we can derive that $A=A'$ and
$P \trans {\rcva { s} v}  P' $ for some processes $P$ and $P$' such that
$M=\confCPSS E {\cal S} P$ and $M'=\confCPSS E {\cal S} {P'}$.

By considering $\mathit{rnd}(\{\true,\false\})=\false$ for any  process $\mathit{Att}( 
\iota , 1 , C(\iota))$,  we have that $\mathit{Top}(C)$ can perform only a $\tick $ action, and 
\[ \mathit{Top}(C) \ntrans { \snda {\mbox{\Lightning}s} v} \enspace . \]
 Hence by an application of  rules \rulename{Par} and \rulename{SensReadUnSec}
we derive  $M \parallel \mathit{Top}(C) \trans  {\tau} M' \parallel \mathit{Top}^1(C)=M' \parallel \mathit{Top}(C)$.
 
\item  Let $M\parallel A\trans {\tau}  M'\parallel A'$ be derived by an 
application of rule \rulename{ActWriteUnSec}. 
This case is similar to the previous one.

\item  Let $M\parallel A\trans {\tau}  M'\parallel A'$ be derived by 
an application of  rule \rulename{$\mbox{\Lightning}$SensRead$\mbox{\,\Lightning}$}. 
Since $M$ is sound it follows that  $M=M'$ and  
  $A\trans  {\rcva {\mbox{\Lightning}s} v }A'$. This entails $1  \in C'({\mbox{\Lightning}s?}) \subseteq   C({\mbox{\Lightning}s?}) $, and 
\[ \mathit{Top}(C) \trans { \rcva {\mbox{\Lightning}s} v} \mathit{Top}^1(C) =\mathit{Top}(C) \]  by assuming 
$\mathit{rnd}(\{\true,\false\})=\true$ for the process  
   $ \mathit{Att}(  {\mbox{\Lightning}s?}, 1 , C( {\mbox{\Lightning}s?}))$. 
Hence, by an application of  rules \rulename{Par} and \rulename{$\mbox{\Lightning}$SensRead$\mbox{\,\Lightning}$}
we derive  $M \parallel \mathit{Top}(C) \trans  {\tau} M' \parallel \mathit{Top}^1(C)=M' \parallel \mathit{Top}(C)$.
 
\item  Let $M\parallel A\trans {\tau}  M'\parallel A'$ be derived by 
an application of rule \rulename{$\mbox{\Lightning}$ActWrite$\mbox{\,\Lightning}$}. 
Since $M$ is sound it follows that $M=M'$ and 
 $ A\trans  {\snda {\mbox{\Lightning}a} v }A'$. 
As a consequence,  $1  \in C'({\mbox{\Lightning}a}!) \subseteq C({\mbox{\Lightning}a}!)  $, and 
   \[ \mathit{Top}(C) \trans { \snda {\mbox{\Lightning}a} v} \mathit{Top}^1(C) =\mathit{Top}(C) \]  by assuming 
 $\mathit{rnd}(\{\true,\false\}){=}\true$ and $\mathit{rnd}(\mathbb{R}){=}v$ for the process
    $ \mathit{Att}( {\mbox{\Lightning}a}! , 1 , C({\mbox{\Lightning}a}!))$. 
 Thus, by an application of rules  \rulename{Par} and \rulename{$\mbox{\Lightning}$ActWrite$\mbox{\,\Lightning}$}
we derive  $M \parallel \mathit{Top}(C) \trans  {\tau} M' \parallel \mathit{Top}^1(C)=M' \parallel \mathit{Top}(C)$.

\item  Let $M\parallel A\trans {\tau}  M'\parallel A'$ be derived by 
an application of rule \rulename{Tau}. 
Let $M=\confCPSS E {\cal S} P$ and $M'=\confCPSS {E'} {\cal S} {P'}$. 
There are two possibilities:  either (i)
$P \parallel A \trans \tau P' \parallel A'$, or (ii)
$P \parallel A \trans {\tau :p} P' \parallel A'$.

In the case (i), by inspection of \autoref{tab:lts_processes} and by definition of attacker, it follows that $A$  
cannot perform $\tau$-action since $A$ does not use channel communication
and   performs only malicious actions.
Hence $ P   \trans \tau P' $ and,  by an application 
of rules \rulename{Par} and \rulename{Tau},
we derive  $M \parallel \mathit{Top}(C) \trans  {\tau} M' \parallel \mathit{Top}^1(C)=M' \parallel \mathit{Top}(C)$.

In the case (ii), since $M$ is sound and $A$ can   performs only
 malicious actions,
we have that either (i) $P\trans  {\rcva { s} v }P'$ and $A\trans  {\snda {\mbox{\Lightning}s} v }A'$ 
or, (ii)
 $P\trans  {\snda { a} v }P'$ and $A\trans  {\rcva {\mbox{\Lightning}a} v }A'$.
 We consider the case (i) $P\trans  {\rcva { s} v }P'$ and $A\trans  {\snda {\mbox{\Lightning}s} v }A'$; the case (ii)
is similar.  
Since    
  $A\trans  {\snda {\mbox{\Lightning}s} v }A'$, we derive $1  \in C'(\mbox{\Lightning}s!) \subseteq C(\mbox{\Lightning}s!)  $, and 
   \[ \mathit{Top}(C) \trans { \snda {\mbox{\Lightning}s} v} \mathit{Top}^1(C) =\mathit{Top}(C) \]  by assuming 
 $\mathit{rnd}(\{\true,\false\}){=}\true$ and $\mathit{rnd}(\mathbb{R}){=}v$
   for the process  $ \mathit{Att}( {\mbox{\Lightning}s}! , 1 , C({\mbox{\Lightning}s}!))$. 
  Thus,  by an application of rules  \rulename{$\mbox{\Lightning}$SensWrite$\mbox{\,\Lightning}$}
 and \rulename{Tau}
we derive  $M \parallel \mathit{Top}(C) \trans  {\tau} M' \parallel \mathit{Top}^1(C)=M' \parallel \mathit{Top}(C)$.
 
\end{itemize} 
\item $\alpha = \tick$. This implies that 
 the transition $M\parallel A\trans \tick  M'\parallel A'$ is derived by 
an application of rule  \rulename{Time}.
From $M \parallel A\trans \tick M'\parallel A'$ we derive $M\trans \tick M'$. 
Hence, it suffices to prove that 
$\mathit{Top}(C) \trans \tick \mathit{Top}^2(C)  $  and $M\parallel \mathit{Top}(C) \ntrans \tau$.

First, let us prove that $Top(C) \trans \tick Top^2(C)  $. 
We consider two cases: $1 \in C(\iota)$ and $1 \not\in C(\iota)$.
Let $1 \in C(\iota)$. The transition 
$  \mathit{Att}( \iota, 1 , C(\iota))  \trans \tick   \mathit{Att}( \iota, 2 , C(\iota)) $
 can be derived by assuming    $\mathit{rnd}(\{\true,\false\})=\false$. 
Moreover, since $\mathit{rnd}(\{\true,\false\})=\false$ the process $ \mathit{Att}( \iota, 1 , C(\iota)) $ can perform only a $\tick $ action.
Let   $1 \not \in C(\iota)$. Also in this case the  
process  $ \mathit{Att}( \iota, 1 , C(\iota)) $ can perform only a $\tick $ action. As a consequence, e $  \mathit{Att}( \iota, 1 , C(\iota))  \trans \tick   \mathit{Att}( \iota, 2 , C(\iota)) $. Thus, 
\[ \mathit{Top}(C) \trans \tick  \mathit{Top}^2(C) \enspace . \]

Let us prove  now  that $M\parallel \mathit{Top}(C) \ntrans \tau$.
Since $M\parallel A \ntrans \tau$ it follows that
$M  \ntrans \tau$.
Moreover, since   $\mathit{Top}(C)$ can perform only a $\tick  $ action then, 
by definition of rule \rulename{Time}, it 
follows that $M\parallel \mathit{Top}(C) \ntrans \tau$.

\item $\alpha = \dead$. This case is not possible, because 
 $M\parallel A\trans {\dead}  M'\parallel A'$ would entail 
$M \trans {\dead}  M' $. But $M$ is sound and it cannot deadlock. 

\item $\alpha = \unsafe$. Again, this case is not possible because 
$M$ is sound. 

\end{itemize}

\noindent 
\emph{Inductive case}: $k>1$.\\
We have to prove that
 $M\parallel A\trans t  M'\parallel A'$ implies 
$M\parallel \mathit{Top}(C) \trans t  M'\parallel \mathit{Top}^{ \#\tick(t)+1  }(C)$.

Since the length of $t$ is greater than $1$,
it follows that $t=t' \alpha$, for some $t'$ and $\alpha$.
Hence, there are $M''$ and $A''$ such that 
 \[ M\parallel A\trans {t'}   M''\parallel A'' \trans \alpha  M'\parallel A'
\enspace . \]
By the induction hypothesis, it follows that 
$M\parallel \mathit{Top}(C) \trans { t'}  M''\parallel\mathit{Top}^{\#\tick(t')+1} (C)$. To get the result it is enough to show that 
 $ M''\parallel A'' \trans \alpha  M'\parallel A'$ implies
$M'' \parallel  \mathit{Top}^{\#\tick(t' )+1 }   (C)
\trans {\alpha}  M'\parallel \mathit{Top}^{\#\tick(t  )+1 }(C)$.
The reasoning is similar to that followed in  the \emph{base case\/}, except 
for $\alpha=\dead$ and $\alpha=\unsafe$.
We prove the case $\alpha=\dead$, the other is similar.


Let $M=\confCPSS E {\cal S} P$.
The transition 
   $M''\parallel A\trans {\dead}  M'\parallel A'$ must be derived by
an application of rule  \rulename{Deadlock}. This implies that 
that
$M''=M'$,  $A''=A'$ and $ \invariantfun{}(E)=\false$. 
Thus, by an application of rule \rulename{Deadlock} 
we derive  
\[M'' \parallel  \mathit{Top}^{\#\tick(t' )+1 }   (C)
\trans {\dead}  M'\parallel \mathit{Top}^{\#\tick(t' )+1 }(C) . \]
Since $\#\tick(t  )+1=\#\tick(t'  )+\#\tick(\dead) +1=\#\tick(t' )+1$ we have that
$M'' \parallel  \mathit{Top}^{\#\tick(t' )+1 }   (C)
\trans {\dead}  M'\parallel \mathit{Top}^{\#\tick(t  )+1 }(C)$. 
As required.  
\end{proof}

Everything is finally in place to prove \autoref{thm:sound}.
\begin{proof}[Proof of \autoref{thm:sound}]
The top attacker $\mathit{Top}(C)$ can mimic any execution 
trace of any attack $A$ of class $C'$, with $C' \preceq C$. Thus, by \autoref{lem:top},
if  $M \parallel A\trans t$, for some trace $t$, 
then  $ M \parallel \mathit{Top}(C)\trans t$ as well.

 For any 
$M$ and $A$, either $M \parallel A \sqsubseteq M$ or 
 $M \parallel A   \sqsubseteq_{{m_2}..{n_2}}   M$, 
for some $m_2$ and $n_2$ ($m_2=1$ and 
$n_2=\infty$ if the two systems are completely unrelated).
Suppose by contradiction that $M \parallel A     \not\sqsubseteq   M$ and 
 $M \parallel A    \sqsubseteq_{{m_2}..{n_2}}   M$, 
with  $m_2..n_2 \not \subseteq m_1 .. n_1$. 
There are two cases: either $n_1=\infty$ or $n_1 \in \mathbb{N}^+$. 

If $n_1=\infty$ then $m_2 < m_1$. Since $M \parallel A
\sqsubseteq_{{m_2}..{n_2}} M$, by \autoref{Time-bounded-trace-equivalence}
there is a trace $t$, with $\#\tick(t)=m_2{-} 1$, such that $M \parallel A
\trans t$ and $M \not\!\!\Trans{\hat{t}}$. By \autoref{lem:top}, this
entails $ M \parallel \mathit{Top}(C)\trans t$. Since $M
\not\!\!\Trans{\hat{t}}$ and $\#\tick(t)=m_2{-}1 < m_1$, this contradicts
$M \parallel \mathit{Top}(C) \sqsubseteq_{{m_1}..{n_1}} M$.

If $n_1 \in \mathbb{N}^+$ then $m_2 < m_1$ and/or $n_1 < n_2$, and we
reason as in the previous case.
\end{proof}

\subsection{Proofs of \autoref{sec:impact}}

In order to prove \autoref{prop:toll}, we need a couple of lemmas. 

\autoref{lem:sys3} is a variant of \autoref{lem:sys}. Here the behaviour
of $\mathit{Sys} $ is parametric on the uncertainty.

\begin{lemma} 
\label{lem:sys3}
Let $\mathit{Sys}$ be the system defined in \autoref{exa:sys}, and 
  $0.4 < \gamma \leq \frac{9}{20}$. 
Let
\begin{small}
\begin{displaymath}
\replaceENV {Sys} \delta  \gamma   =\mathit{Sys_1} \trans{t_1}\trans\tick 
\mathit{Sys_2}  \dots 
\trans{t_{n-1}}\trans\tick  \mathit{Sys_n}
\end{displaymath}
\end{small}%
such that the traces $t_j$ contain no $\tick$-actions, for any $j \in  1 .. n{-}1 $, and for any  $i \in  1 .. n $ $\mathit{Sys_i}= \confCPS {E_i}{P_i} $ with 
$E_i = \envCPS 
{\statefun^i{}} 
{\actuatorfun^i{}} 
{ \gamma }  
{\evolmap{}}
{ \epsilon }  
{\measmap{}}   
{\invariantfun{}}$.
Then, for any $i \in 1 .. n{-}1 $ we have the following:
\begin{itemize}[noitemsep]
\item  if   $ \actuatorfun^i{}(\mathit{cool})= \off $ then
 $\statefun^i{}(\mathit{temp})\in [0, 11.1+ \gamma ]$
 and $\statefun^i{}(\mathit{stress})=0$ if $ \statefun^i{}(\mathit{temp})  \in [0, 10.9+\gamma ] $ and, otherwise,  $\statefun^i{}(\mathit{stress})=1 $;

\item   if   $\actuatorfun^i{}(\mathit{cool})= \off $ and 
$\statefun^i{}(\mathit{temp})\in (10.1, 11.1+ \gamma ]$ then   $ \actuatorfun^{i+1}{}(\mathit{cool}) =\on$
 and $\statefun^{i{+}1}{}(\mathit{stress})  \in 1..2$;

\item  if  $ \actuatorfun^i{}(\mathit{cool})=\on$ then   $\statefun^i{}(\mathit{temp}) \in 
( 9.9-k *(1+\gamma), 11.1+\gamma -k*(1- \gamma)] $, 
for some  $k  \in 1..5$,   such that $\actuatorfun^{i-k}{}(\mathit{cool}) =\off $ and $ \actuatorfun^{i-j}{}(\mathit{cool}) =\on $, for $j \in 0..k{-}1$;
moreover,  if  $k\in 1.. 3$ then     $\statefun^i{}(\mathit{stress})  \in  1..k{+}1  $, otherwise,  
$\statefun^i{}(\mathit{stress}) =0$.

\end{itemize}
\end{lemma}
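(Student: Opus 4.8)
The plan is to prove \autoref{lem:sys3} by the same mathematical induction on $n$ (the number of $\tick$-actions along the trace) used for \autoref{lem:sys}, since \autoref{lem:sys3} is literally the statement of \autoref{lem:sys} with the concrete uncertainty $\delta=0.4$ replaced by the symbolic parameter $\gamma$ subject to $0.4<\gamma\leq\tfrac{9}{20}$. Under this constraint one has $1-\gamma\in[\tfrac{11}{20},\tfrac{3}{5})$ and $1+\gamma\in(\tfrac{7}{5},\tfrac{29}{20}]$, which are the only numeric facts about $\gamma$ that the argument needs. The base case $n=1$ is immediate from the definition of $\mathit{Sys}$ in \autoref{exa:sys}: the initial environment has $\statefun{}(\mathit{temp})=0$, $\statefun{}(\mathit{stress})=0$ and $\actuatorfun{}(\mathit{cool})=\off$, which makes the first item hold and the other two vacuous.

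For the inductive step I would assume the three items for all indices up to $n-1$ and establish them for $n$ by the same case analysis as in the proof of \autoref{lem:sys}: split according to whether the coolant is active in $\mathit{Sys}_n$, and, within each case, according to whether it was active in $\mathit{Sys}_{n-1}$. The evolution law makes $\mathit{temp}$ change by an amount in $[1-\gamma,1+\gamma]$ per time unit (upward when $\actuatorfun{}(\mathit{cool})=\off$, downward when $\actuatorfun{}(\mathit{cool})=\on$), and makes $\mathit{stress}$ increase by one, capped at $5$, exactly when the current value of $\mathit{temp}$ exceeds $9.9$, and reset to $0$ otherwise. The temperature bounds of items~1--3 then go through verbatim with $\delta$ renamed to $\gamma$: off the coolant, $\mathit{temp}$ never exceeds $10.1+(1+\gamma)=11.1+\gamma$ because as soon as an off-coolant value lies in $(10.1,11.1+\gamma]$ the sensed value exceeds $10$ (it is at least $\mathit{temp}-\epsilon>10.1-0.1=10$) and the controller switches the cooling on for the next slot, which is precisely item~2; and after $k\in 1..5$ cooling slots the temperature lies in $(9.9-k(1+\gamma),\,11.1+\gamma-k(1-\gamma)]$, since it was at most $11.1+\gamma$ when cooling started and drops by at least $1-\gamma$ each slot. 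One also checks that cooling never lasts more than five slots and that no alarm is ever fired: after five cooling slots $\mathit{temp}\leq 11.1+\gamma-5(1-\gamma)=6.1+6\gamma\leq 8.8<10-\epsilon$, so when the controller synchronises with the $\mathit{IDS}$ the latter always senses a temperature below $10$ and answers $\mathsf{stop}$, which also keeps $\mathit{temp}$ well inside $[0,50]$.

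The delicate part, and the only place where the specific bound $\tfrac{9}{20}$ is used, is the tracking of $\mathit{stress}$ in item~3 (and, by propagation, in items~1--2). Starting from the maximal off-coolant value $11.1+\gamma$ and decreasing minimally, after three cooling slots the temperature is at most $11.1+\gamma-3(1-\gamma)=8.1+4\gamma$, and $\gamma\leq\tfrac{9}{20}$ is precisely the inequality $8.1+4\gamma\leq 9.9$. Hence by the third cooling slot $\mathit{temp}$ has fallen to at most $9.9$, so from the fourth cooling slot on the increment condition $\mathit{temp}>9.9$ fails and $\mathit{stress}$ is reset, which gives $\mathit{stress}=0$ for $k>3$; for $k\in 1..3$ the temperature in the preceding slot may still exceed $9.9$, so $\mathit{stress}$ may have been incremented once per slot from an initial value in $\{0,1\}$, yielding $\mathit{stress}\in 1..k{+}1$. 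I expect the main obstacle to be exactly this bookkeeping: one has to count how many consecutive slots can carry $\mathit{temp}>9.9$ along a worst-case (slowest-cooling, highest-temperature) trace and verify that $\tfrac{9}{20}$ is the largest value of $\gamma$ for which $\mathit{stress}$ cannot reach $5$ --- in agreement with \autoref{prop:toll}, where $\replaceENV{\mathit{Sys}}{\delta}{\gamma}\sqsubseteq\mathit{Sys}$ fails precisely when $\gamma>\tfrac{9}{20}$.
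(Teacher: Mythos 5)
Your proposal is correct and follows essentially the same route as the paper: the paper's proof is exactly the observation that the induction of Lemma~\ref{lem:sys} carries over with $\delta$ renamed to $\gamma$, the only $\gamma$-sensitive step being the claim $\statefun{}(\mathit{stress})=0$ for $k\in 4..5$, which reduces to $11.1+\gamma-3(1-\gamma)=8.1+4\gamma\leq 9.9$, i.e.\ $\gamma\leq\frac{9}{20}$ --- precisely the computation you single out as the delicate part.
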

\begin{proof}
Similar to the proof of    \autoref{lem:sys}.
The crucial difference w.r.t.\ the proof of   \autoref{lem:sys} 
is limited to 
the second part of the third item. In particular the part saying that
$\statefun^i{}(\mathit{stress}) =0$, when $k \in 4..5$.
Now, after $3$ time units of cooling, the state variable 
$ \mathit{stress} $ lays  in the integer interval $  1..k{+}1=1..4$.
 Thus, in order to  have  $\statefun^i{}(\mathit{stress}) =0$, when $k \in 4..5$, the   temperature in the
third time slot of  the cooling must be less than or equal to $9.9$.
However, from the first statement of the third item 
 we deduce that,   in the
third time slot of cooling,  the  state variable $\mathit{temp}$
  reaches at most 
$ 11.1+\gamma -3*(1- \gamma)  =  8.1+4\gamma $. Thus, 
Hence we have that $ 8.1+4\gamma \leq 9.9$ for $\gamma \leq \frac{9}{20}$.
\end{proof}

The following lemma is a variant of \autoref{prop:sys}. 

\begin{lemma} 
\label{prop:sys:damage}
Let $\mathit{Sys}$ be the system defined in \autoref{exa:sys} and
$\gamma$ such that 
  $0.4 < \gamma \leq \frac{9}{20}$. 
If $\replaceENV {\mathit{Sys}}   \delta \gamma    \trans{t} Sys'$, for some $t=\alpha_1 \ldots \alpha_n$,  then
$\alpha_i \in \{ \tau , \tick \}$, for any $i \in 1 .. n$. 
\end{lemma}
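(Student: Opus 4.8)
The plan is to mirror the proof of \autoref{prop:sys}, replacing the use of \autoref{lem:sys} by its uncertainty-parametric refinement \autoref{lem:sys3}. By inspection of \autoref{tab:lts_systems}, the only labels a \CPS{} can emit are $\tau$, $\out c v$, $\inp c v$, $\tick$, $\dead$ and $\unsafe$, so it suffices to show that along any trace of $\replaceENV{\mathit{Sys}}{\delta}{\gamma}$, with $0.4 < \gamma \le \frac{9}{20}$, none of the labels $\dead$, $\unsafe$, $\out c v$, $\inp c v$ can occur; then each $\alpha_i$ is forced to be $\tau$ or $\tick$.

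First I would discharge $\dead$ and $\unsafe$ directly from \autoref{lem:sys3}. Its three items confine $\mathit{temp}$, in every reachable state, to $[0,11.1+\gamma]$ when the coolant is off and to $(\,9.9 - k(1{+}\gamma)\,,\,11.1+\gamma-k(1{-}\gamma)\,]$ for some $k \in 1..5$ when it is on; since $\gamma \le \frac{9}{20}$ both ranges lie inside $[0,50]$ (in particular $9.9 - 5(1{+}\gamma) \ge 2.65 > 0$). Hence $\operatorname{inv}$ is never violated, rule \rulename{Deadlock} never applies, and $\dead$ never occurs. Likewise \autoref{lem:sys3} bounds $\mathit{stress}$ by $1$ when the coolant is off, by $k{+}1 \le 4$ when it is on with $k \le 3$, and by $0$ when it is on with $k \in 4..5$, so $\mathit{stress} < 5$ always; thus $\operatorname{safe}$ is never violated, rule \rulename{Safety} never applies, and $\unsafe$ never occurs.

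Next I would discharge the channel labels. The only channels appearing in $\mathit{Ctrl}\parallel\mathit{IDS}$ are $\mathit{sync}$, $\mathit{ins}$ and $\mathit{alarm}$; the first two are restricted in $\mathit{Sys}$, so every synchronisation on them is internalised to a $\tau$ by \rulename{Com}/\rulename{Res}/\rulename{Tau} and never appears as a visible label. An output on $\mathit{alarm}$ would require $\mathit{IDS}$ to read a value strictly above $10$ from $s_{\mathrm{t}}$ right after a cooling phase; at that moment the coolant has been active for $5$ slots, so by \autoref{lem:sys3} (third item, $k=5$) $\mathit{temp} \in (\,9.9 - 5(1{+}\gamma)\,,\,11.1+\gamma-5(1{-}\gamma)\,] = (\,4.9-5\gamma\,,\,6.1+6\gamma\,]$, whose right endpoint is $\le 8.8$ for $\gamma \le \frac{9}{20}$; adding the sensor error $\epsilon = 0.1$, the sensed value is at most $8.9 < 10$, so $\mathit{IDS}$ takes its $\mathsf{else}$-branch and sends $\mathsf{stop}$ along the restricted channel $\mathit{ins}$ instead of firing an alarm. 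Since $\mathit{Sys}$ has no other free channel and performs no inputs, no $\out c v$ or $\inp c v$ label occurs, and therefore every $\alpha_i$ is $\tau$ or $\tick$.

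The argument is short because \autoref{lem:sys3} already carries the whole physical analysis; the only delicate point — and the one where the hypothesis $\gamma \le \frac{9}{20}$ is used tightly — is to identify the instant of the $\mathit{IDS}$ check with the case $k=5$ of \autoref{lem:sys3} (exactly as in the proof of \autoref{prop:X}) and to remember to add the sensor error $\epsilon$ when bounding the value $\mathit{IDS}$ actually reads; for $\gamma$ slightly larger than $\frac{9}{20}$ the sensed post-cooling temperature could reach $10$, an alarm could fire, and the conclusion would fail.
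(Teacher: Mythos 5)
Your proof is correct and takes essentially the same route as the paper's: it relies on \autoref{lem:sys3} to keep $\mathit{temp}$ within the invariant (no $\dead$), to bound $\mathit{stress}$ by $4$ (no $\unsafe$), and to bound the post-cooling sensed temperature by $6.2+6\gamma \leq 8.9 < 10$ (no alarm), with $\mathit{sync}$ and $\mathit{ins}$ internalised by restriction. One closing remark is inaccurate but harmless: the alarm bound would only fail for $\gamma > \frac{19}{30}$, whereas the tight use of $\gamma \leq \frac{9}{20}$ sits inside \autoref{lem:sys3} itself (the reset of $\mathit{stress}$ after three cooling slots), and it is the resulting $\unsafe$ state---not an alarm---that breaks the conclusion for $\gamma > \frac{9}{20}$ (cf.\ \autoref{prop:toll}).
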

\begin{proof}
By \autoref{lem:sys3}, the temperature will always lay in the real
interval $ [0, 11.1+ \gamma ]$. As a consequence, since $ \gamma \leq
\frac{9}{20}$, the system will never deadlock.

Moreover, after $5$ $\tick$ action of coolant the state variable
$\mathit{temp}$ is in $( 9.9-5 *(1+\gamma), 11.1+\gamma -5*(1- \gamma)]
=(4.9 -5\gamma \, , \, 6.1+6\gamma]$. Since $\epsilon = 0.1$, the value
detected from the sensor will be in the real interval $(4.8 -5\gamma \, ,
\, 6.2+6\gamma]$. Thus, the temperature sensed by $\mathit{IDS}$ will be
at most $6.2 + 6\gamma \leq 6.2+6*\frac{9}{20}\leq 10 $, and no alarm will
be fired.

 Finally, the maximum value that can be reached  by the state variable 
$ \mathit{stress} $ is $   k{+}1$m for $k=3$. As a consequence, 
 the system will not reach an unsafe state. 
\end{proof}

The following Lemma is a variant of \autoref{prop:X}.   Here the behaviour of $\mathit{Sys} $ is  parametric on the 
uncertainty. 

\begin{lemma}
\label{prop:X3}
Let $\mathit{Sys}$ be the system defined in \autoref{exa:sys} and
$\gamma$ such that  $0.4 < \gamma \leq \frac{9}{20}$. 
Then, for 
any execution trace of $\replaceENV  {Sys}  \delta \gamma$ we have the following:
\begin{itemize}[noitemsep]
\item if either process $\mathit{Ctrl}$ or process $\mathit{IDS}$ senses a temperature above $10$ then the value of
the state variable $\mathit{temp}$ ranges over $(9.9, 11.1+\gamma]$;
\item 
when the process  $\mathit{IDS}$  tests the temperature the value of
the state variable $\mathit{temp}$ 
ranges over $( 9.9-5 *(1+\gamma), 11.1+\gamma -5*(1- \gamma)] $. 
\end{itemize}
\end{lemma}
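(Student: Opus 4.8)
The plan is to follow closely the proof of \autoref{prop:X}, simply replacing each appeal to \autoref{lem:sys} by the analogous appeal to \autoref{lem:sys3}, which applies precisely because $0.4 < \gamma \leq \frac{9}{20}$. I would treat the two items separately.

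For the first item, I would use that the sensor $s_{\mathrm{t}}$ has error $\epsilon = 0.1$: whenever either $\mathit{Ctrl}$ or $\mathit{IDS}$ reads a value $x$ from $s_{\mathrm{t}}$, the current temperature satisfies $\statefun{}(\mathit{temp}) \geq x - \epsilon$, so if the read value is above $10$ then $\statefun{}(\mathit{temp}) > 9.9$. It then remains to establish the upper bound $\statefun{}(\mathit{temp}) \leq 11.1 + \gamma$, irrespective of whether the coolant is active. If the coolant is off this is exactly the first item of \autoref{lem:sys3}; if the coolant is on, the third item of \autoref{lem:sys3} gives $\statefun{}(\mathit{temp}) \leq 11.1 + \gamma - k(1-\gamma)$ for some $k \in 1..5$, and since $k \geq 1$ and $1-\gamma > 0$ (as $\gamma \leq \frac{9}{20} < 1$) we still get $\statefun{}(\mathit{temp}) \leq 11.1 + \gamma$. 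Hence $\statefun{}(\mathit{temp})$ ranges over $(9.9, 11.1+\gamma]$.

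For the second item, I would argue via the process structure of $\mathit{Ctrl}$ and $\mathit{IDS}$. The process $\mathit{Check}$ — and hence the synchronisation with $\mathit{IDS}$ over the private channel $\mathit{sync}$, immediately followed by the reading of $s_{\mathrm{t}}$ inside $\mathit{IDS}$ — is reached only after the sequence $\wact{\on}{\mathit{cool}}.\tick^5$, i.e. after exactly five $\tick$-actions during which the coolant stays active. Moreover, by \autoref{prop:sys:damage} no output is ever emitted on channel $\mathit{alarm}$, so $\mathit{IDS}$ always replies $\mathsf{stop}$ and never $\mathsf{keep\_cooling}$; consequently the coolant is active for exactly five consecutive time slots in each cooling episode, and no $\tick$ separates the $\mathit{IDS}$ test from the subsequent deactivation of the coolant. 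Therefore, when $\mathit{IDS}$ tests the temperature one may apply the third item of \autoref{lem:sys3} with $k = 5$, obtaining $\statefun{}(\mathit{temp}) \in (9.9 - 5(1+\gamma), 11.1 + \gamma - 5(1-\gamma)]$, as claimed.

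The main obstacle, as in \autoref{prop:X}, is the second item: one must track the process term of $\mathit{Ctrl}$ along executions to certify that every $\mathit{IDS}$ test occurs after exactly five ticks of cooling — neither fewer, because of the $\tick^5$ prefix, nor more, because $\mathsf{keep\_cooling}$ is never sent — so that the parameter $k$ in \autoref{lem:sys3} is pinned to $5$. Once this structural fact is in place, both items follow immediately from \autoref{lem:sys3} and the sensor-error bound $\epsilon = 0.1$, exactly as in the proof of \autoref{prop:X}.
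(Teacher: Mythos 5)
Your proof is correct and follows essentially the same route as the paper's: both items are read off from \autoref{lem:sys3} (the $\gamma$-parametric invariant), using the sensor error $\epsilon=0.1$ for the lower bound $9.9$ in the first item and the fact that the $\mathit{IDS}$ test occurs after exactly five $\tick$-actions of cooling to pin $k=5$ in the third item of \autoref{lem:sys3} for the second. The extra detail you supply (handling the coolant-on case explicitly for the upper bound, and justifying $k=5$ via the $\tick^5$ prefix and the absence of $\mathsf{keep\_cooling}$ using \autoref{prop:sys:damage}) is a harmless, non-circular elaboration of steps the paper simply asserts.
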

\begin{proof}
As to the first statement,  since $\epsilon=0.1$, if either process $\mathit{Ctrl}$ or process  $\mathit{IDS}$ senses  a temperature above $10$ then the value of
the state variable $\mathit{temp}$  is above $9.9$. 
By \autoref{lem:sys3}, the state variable $\mathit{temp}$ is less than or equal to $11.1+\gamma$.
Therefore, \emph{if either process $\mathit{Ctrl}$ or process  $\mathit{IDS}$ sense }  a temperature above $10$ then the value of
the state variable $\mathit{temp}$ is in $(9.9,11.1+\gamma]$.

Let us prove now the second statement. When the process $\mathit{IDS}$
tests the temperature then the coolant has been active for $5$ $\tick$
actions. By \autoref{lem:sys3}, the state variable $\mathit{temp}$ ranges
over $( 9.9-5 *(1+\gamma), 11.1+\gamma -5*(1- \gamma)] $. 
\end{proof}

Everything is finally in place to prove \autoref{prop:toll}.

\begin{proof}[Proof of \autoref{prop:toll}] 
For (1) we have to show that  $ \replaceENV  {\mathit{Sys}} \delta \gamma \, \sqsubseteq \,  \mathit{Sys}$, for $\gamma \in (\frac{8}{20} ,\frac{9}{20})$.  
 But this obviously holds by \autoref{prop:sys:damage}.  

As regards item (2), we have to prove that $ \replaceENV  {\mathit{Sys}} \delta \gamma \, \not \sqsubseteq \,  \mathit{Sys}$, for $\gamma
> \frac{ 9}{20} $. By \autoref{prop:sys} it is enough to show that
the system $\replaceENV {\mathit{Sys}}   \delta \gamma$ has a trace which either
(i) sends an alarm, or (ii) deadlocks, or (iii) enters in an unsafe state. We can easily build up a trace for
$\replaceENV {\mathit{Sys}}  {\delta} \gamma$ in which, after $10$
$\tick$-actions, in the $11$-th time slot, the value of the state
variable $\mathit{temp}$ is $10.1$. In fact, it is enough to increase the
temperature of $ 1.01$ degrees for the first $10$ rounds. Notice that this
is an admissible value since, $ 1.01 \in [ 1-\gamma,1+\gamma ]$, for any $
\gamma > \frac{ 9}{20}$. Being $10.1$ the value of the state variable
$\mathit{temp}$, there is an execution trace in which the sensed
temperature is $10$ (recall that $\epsilon=0.1$) and hence the cooling
system is not activated but the state variable $\mathit{stress}$ will be increased. 
In the following time slot, i.e.,
the $12$-th time slot, the temperature may reach at most the value
$10.1 + 1+\gamma$ and the state variable  $\mathit{stress}$ is $1$.  Now, if $10.1 + 1+\gamma>50$ then the system deadlocks.
Otherwise, the controller will activate the cooling system, and after $3$ time
units of cooling, in the $15$-th time slot, the state variable  $\mathit{stress}$ will be $4$ and the variable 
$\mathit{temp}$ will be at most $11.1+\gamma -3(1-\gamma)=8.1+4\gamma$.  
Thus, there is an execution trace in which the
 temperature is $ 8.1+4\gamma$, which will be greater than $9.9$ being
$\gamma> \frac{ 9}{20}$. As a consequence, in the 
next time slot, the state variable  $\mathit{stress}$ will be $5$
and the system will enter in an unsafe state.

This is enough to derive that $ \replaceENV  {\mathit{Sys}} \delta \gamma \, \not \sqsubseteq \,  \mathit{Sys}$, for $\gamma
> \frac{ 9}{20} $. 
\end{proof}

\begin{proof}[Proof of \autoref{thm:sound2}]
Consider the case of the definitive impact. By \autoref{lem:top}, if $M \parallel A\trans t$
then $ M \parallel \mathit{Top}(C)\trans t$. This entails $ M \parallel A \sqsubseteq M \parallel
\mathit{Top}(C) $. Thus, if $ M \parallel \mathit{Top}(C) \sqsubseteq
{\replaceENV M {\uncertaintyfun{}} {{\uncertaintyfun{}}{+}{\xi}}}$, for
$\xi \in \mathbb{R}^{\hat{\cal X}}$, $\xi >0$, then $ M \parallel A \sqsubseteq {\replaceENV M {\uncertaintyfun{}} {{\uncertaintyfun{}}{+}{\xi}}}$, by transitivity of
$\sqsubseteq$.

The proof in the case of the pointwise impact is by contradiction. Suppose
$\xi' > \xi $. Since $ \mathit{Top}(C) $ has a pointwise impact $\xi$ at
time $m$, it follows that $\xi$ is given by:

\begin{small}
\begin{center}
\begin{math}
  \inf \big\{ \xi''  :  \xi'' {\in} \mathbb{R}^{\hat{\cal X}} 
\: \wedge \: M \parallel   \mathit{Top}(C)    \sqsubseteq_{m ..n} 
\replaceENV M  {\uncertaintyfun{}}  {{\uncertaintyfun{}} {+} {\xi''}},  n {\in} \mathbb{N} {\cup} \infty   \big\}.
\end{math}
\end{center}
\end{small}%
Similarly, since $A$ has a pointwise impact $\xi'$ at time $m'$, it
follows that $\xi'$ is given by

\begin{small}
\begin{center}
\begin{math}
 \inf \big\{ \xi''  :  \xi'' {\in} \mathbb{R}^{\hat{\cal X}} 
\,  \wedge \, M \parallel  A   \sqsubseteq_{m'..n} 
\replaceENV M  {\uncertaintyfun{}}  {{\uncertaintyfun{}} {+} {\xi''}},  n {\in} \mathbb{N} {\cup} \infty   \big\}.
\end{math}
\end{center}
\end{small}

Now, if it were $m=m'$ then $\xi \geq \xi'$ 
because $ M \parallel   A  \trans{t}$ entails 
$ M \parallel \mathit{Top}(C)\trans t$., 
 by an application of \autoref{lem:top}. 
This is contradiction with the fact that $\xi < \xi'$,
Thus, it must be $m' < m$.
Now, since both $\xi $ and  $\xi'$ are the infimum functions and since $\xi' > \xi $,  there exist   $\overline{\xi}$ 
and $\overline{ \xi'}$  such that   $\xi \leq \overline{\xi}\leq \xi' \leq \overline{ \xi'}$ and 
 $ M \parallel   \mathit{Top}(C)    \sqsubseteq_{m..n} 
\replaceENV M  {\uncertaintyfun{}}  {{\uncertaintyfun{}} {+} {\overline{\xi}}}$, for some $n$,
and 
 $ M \parallel  A   \sqsubseteq_{m'..n'} 
\replaceENV M  {\uncertaintyfun{}}  {{\uncertaintyfun{}} {+} {\overline{\xi'}}}$, for some $n'$.

Hence, from  $ M \parallel  A   \sqsubseteq_{m'..n'} 
\replaceENV M  {\uncertaintyfun{}}  {{\uncertaintyfun{}} {+} {\overline{\xi'}}}$,
we have that   there exists a trace $t$ with $\#\tick(t)=m'-1$ such that 
$ M \parallel   A  \trans{t}$ and $\replaceENV M  {\uncertaintyfun{}}  {{\uncertaintyfun{}} {+} {\overline{\xi'}}}  \not\!\!\Trans{\hat{t}}$. 
Since $\overline{\xi} \leq \overline{\xi'} $,   by monotonicity (\autoref{prop:monotonicity}), we deduce that
  $\replaceENV M  {\uncertaintyfun{}}  {{\uncertaintyfun{}} {+} {\overline \xi}}  \not\!\!\Trans{\hat{t}}$. 
  Moreover,   by \autoref{lem:top}, $ M \parallel   A  \trans{t}$ entails 
$ M \parallel \mathit{Top}(C)\trans t$. 

Summarising, there exists a trace $t$ with $\#\tick(t)=m'-1$ such that $ M
\parallel \mathit{Top}(C) \trans{t}$ and $\replaceENV M
{\uncertaintyfun{}} {{\uncertaintyfun{}} {+} {\overline \xi }}
\not\!\!\Trans{\hat{t}}$. However, this fact and $m' < m$ is in
contradiction with $M \parallel \mathit{Top}(C) \sqsubseteq_{m ..n}
\replaceENV M {\uncertaintyfun{}} {{\uncertaintyfun{}} {+} {\overline \xi
}}$, for some $n$.

This is enough to derive the statement. 
\end{proof}

\begin{proof}[Proof of \autoref{prop:effect2}]
Let us prove the first sub-result. 
As demonstrated  in \autoref{exa:att:DoS2},  we know that 
$\mathit{Sys} \parallel A  \sqsubseteq_{14..\infty}  \mathit{Sys}$
because  in the $14$-th time slot 
the compound system  will violate the safety conditions emitting an $\unsafe$-action  until the invariant will  be violated.
No alarm will be emitted.

Since the system keeps violating the safety condition
the temperature must remain greater than $9.9$.
As proved for \autoref{lem:sys3} 
 we can prove that we have that the temperature is less than or equal to $10.1+\gamma $.  
Hence, in the time slot before getting in deadlock, the temperature 
of the system is in the real interval $(9.9,10.1+\gamma]$.
To deadlock  with one $\tick$ action and from  a temperature in the real interval $(9.9,10.1+\gamma]$, either the temperature  reaches  a value greater than $50$ (namely, $10.1+\gamma+1+\gamma > 50$) or 
 the temperature reaches a value less than $ 0$ (namely, $9.9-1-\gamma <  0$ ). 
Since $\gamma \leq 8.9$, both cases can not occur. Thus,  we have that 
\[
\mathit{Sys}  \parallel  A   \not \sqsubseteq \,  \replaceENV{\mathit{Sys}}  {\delta} {\gamma} 
\enspace .   
\]
Let us prove the second sub-result. 
That is,  \[ Sys  \parallel  A    \sqsubseteq \,   \replaceENV{\mathit{Sys}}  {\delta} {\gamma} \]
 for $\gamma >8.9$. 
 We demonstrate that 
whenever $\mathit{Sys} \parallel  A \trans{t}$, for some trace $t$, then 
$\replaceENV {\mathit{Sys}  } \delta \gamma\Trans{\hat t}$ as well. 
We will proceed by case analysis on the kind of actions contained in $t$. 
We distinguish three possible cases.

\begin{itemize}[noitemsep]

\item The trace $t$ contains contains only $\tau$-, $\tick$-, $\unsafe$-
and $\dead$-actions. As discussed in \autoref{exa:att:DoS2},
$\mathit{Sys} \parallel A \; \sqsubseteq_{14..\infty} \; \mathit{Sys}$
because in the $14$-th time slot the system will violate the safety
conditions emitting an $\unsafe$-action until the invariant will be
broken. No alarm will be emitted. Note that, when $\mathit{Sys} \parallel
A$ enters in an unsafe state then the temperature is at most
$9.9+(1+\delta)+5(1+\delta)=9.9+6(1.4)=18.3$. Moreover, the fastest
execution trace, reaching an unsafe state, deadlocks just after $\lceil
\frac{ 50-18.3}{1 + \delta } \rceil = \lceil \frac{ 31,7}{1.4 } \rceil=23$
$\tick$-actions. Hence, there are $m,n \in \mathbb{N}$, with $m\geq 14$ and
$n\geq m+23$, such that the trace $t$ of $\mathit{Sys} \parallel A $
satisfies the following conditions: (i) in the time interval $1..m-1$ the
trace $t$ of is composed by $\tau$- and $\tick$-actions; (ii) in the time
interval $m..(n-1)$, the trace $t$ is composed by $\tau$-, $\tick$- and
$\unsafe$- actions; in the $n$-th time slot the trace $t$ deadlocks.

By monotonicity (\autoref{prop:monotonicity}), it is enough 
to show that such a trace exists 
for $\replaceENV {\mathit{Sys}  } \delta \gamma$, with 
 $8.9 < \gamma < 9$. In fact, if this trace exists for  $ 8.9 < \gamma < 9$, then it would also exist for
 $  \gamma \geq 9$. 
In the following, we show how to build the trace of $\replaceENV {\mathit{Sys}  } \delta \gamma$ which simulates  the trace $t$ of $ \mathit{Sys}  \parallel A$. 
We build up the trace in three steps: (i)  
the sub-trace from time slot $1$ to  time slot $m{-}6$;
(ii) the sub-trace from the time slot $m{-}5$ to the time slot $n{-}1$;
(iii) the final part of the trace reaching the deadlock.
\begin{itemize}
\item[(i)] 
As $\gamma>8.9$ (and hence $1+\gamma>9.9$), the system may increment the
temperature of $9.9$ degrees after a single $\tick$-action. Hence, we
choose the trace in which the system $\replaceENV {\mathit{Sys} } \delta
\gamma$, in the second time slot, reaches the temperature equal to $9.9$.
Moreover, the system may maintain this temperature value until the
$(m{-}6)$-th time slot (indeed $0$ is an admissible increasing since $0
\in [1-\gamma,1+\gamma]\supseteq [-7.9,10.9]$) . Obviously, with a
temperature equal to $9.9$, only $\tau$- and $\tick$-actions are possible.

\item[(ii)]
Let $k \in \mathbb{R}$ such that $0< k < \gamma-8.9  $ (such $k$ exists since $\gamma>8.9 $).  
We may consider an increment of the temperature of $k$. 
This implies that in the $(m{-}5)$-th time slot, the system 
 $\replaceENV {\mathit{Sys}  } \delta \gamma$ may reach the temperature $9.9+k$. 
 Note that $ k$ is an admissible increment since $0< k < \gamma-8.9  $ and $8.9 < \gamma < 9$ entails $k \in (0,0.1)$. 
 Moreover, the system may maintain this temperature value  until the $(n{-}1)$-th time slot
(indeed, as said before, $0$ is an admissible increment).
Summarising from the $(m{-}5)$-th time slot to the $(n{-}1)$-th time slot,
the temperature may remain equal to $9.9+k \in (9.9,10)$. As a
consequence, from the $m$-th time slot to the $(n{-}1)$-th time slot the
system $\replaceENV {\mathit{Sys} } \delta \gamma$ may enter in an unsafe
state (i.e., $\safefun{}(E)=\false$). Thus, an $\unsafe$-action may be
performed in the time interval $m..(n{-}1)$. Moreover, since
$\epsilon=0.1$ and the temperature is e $9.9+k \in (9.9,10)$, we can
always assume that the cooling is not activated until the $(n{-}1)$-th
time slot. This implies that neither alarm nor deadlock occur.

\item[(iii)]
At this point, since in the $(n{-}1)$-th time slot the temperature is
equal to $9.9 + k \in (9.9,10)$ (recall that $k \in (0,1)$), the cooling
may be activated. We may consider a decrement of $1+\gamma$. In this
manner, in the $n$-th time slot the system may reach a temperature of
$9.9+k-(1+\gamma)< 9.9+0 -1 -8.9 =0$ degrees, and the system $\replaceENV
{\mathit{Sys} } \delta \gamma$ will deadlock.

\end{itemize}
 
Summarising, for any $\gamma > 8.9 $ the system $\replaceENV {\mathit{Sys}
} \delta \gamma$ can mimic any trace $t$ of $ \mathit{Sys} \parallel A$.

\item The trace $t$ contains contains only $\tau$-, $\tick$- and
$\unsafe$-actions. This case is similar to the previous one.

\item The trace $t$ contains only $\tau$-, $\tick$- and
$\overline{alarm}$-actions. This case cannot occur. In fact, as discussed
in \autoref{exa:att:DoS2}, the process $\mathit{Ctrl}$ never activates the
$\mathit{Cooling}$ component (and hence also the $\mathit{IDS}$ component,
which is the only one that could send an alarm) since it will always
detect a temperature below $10$.
 
\item The trace $t$ contains only $\tau$- and $\tick$-actions. If the
system $\mathit{Sys} \parallel A $ has a trace $t$ that contains only
$\tau$- and $\tick$-actions, then, by \autoref{prop:sys}, the system
$\mathit{Sys}$ in isolation must have a similar trace with the same number
of $\tick$-actions. By an application of \autoref{prop:monotonicity}, as
$\delta<\gamma$, any trace of $\mathit{Sys} $ can be simulated by
$\replaceENV {\mathit{Sys} } \delta \gamma$. As a consequence,
$\replaceENV {\mathit{Sys} } \delta \gamma\Trans{\hat t}$.
\end{itemize}

This is enough to derive that:
\[
\mathit{Sys}   \parallel  A    \sqsubseteq \,   \replaceENV {\mathit{Sys}  } \delta \gamma
\enspace , 
\]
which concludes the proof.
\end{proof}

\end{document}